 \email{alain.couvreur@inria.fr}
 \author{Thomas Debris--Alazard$^{1}$}
\email{thomas.debris@inria.fr}
\author{Philippe Gaborit$^{2}$}
\email{gaborit@unilim.fr}
\author{Adrien Vincotte$^{2}$}
\email{adrien.vincotte@etu.unilim.fr}
\address{$^{1}$ Inria and  Laboratoire LIX, \'Ecole Polytechnique, Palaiseau, France}
\address{$^{2}$ XLIM, University of Limoges, Limoges, France}
\newcommand{\F}{\mathbb{F}_2}
\newcommand{\Fq}{\mathbb{F}_q}
\newcommand{\Fqm}{\mathbb{F}_{q^m}}
\renewcommand{\vec}[1]{\mathbf{#1}}
\newcommand{\av}{\vec{a}}
\newcommand{\gv}{\vec{g}}
\newcommand{\sv}{\vec{s}}
\newcommand{\yv}{\vec{y}}
\newcommand{\vv}{\vec{v}}
\newcommand{\Av}{\vec{A}}
\newcommand{\Bv}{\vec{B}}
\newcommand{\Cv}{\vec{C}}
\newcommand{\Dv}{\vec{D}}
\newcommand{\Ev}{\vec{E}}
\newcommand{\Mv}{\vec{M}}
\newcommand{\Nv}{\vec{N}}
\newcommand{\Pv}{\vec{P}}
\newcommand{\Qv}{\vec{Q}}
\newcommand{\Vv}{\vec{V}}
\newcommand{\Wv}{\vec{W}}
\newcommand{\Xv}{\vec{X}}
\newcommand{\minrank}{\mathsf{MinRank}}
	\spnewtheorem{fact}{Fact}{\bfseries}{\itshape}}{
	\newtheorem{remark}{Remark}
	\newtheorem{lemma}{Lemma}
	\newtheorem{definition}{Definition}
	\newtheorem{theorem}{Theorem}
	\newtheorem{proposition}{Proposition}
	\newtheorem{fact}{Fact}
\newcommand{\eqdef}{\stackrel{\textup{def}}{=}}
\DeclareMathOperator{\rk}{rk}
\DeclareMathOperator{\tr}{Tr}
\DeclareMathOperator{\prob}{\mathbb{P}}
\DeclareMathOperator{\Prob}{\mathbb{P}}
\newcommand{\Gabcode}[2]{\ensuremath{\mathsf{Gab}_{#1}(#2)}}
\newcommand{\miranda}{\mathsf{Miranda}}
\newcommand{\E}{\ensuremath{\mathcal{E}}\xspace}
\newcommand{\A}{\ensuremath{\mathcal{A}}\xspace}
\newcommand{\negl}{\normalfont\textsf{negl}}
\newcommand{\trapd}{\ensuremath{\textsf{Trapdoor}}\xspace}
\newcommand{\Amat}{\mathcal A_{mat}}
\newcommand{\Cmat}{\mathcal C_{mat}}
\newcommand{\Dmat}{\mathcal D_{mat}}
\newcommand{\Rs}{\mathcal R_{s}}
\newcommand{\C}{\mathcal C}
\newcommand{\map}[4]{\left\{
    \begin{array}{rcl}
      #1 & \longrightarrow & #2 \\[1pt] #3 & \longmapsto & #4
    \end{array}
  \right.
}
\renewcommand{\geq}{\geqslant}
\renewcommand{\leq}{\leqslant}
\newcommand{\addremove}{\ensuremath{\mathsf{AddRemove}}\xspace}
\DeclareMathOperator*{\trap}{\text{\tt{Trapdoor}}}
\DeclareMathOperator*{\sampPre}{\text{\tt{InvertAlg}}}
\newcommand{\sk}{\mathrm{sk}}
\newcommand{\pk}{\mathrm{pk}}
\DeclareMathOperator*{\hash}{\text{\tt{Hash}}}
\newcommand{\Sgnsk}{\ensuremath{\mathtt{Sgn}}}
\newcommand{\Vrfypk}{\ensuremath{\mathtt{Vrfy}}}
\newcommand{\fow}{\ensuremath{f_{\mathsf{OW}}}}
\newcommand{\Unif}{\hookleftarrow}
  \title{MIRANDA: short signatures from a leakage-free full-domain-hash scheme}
\begin{document}		
	
	\iftoggle{llncs}{\title{MIRANDA: short signatures from a leakage-free full-domain-hash scheme}
}{
		}
	
\maketitle

    \begin{abstract}
      We present $\miranda$, the first family of full-domain-hash
      signatures based on matrix codes. This signature scheme fulfils
      the para\-digm of {\it Gentry, Peikert and Vaikuntanathan}
      ($\mathsf{GPV}$), which gives strong security guarantees. Our
      trapdoor is very simple and generic: if we propose it with
      matrix codes, it can actually be instantiated in many other ways
      since it only involves a subcode of a decodable code (or
      lattice) in a unique decoding regime of parameters.  Though
      $\miranda$ signing algorithm relies on a decoding task where
      there is exactly one solution, there are many possible
      signatures given a message to sign and we ensure that
      signatures are not leaking information on their underlying
      trapdoor by means of a very simple procedure involving the
      drawing of a small number of uniform bits. In particular
      $\miranda$ does not use a rejection sampling procedure which
      makes its implementation a very simple task contrary to other
      $\mathsf{GPV}$-like signatures schemes such as
      $\mathsf{Falcon}$ or even $\mathsf{Wave}$.
			
      We instantiate $\miranda$ with the famous family of Gabidulin
      codes represented as spaces of matrices and we study thoroughly
      its security (in the EUF-CMA security model).  For~$128$ bits of
      classical security, the signature sizes are as low as~$90$
      bytes and the public key sizes are in the order of~$2.6$
      megabytes.
	\end{abstract}

 	\section{Introduction}

{\bf \noindent Signatures and quantum resistant cryptography.} Until recently, few resistant signatures schemes against quantum attacks were known. NIST has made no mistake about this: after launching an initial call in 2017 for the standardisation of quantum resistant encryptions and signatures, which has now ended, they launched a second call in 2023 to standardise new signature schemes with the aim of increasing the diversity of security assumptions. Among the potential quantum resistant assumptions, multivariate based Full Domain Hash (FDH) signature schemes through the $\mathsf{UOV}$ approach~\cite{uov} are particularly suitable for obtaining very small signature sizes but lead to rather large public keys. Based on this approach, $\mathsf{Mayo}$~\cite{mayo} proposes a very efficient trade off between signature and public key sizes. However, its security relies on new assumptions which are still under the scrutiny of the community.

In this paper, we are interested by instantiating a signature scheme comparable to $\mathsf{Mayo}$ in terms of efficiency and trade off between signature and public key sizes but whose security relies on the hardness of decoding a random linear matrix code. 
\newline

{\bf \noindent Full-Domain Hash (FDH) signatures and code-based cryptography.} It has been a long-standing open problem to build a secure and efficient FDH signature scheme whose security inherits from the hardness of decoding a random linear code. The first answer to this question was provided by $\mathsf{CFS}$~\cite{CFS01}. It consisted in finding parity-check matrices $\vec{H} \in \mathbb{F}_{2}^{(n-k) \times n}$ such that the solution~$\vec{e}$ of smallest weight of the equation
\begin{equation}\label{eq:CFS}
	\vec{e}\vec{H}^{\top} = \vec{s} 
\end{equation}
could be found for a non negligible proportion of all $\vec{s} \in \mathbb{F}_{2}^{n-k}$.
This task was achieved by using high rate Goppa codes. These codes were proposed for two main reasons. First, Goppa codes admit an efficient decoding algorithm in their unique decoding regime of parameters. Second, choosing high rate Goppa codes, {\it i.e.,}~$k \approx n$, ensures that a significant proportion of $\vec{s}$ equals some~$\vec{e}\vec{H}^{\top}$ for some short vector $\vec{e}$. Unfortunately this second requirement is at the origin of the main drawback of $\mathsf{CFS}$: the length of the code scales exponentially in the security parameter.
A crude extrapolation of parallel $\mathsf{CFS}$~\cite{F10} and its implementations~\cite{LS12,BCS13} yields for 128 bits of classical security a public key size in the order of gigabytes and a signature time of several seconds. Those figures even grow to terabytes and hours for quantum safe security levels, making the scheme unpractical. However, it is important to note that despite these significant drawbacks, the use of a decoding algorithm in its unique decoding regime of parameters has enabled $\mathsf{CFS}$ to offer particularly short signatures (in the order of $50$ bytes for $128$ bits of security). 

In order to circumvent the shortcomings of $\mathsf{CFS}$, another opposite approach was proposed
via $\mathsf{Wave}$~\cite{DST19a} signatures. In this scheme, we
seek to solve \eqref{eq:CFS} with $\vec{e} \in \mathbb{F}_3^n$ having a large
enough Hamming weight, but not by relying on an underlying decoding
algorithm in its unique decoding regime of parameters. It was instead
proposed to relax a bit what we ask for the code and its associated
``decoding'' algorithm. Namely, we ask for a code such that we can
solve~\eqref{eq:CFS} in a regime of parameters where it has many
solutions.
This kind of codes is not used for  error correction but can be found in lossy source coding or source distortion theory where the problem is to find codes with an associated decoding algorithm which can approximate {\em any} word of the ambient space by a close enough codeword. It turns out that when this approach was proposed in code-based cryptography \cite{DST17b}, the latter was already known in the lattice-based context~\cite{GPV08} via the $\mathsf{GPV}$ framework. In particular, it was known that choosing parameters where~\eqref{eq:CFS} admits plenty of solutions could potentially be disastrous for the security of the underlying signature scheme. Indeed, when solving~\eqref{eq:CFS}, we have to be \emph{extremely} careful. Solutions of this equation are intended to be made public (these are signatures), but to compute them we precisely use a trapdoor, namely some underlying secret structure (like a short basis in the lattice case). It is therefore crucial to ensure that the distribution of solutions is independent of the used trapdoor. To ensure the security, the $\mathsf{GPV}$ framework has then consisted of the requirement to design an algorithm computing solutions to \eqref{eq:CFS} and whose output distribution is independent from the used trapdoor. This is the so-called design of \emph{trapdoor preimage sampleable functions}. This elegant framework allows the design of secure FDH signature schemes, but this comes at a certain cost. It is generally very difficult to design an algorithm that solves~\eqref{eq:CFS} with the prerequisite mentioned above. Usually these algorithms rely on the use of a so-called \emph{rejection sampling phase} which implies many implementation difficulties and opens a possible wide range of side channel attacks like in the case of $\mathsf{Falcon}$~\cite{FHKLPPRSWZ} and $\mathsf{Wave}$~\cite{DST19a}. Furthermore, as solutions of~\eqref{eq:CFS} are not as short as in $\mathsf{CFS}$, it leads to larger signature sizes (in the order of $700$ bytes in $\mathsf{Falcon}$ and $\mathsf{Wave}$). 
\newline

{\bf \noindent Our first contribution: $\miranda$ a new framework to design preimage sampleable functions.} Motivated by this current state of the affairs, we have wished to take advantage of the best of both worlds, namely $\mathsf{CFS}$ and $\mathsf{GPV}$, without their inherent drawbacks. This is why we aimed to solve  \eqref{eq:CFS} in a regime of parameters where there are plenty of solutions. Our idea is rather simple, we aim to decode a code $\mathcal{D}$ (which is publicly known) by using some underlying structure. To this aim we first select a random \emph{subcode} $\mathcal{C}_{s}$ of some code~$\mathcal{C}$ that we know how to efficiently decode in its unique decoding regime of parameters. Then we add to $\mathcal{C}_{s}$ a random code $\mathcal{A}$ to form $\mathcal{D}$, {\it i.e.,}
$$
\mathcal{D} \eqdef \mathcal{C}_{s} \oplus \mathcal{A}~.
$$
Here we did two operations to hide the structure of code $\mathcal{C}$. First, we selected a subcode $\mathcal{C}_{s}$ of it (which has the potential to destroy the structure of $\mathcal{C}$) and we added a random code to hide further the structure of $\mathcal{C}$. Adding $\mathcal{A}$ turns out to be crucial to resist to key recovery attacks. Then to decode $\mathcal{D}$ we proceed as follows: given a target $\vec{y}$ to decode, we first pick uniformly at random $\vec{a} \in \mathcal{A}$ and then we try to decode $\vec{y}-\vec{a}$ in~$\mathcal{C}_{s}\subseteq\mathcal{C}$ by using the decoding algorithm of $\mathcal{C}$ {\em in its unique decoding regime of parameters}. Our rationale is that when $\mathcal{A}$ is large enough then there will always exist $\vec{a}$ such that 
\begin{equation}\label{eq:decoding_formula}
\vec{y} - \underbrace{\vec{a}}_{\in \mathcal{A}} = \underbrace{\vec{c}_{s}}_{\in \mathcal{C}_{s}} + \vec{e}~.
\end{equation}
for some short enough $\vec{e}$. The key of our signature is that the
target weight for~$\vec{e}$ is in the unique decoding regime for
$\C_s$ but turns out to lie far above the Gilbert--Varshamov radius for
$\mathcal D$. Thus, if for a given $\av$ there is at most one solution
$\vec{e}$ for (\ref{eq:decoding_formula}), there are exponentially many
possible choices for $\av$ such that (\ref{eq:decoding_formula}) has a
solution.

Notice that here we picked $\vec{a} \in \mathcal{A}$ uniformly at random. This will ensure that the solution $\vec{e}$ we output will be uniformly random among the whole set of possible solutions. Our fundamental remark is that if we pick a correct $\vec{a}$, then it cannot exist $\vec{c}_{s}'\in \mathcal{C}_{s}$ and $\vec{e}'$ short enough such that~$\vec{y}-\vec{a} = \vec{c}_{s} + \vec{e} = \vec{c}_{s}' + \vec{e}'$. 
Otherwise it would contradict the fact that we decode~$\mathcal{C}_{s}$ in its unique decoding regime of parameters. It shows that given some fixed~$\vec{y}$, we compute some possible solution $\vec{e}$ uniformly at random. However, this is not yet sufficient. It may be possible that almost all the short vectors $\vec{e}$ are solutions for few targets $\vec{y}$. It would imply that our decoding algorithm concentrates its outputs $\vec{e}$ on a small subset of short vectors (as $\vec{y}$ being the hash of the message to be signed is considered as uniform). Fortunately, using the leftover hash lemma (by randomizing over the codes $\mathcal{D}$), we can prove that this does not hold. As a conclusion, all targets~$\vec{y}$ have essentially the same amount of solutions $\vec{e}$. Therefore, our algorithm computes a solution~$\vec{e}$ following a \emph{uniformly random} distribution over short vectors. This shows that signatures are not leaking information on their trapdoor (here the codes $\mathcal{A}$ and $\mathcal{C}_{s}$). Notice that our procedure, called $\miranda$, to achieve this is extremely simple, it simply amounts to draw $\vec{a}\in \mathcal{A}$ uniformly at random until $\yv - \av$ is a valid input of a decoding algorithm for $\mathcal{C}_s$ in its unique decoding regime of parameters. In particular, we did not use a tedious rejection sampling phase: we output a solution as soon as we pick a correct $\vec{a} \in \mathcal{A}$; we never reject a correct solution, {\it i.e.,} a short enough $\vec{e}$ such that~$\vec{y} - \vec{e} \in \mathcal{C}_{s}\oplus \mathcal{A}$. 
\newline

{\bf\noindent Our second contribution: $\miranda$ instantiation with
matrix codes.} Ba\-sed on the previous discussion, to instantiate our
$\miranda$ signature scheme, we reduce to the task of selecting a family of
efficiently decodable codes in their unique decoding regime of
parameters. First, we chose to instantiate $\miranda$ with matrix codes, {\it i.e.,} subspaces of matrices over a finite field. Then we choose as family of decodable codes the family of \emph{matrix Gabidulin codes}~\cite{G85},
that is to say, Gabidulin codes represented in their matrix form.

Gabidulin codes have been proposed in the past for encryption and were
many times subject to key recovery attacks due to their strong algebraic
structure (arising from their $\mathbb{F}_{q^{m}}$--linearity) that makes them difficult to mask. However, in our
construction (as discussed above) we only rely on choosing a subcode of a matrix Gabidulin
code. In particular, let us emphasize that this subcode is a matrix code that does \emph{not}  
inherit from the $\Fqm$--linear structure of the Gabidulin code. This
remark has been essential to ensure the security of our instantiation.
In short, our masking technique consists in hiding the $\Fqm$--linear
structure of the underlying Gabidulin code. It should be noted that this kind of masking technique has recently been introduced in~\cite{ACDGV24} in the context of encryption.
Still, the masking technique in the present article is different and in particular
is safe with respect to the recent attack proposed in~\cite{PWL25}.

Notice that the principle of masking a linear structure over an
extension is not new. This is precisely the historical core of
multivariate systems, such as Matsumoto Imai's $C^*$ and HFE systems.
On the code side, Classic McEliece scheme~\cite{ABCCGLMMMNPPPSSSTW20} involves Goppa codes which
are designed as subfield subcodes of Reed--Solomon codes over a field
extension. Here again the structure over the field extension is
hidden in the Goppa code's construction.

All in all, we instantiate $\miranda$ with matrix Gabidulin codes and we
study its security via a thoroughly algorithmic study. This allows us
to offer concrete parameters for $\lambda$ bits of security. For instance, for $128$ bits of security we obtain signature sizes as low as $90$ bytes and a public key size of order $2.6$ megabytes which makes
$\miranda$ a competitive scheme compared to $\mathsf{UOV}$ signature schemes
like $\mathsf{Mayo}$ but also compared to signature schemes derived from the
MPC-in-the-head or zero-knowledge proofs paradigms whose drawback is their large signature size. In terms of efficiency $\miranda$ enjoys fast verification time but slow signing time. We also provide estimates of the signing time, which is approximately one minute for our signature of $90$ bytes. These estimated times for signatures generation are clearly slower than other signatures, but still practical. Furthermore, $\miranda$ is easily parallelizable (since decoding trials by picking $\vec{a} \in \mathcal{A}$ can be done independently) which means than \texttt{FPGA} version of the system could probably reach signature times in the order of seconds. However such optimized implementations of $\miranda$ are clearly beyond the scope of this work and are left for future works. 
\newline

{\bf\noindent Possible applications of our $\miranda$ signature scheme.}
Despite the large size of the public keys (since they consist of a code indistinguishable from a random code of large length), the scheme has the advantage of producing signatures of very small size. Indeed, those of our Miranda signature are about 20 times smaller than those of MPC-based signatures relying on the decoding a random code problem  in rank metric (see for example RYDE \cite{BCFGJRV25,ABBCDFGJNRTV23}). This very small size makes our signature ideal for use in certain applications 
such as blockchain: all the information stored in the blockchain includes a signature, its size is therefore of critical importance.

On the other hand, as every signature based on the \textit{hash-and-sign} paradigm, our $\miranda$ scheme can be converted into a blind signature. Such process consists in signing an encrypted message, allowing the signatory to sign the message without knowing the original message. The resulting signature must be verifiable through decryption of the message, just like a standard signature. This can be done using a proof of knowledge based on similar mechanism. We could also consider designing an Identity-Based Encryption (IBE) scheme based on this signature. Indeed, to date, no such secure protocol based on the rank metric exists: if \cite{GHPT17} has already made such a proposal, its global functioning was based on that of \textsf{RankSign} \cite{GRSZ14a} which was successfully attacked \cite{DT18b}.

 	\section{Notation and matrix codes background}

{\bf \noindent Basic notation.} The notation $x \eqdef y$ means that
$x$ is being defined as equal to~$y$. Let $a < b$ be integers, we let
$[a,b]$ denote the set of integers $\{a,a+1,\dots,b\}$. Notation
$\negl(\lambda)$ refers to a function that is $O(1/\lambda^{b})$ for
every constant $b > 0$. Vectors are in row notation and they will be
written with bold letters such as $\vec{a}$. Uppercase bold letters
such as $\vec{A}$ are used to denote matrices. We let~$q$ denote a power of a
prime number and $\mathbb{F}_{q}$ the finite field of
cardinality $q$. Given integers~$m,n$ we denote by
$\mathbb{F}_{q}^{m \times n}$ the set of $m\times n$ matrices with
entries in $\mathbb{F}_{q}$. We denote
the zero matrix in~$\mathbb{F}_{q}^{m \times n}$ by
$\vec{0}_{m\times n}$ or by $\vec 0$ when the matrix size is clear
from the context. In what follows, $\mathcal{B}_{t}^{m,n,q}$  ({\it resp.} $\mathcal{S}_{t}^{m,n,q}$),
or simply $\mathcal{B}_{t}$ ({\it  resp.}~$\mathcal{S}_{t}$)
when the ambient space is clear, will
denote the ball  ({\it resp.} sphere)
of radius $t$ around
$\vec{0}_{m\times n}$ in~$\mathbb{F}_{q}^{m\times n}$ for the rank
metric $|\cdot |$ which is defined as
\[
\forall \vec{A} \in \mathbb{F}_{q}^{m \times n}, \; |\vec{A}| \eqdef
\rk(\vec{A}) \ .
\]
Given vectors $\vv_1, \dots, \vv_s$ in a given space, we let
$\langle \vv_1, \dots, \vv_s\rangle$ denote the subspace they span.

\medskip

{\bf \noindent Matrix codes.} A \emph{matrix code} $\mathcal{C}$ over
$\mathbb{F}_{q}$ with size $m \times n$ and dimension $k$ is a
subspace of dimension $k$ of the vector space
$\mathbb{F}_{q}^{m \times n}$. We say that it has parameters~$[m\times n,k]_{q}$ or that it is an~$[m \times n,k]_{q}$-code.  An
important quantity characterizing a code is its minimum distance. For
a matrix code $\mathcal{C}\subseteq \mathbb{F}_{q}^{m \times n}$, it
is defined as the least rank of its non-zero codewords, {\it
  i.e.,}
\[
  d_{\textup{min}}(\mathcal{C}) \eqdef \min \left\{ |\vec{C}|: \;
    \vec{C} \in \mathcal{C} \setminus \{\vec{0}_{m
      \times n}\} \right\}.
\]
Given an $[m \times n,k]_{q}$-code $\mathcal{C}$, its {\em dual} is defined as 
\[
  \mathcal{C}^{\perp}\eqdef \left\{ \vec{B}\in \mathbb{F}_{q}^{m
      \times n}: \; \forall \vec{C} \in \mathcal{C}, \; \tr\left(
      \vec{C} \vec{B}^{\top} \right) = 0 \right\}.
\]
It defines an $[m \times n, mn-k]_{q}$-code.
\iftoggle{llncs}{\medskip
}{Furthermore, if $\vec{B}_{1},\dots,\vec{B}_{mn-k}$ denotes a basis of $\mathcal{C}^{\perp}$, then 
\[
\mathcal{C}^\perp \eqdef \left\{ \vec{C} \in \mathbb{F}_{q}^{m\times n}: \; \forall i \in [1,mn-k], \; \tr\left( \vec{C}\vec{B}_{i}^{\top} \right) = 0 \right\}.
\]}

{\bf \noindent Probabilistic notation.} For a finite set $\mathcal{E}$, we write $X \Unif \mathcal{E}$ when $X$ is an element of $\mathcal{E}$ drawn uniformly at random.
\iftoggle{llncs}{}{Sometimes, we will use a subscript to stress the random variable specifying the associated
probability space over which the probabilities or expectations are taken. For instance the probability $\mathbb{P}_{X}(E)$ of the event $E$ is taken
over the probability space $\Omega$ over which the random variable~$X$ is defined.}
The {\em statistical distance} between two random variables $X$ and~$Y$ taking their values in a same finite space $\mathcal{E}$ is
defined as
\begin{equation}\label{eq:def1SD} 
	\Delta(X,Y) \eqdef \frac{1}{2} \sum_{a\in \mathcal{E}}
	\left| \mathbb{P}\left(X=a\right) - \mathbb{P}\left( Y= a \right) \right|. 
\end{equation}

 	\section{Rank-based signatures via Average Trapdoor Preimage Sampleable Functions}

Our focus in this paper is to design a signature scheme in the
rank-based setting following the \emph{Full-Domain-Hash} (FDH)
paradigm~\cite{BR96}. Our aim is therefore to design an FDH-scheme
whose security inherits from the (average) hardness of the $\minrank$
problem which is stated in its dual form\iftoggle{llncs}{}{\footnote{$\minrank$ is usually defined in its primal form: given $\vec{M}_1,\dots,\vec{M}_{k} \in \mathbb{F}_{q}^{m \times n}$ and $\vec{Y}=\sum_{i=1}^{k} c_i\vec{M}_i + \vec{X}$ where~$|\vec{X}| \leq t$, find $d_1,\dots,d_k \in \mathbb{F}_{q}$ such that~$\vec{Y}- \sum_{i=1}^{k} d_i \vec{M}_i = \vec{E}$ with $|\vec{E}|\leq t$. The matrices $\vec{M}_{i}$'s are viewed as a basis of some~$[m \times n,k]_{q}$-code $\mathcal{C}$ while in the dual form of $\minrank$, matrices $\vec{B}_{i}$'s are a basis of its dual $\mathcal{C}^{\perp}$.}} as follows. It consists in {\it decoding} a random matrix-code, {\it i.e.,} a matrix-code sampled uniformly at random, given a basis of its dual.

\begin{definition}[$\minrank$]\label{def:minRank} Let $m,n,k,t,q$ be integers which are functions of some security
  parameter $\lambda$ and such that $mn \geq k$.  Let
  $\vec{B}_1, \dots, \vec{B}_{mn - k} \in \mathbb{F}_{q}^{m \times
    n}$, $\vec{X} \in \mathcal{B}_{t}^{m,n,q}$ be sampled uniformly at
  random and
  \[\vec{s} \eqdef \left( \tr\left( \vec{X}\vec{B}_{i}^{\top}
      \right)\right)_{i = 1}^{ mn-k}.\] The $\minrank(m,n,k,t,q)$
  problem consists\iftoggle{llncs}{, given
  	$(\vec{B}_{1},\dots,\vec{B}_{mn-k},\vec{s})$,}{} in finding $\vec{E} \in \mathcal{B}_{t}^{m,n,q}$ such that
  $\vec{s} = \left( \tr\left( \vec{E}\vec{B}_{i}^{\top}
    \right)\right)_{i=1}^{mn-k}$\iftoggle{llncs}{.}{ given
    $(\vec{B}_{1},\dots,\vec{B}_{mn-k},\vec{s})$.}
\end{definition}

It leads us to consider the following one-way function, 
\begin{equation}\label{eq:fToInvert}
\fow:\vec{X} \in \mathcal{B}_{t}^{m,n,q} \longmapsto \left( \tr\left( \vec{X}\vec{B}_{i}^{\top} \right)\right)_{i=1}^{mn-k} \in \mathbb{F}_{q}^{mn-k}.
\end{equation} 
The FDH paradigm requires to be able to invert (thanks to a trapdoor)
this function on any input~$\vec{s} \in \mathbb{F}_{q}^{mn-k}$ (or at least a density close to $1$ of inputs).
This constraints how
parameters~$m,n,k,t,q$ have to be chosen. In particular the radius $t$
has to be chosen large enough to ensure at least that $\sharp \mathcal{B}_t^{m,n,q} \geq q^{mn-k} = \sharp \mathbb{F}_{q}^{mn-k}$, {\it i.e.,} $t$ has to be larger than the so-called Gilbert-Varshamov radius. 
This implies that for any input
$\vec{s}$, we expect an exponential number of preimages. This situation
is reminiscent to the lattice case where FDH schemes as
\textsf{Falcon}~\cite{FHKLPPRSWZ} also rely on inverting a one-way
function with an exponential number of preimages (by considering the
$\mathsf{ISIS}$ problem). The same phenomenon appeared with the
code-based signature \textsf{Wave}~\cite{DST19a}.  This innocent
looking fact has a huge consequence in terms of security: when
inverting 
$\fow$ 
with the help of a
trapdoor we have to be extremely careful on how we choose the inverse,
otherwise we could reveal informations on the trapdoor. Some lattice-
and code-based FDH schemes~\cite{HHPSW03,NCLKK22} were broken~\cite{NR09,DLV24} as they did not take into account this potential
flaw.

Hopefully, building a secure FDH signature in this situation can be
achieved by imposing additional properties to the one-way
function \cite{GPV08}. This is captured by the notion of \emph{Trapdoor Preimage
  Sampleable Functions} (TPSF)~\cite[Def.~5.3.1]{GPV08}. However,
in our case, we will instead rely on a slight variation of TPSF:
\emph{Average Trapdoor Preimage Sampleable Functions} (ATPSF) as introduced
in \cite[Def.~1]{DST19a}. This will be enough to tightly reduce
the security of our scheme (in the EUF-CMA security model) to the
problem of inverting 
$\fow$
without the
trapdoor, as shown by~\cite{CD20}. Roughly speaking, (A)TPSF
are requiring the distribution of the output preimages to be
independent of the trapdoor and thus do not reveal any
information about it. We can define ATPSF in the rank case as follows.

\begin{definition}\label{def:rankATPS}
  A \emph{Rank-based Average Trapdoor Preimage Sampleable Functions}
	 is a pair \iftoggle{llncs}{$(\trap, \sampPre)$}{} of probabilistic
	polynomial-time algorithms \iftoggle{llncs}{}{$(\trap, \sampPre)$} together with parameters $m,n,k,t,q$ which are functions of some security parameter $\lambda$. 
	\begin{itemize}\setlength{\itemsep}{5pt}
		\item $\trap$: given $\lambda$, outputs $\left( (\vec{B}_{i})_{i=1}^{mn-k},T\right)$ where $\vec{B}_{1},\dots,\vec{B}_{mn-k} \in \Fq^{m\times n}$ and $T$ the trapdoor
		(corresponding to the matrices $\vec{B}_{i}$'s).

		\item  $\sampPre$: is a probabilistic algorithm which takes as input $T$, $\vec{s} \in \mathbb{F}_{q}^{mn-k}$ and outputs~$\vec{E} \in
		\mathcal{B}_{t}^{m,n,q}$ such that $\left( \tr(\vec{E}\vec{B}_{i}^{\top})\right)_{i=1}^{mn-k}= \vec{s}$.
	\end{itemize}
	Moreover, it satisfies the \emph{average preimage sampling with trapdoor} property. That is to say:
	\begin{equation}\label{eq:atpsf}
      \mathbb{E}_{(\vec{B}_i)_{i=1}^{mn-k}}\left(\Delta(\sampPre(\vec{s},T),\vec{E})\right)\in\mathsf{negl}(\lambda)
	\end{equation}
    where $\vec{E}\in\mathcal{B}_{t}^{m,n,q}$ and $\vec{s}\in\Fq^{mn-k}$ are uniformly distributed.
\end{definition}
	Given a rank-based ATPSF
$(\trap,\sampPre)$, we easily define a rank-based FDH signature scheme. We first generate the public and secret key as $(\pk,\sk)\eqdef \left( \left( \vec{B}_1,\dots,\vec{B}_{mn-k}\right), T\right)\leftarrow \trap(\lambda)$.
We also select a cryptographic hash function $\hash: \{0,1\}^{*}
\rightarrow \mathbb{F}_{q}^{mn-k}$ and a salt $\texttt{salt}$ 
of size\iftoggle{llncs}{}{\footnote{Adding this salt is crucial for known security reductions in the EUF-CMA security model. We have chosen its size as $\lambda$ to be consistent with the different schemes submitted at the \href{https://csrc.nist.gov/projects/pqc-dig-sig}{NIST signature standardization}.}} $\lambda$.
The signing and verification algorithms \Sgnsk\ and \Vrfypk\ 
are then defined as follows.
\medskip 

\begin{center}
	\begin{tabular}{l@{\hspace{3mm}}|@{\hspace{3mm}}l}
		$\Sgnsk(\vec{m})\!\!: \qquad \qquad \qquad$ & $\Vrfypk(\vec{m}',(\vec{E}',\texttt{salt}'))\!\!:$ \\
		$\quad \texttt{salt} \Unif \{ 0,1 \}^{\lambda}$ &$\quad \vec{s} \leftarrow \hash(\vec{m}' ,\texttt{salt}')$ \\
		$\quad \vec{s} \leftarrow \hash(\vec{m} ,\texttt{salt})$ & \quad \texttt{if} ${\left( \tr( \vec{E}'\vec{B}_{i}^{\top})\right)}_{i=1}^{mn-k} = \vec{s} \texttt{ and } |\vec{E}'| \leq t$\\ 
		$\quad\vec{E} \leftarrow \sampPre(\vec{s},T) $ &  \qquad \texttt{return} 1 \\
		$\quad \texttt{return}(\vec{E},\texttt{salt})$& \quad \texttt{else}		\\	
&  \qquad \texttt{return} 0 
	\end{tabular} 
\end{center}
\medskip 

In the next section we instantiate $\trap$ and $\sampPre$ in Algorithms~\ref{algo:trap} and~\ref{algo:sgn}. It will form our proposed signature scheme called $\miranda$. In particular, 
we show in Theorem~\ref{theo:unifPreimage} that the average preimage sampling with trapdoor property \eqref{eq:atpsf} of Definition~\ref{def:rankATPS} holds. Namely that $\miranda$ signatures do not leak any information on their underlying trapdoor.

 	\section{Miranda signature scheme}

\subsection{Trapdoor: Add-And-Remove matrix codes transformation}

We present in this subsection the associated trapdoor to $\miranda$
(and its rationale). It relies on the following construction.
\begin{definition}[Add-and-Remove matrix-code construction]\label{def:addRemove} Let
  $m, n, k,$ $q,\ell_a, \ell_s$ be integers and~$\mathcal{F}$ be a set of
  $[m \times n,k]_{q}$-codes. The \emph{Add-And-Remove}
  construction
  $\mathsf{AddRemove}\left( \mathcal{F},\ell_a,\ell_s \right)$
is a family of codes defined as follows. Let~$\mathcal{C} \in \mathcal{F}$ and $\mathcal{C}_{s}$ be an arbitrary subcode
  of codimension $\ell_{s}$. Then, let~$\Amat$ be
  an arbitrary $[m \times n,\ell_a]_{q}$-code such that
  $\mathcal{C} \cap \Amat = \{\vec{0}_{m \times n}\}$. We define
  the resulting \emph{Add-and-Remove} code $\Dmat$ as follows,
\[
\Dmat\eqdef \mathcal{C}_{s} \oplus \Amat \ . 
\]
It defines an $[m \times n, k-\ell_s+\ell_a]_{q}$-code. 
\end{definition}

\begin{remark}\label{rem:AR}
	Our Add-and-Remove construction could have been defined in a much more generality. In particular, it does not require the use of matrix codes.
\end{remark}

$\miranda$ public keys will be defined as a random basis
$\vec{B}_{1},\dots, \vec{B}_{mn -k + \ell_{s}- \ell_{a}}$ of the dual
of some code $\Dmat$ obtained via the $\mathsf{AddRemove}$ construction. Given \iftoggle{llncs}{(the hash of the message to sign)}{}
$\vec{s} \in \Fq^{mn - k +\ell_{s} - \ell_{a}}$ \iftoggle{llncs}{}{(defined as
the hash of the message to sign)}, a signature will be then given by
$\vec{E} \in \mathcal{B}_{t}^{m,n,q}$ such that
\begin{equation}\label{eq:toSolve} 
\left( \tr \left( \vec{E}\vec{B}_{i}^{\top} \right) \right)_{i=1}^{mn-k+\ell_{s} - \ell_{a}} = \vec{s} \ . 
\end{equation} 
The associated trapdoor 
will be the knowledge of the underlying code $\mathcal{C}$ 
to form $\Dmat$. 
Our rationale behind $\mathsf{AddRemove}$ is
that it ``hides'' the code~$\mathcal{C}$ while its knowledge
enables to solve efficiently the above equation 
as we are now going to roughly explain (a detailed
discussion about how to use the trapdoor will be found in
Subsection~\ref{subsec:Invert}). First, notice that, by construction,~$\Dmat^{\perp}$ is contained in~$\mathcal{C}_{s}^{\perp}$ (as we
added a code $\Amat$ of dimension~$\ell_a$ to $\mathcal{C}_{s}$
to form $\Dmat$) and their dimensions satisfy
\[
\dim \Dmat^{\perp} = mn -  k + \ell_{s} - \ell_a < mn -  k + \ell_{s} = \dim \mathcal{C}_{s}^{\perp} \ . 
\]
We can thus complete any basis of $\Dmat^{\perp}$ with $\ell_{a}$ matrices to form a basis \iftoggle{llncs}{$\vec{B}_{1}$, $\dots$, $\vec{B}_{mn- k + \ell_{s}}$}{$\vec{B}_{1},\dots,\vec{B}_{mn- k + \ell_{s}}$} of~$\mathcal{C}_{s}^{\perp}$. Therefore to solve Equation~\eqref{eq:toSolve} it is enough to 
guess $\vec{t} \in \Fq^{\ell_{a}}$ and to compute $\vec{E} \in \mathcal{B}_{t}^{m,n,q}$ such that 
\begin{equation}\label{eq:toSolve2} 
\left( \tr\left( \vec{E}\vec{B}_{i}^{\top} \right)\right)_{i=1}^{mn-k+\ell_{s}} = (\vec{s},\vec{t}) \ .
\end{equation} 
In other words, we reduced the task of solving a $\minrank$ instance with $mn-k+\ell_{s}-\ell_{a}$ equations as given in~\eqref{eq:toSolve} to guess a $\vec{t} \in \Fq^{\ell_{a}}$ for which we can solve a $\minrank$ instance but this times with~$mn-k+\ell_s$ equations coming from a basis of the dual of $\mathcal{C}_{s}$! This approach may seem at first sight quite convoluted and useless. However, this reduction 
has the following advantage: we can choose a structured code~$\mathcal{C}_{s}$ for which the associated $\minrank$ instance becomes easy, {\it i.e.,} a code~$\mathcal{C}_{s}$ that we know how to efficiently decode while only a basis of~$\Dmat^{\perp}$ is publicly known. Therefore, under the hypothesis that $\Dmat$ hides the structure of~$\mathcal{C}$, no one can use the knowledge of~$\Dmat$ alone to efficiently solve Equation~\eqref{eq:toSolve}.

All the question now is to choose a family of matrix codes $\mathcal{F}$ admitting an efficient {\it decoding algorithm}, {\it i.e.,}
for which we can easily solve Equation~\eqref{eq:toSolve2}.
Not so much families of such matrix codes are known \cite{SKK10,GMRZ13,ACLN20,CP25}. The most popular family of matrix codes admitting an
efficient decoding algorithm is undoubtedly that of \emph{Gabidulin
  codes} \cite{G85}. These codes belong to a particular sub-class of
matrix codes: $\Fqm$--linear codes. Recall that an
$\Fqm$--linear code~$\mathcal{C}$ with length $n$ and
dimension $\kappa$ is a subspace with $\Fqm$--dimension $\kappa$ of
$\Fqm^{n}$. We say that
it has parameters $[n,\kappa]_{q^{m}}$ or that it is an
$[n,\kappa]_{q^{m}}$-code. It turns out that $\Fqm$--linear
codes are {\it isometric} 
to a particular subclass of matrix codes.  However,
to exhibit this isometry, we first need to define the underlying metric
for~$\Fqm$--linear codes. Given two vectors
$\vec{v}, \vec{w} \in \Fqm^{n}$, their \emph{rank distance} is
defined as
 \[
 |\vec{v} - \vec{w}| \eqdef \dim_{\Fq} \langle v_1-w_1,\dots,v_n-w_n \rangle\ .
\]
The \emph{rank weight} of $\vec{v} \in \Fqm^{n}$
is denoted $|\vec{v}| \eqdef |\vec{v}-\vec{0}|$.

Notice that the aforementioned rank-weight $|\vec{v}|$ is nothing but the rank of the matrix obtained by decomposing entries of
$\vec{v}$ in a fixed $\Fq$--basis of $\Fqm$
viewed as an $\Fq$--vector space with dimension~$m$.  This
decomposition then gives us the aforementioned isometry.

 \begin{fact}
 	Any
 	$\Fqm$--linear code with dimension $\kappa$ is trough the following map an $[m \times n, \kappa m]_{q}$ matrix code. 
Choose
 	an $\Fq$--basis $\mathscr{B} = (b_1, \dots, b_m)$ of $\Fqm$. Consider the map
 	\begin{equation}\label{eq:matrix_representation}
 		M_{\mathscr{B}} :  \map{\Fqm^n}{\Fq^{m \times n}}{(v_1,\dots, v_n)}{
 			\begin{pmatrix}
 				v_{11} & \cdots & v_{1n}\\
 				\vdots &        & \vdots \\
 				v_{m1} & \cdots & v_{mn}
 			\end{pmatrix}
 		},
 	\end{equation}
 	where for any $j$, $v_{1j},\dots, v_{mj} \in \Fq$ denote the
 	coefficients of $v_j$  expressed in the basis $\mathscr{B}$. The map~$M_{\mathscr{B}}$
    is an isometry with respect to the rank metric.
    Note that, if the map $M_{\mathscr{B}}$ depends on the choice of
    the basis $\mathscr B$, the rank $M_{\mathscr{B}}(\vv)$ of a
    vector~$\vv \in \Fqm^n$ does not.
\end{fact}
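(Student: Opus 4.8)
The plan is to reduce everything to the single coordinate isomorphism attached to $\mathscr B$. Write $\varphi_{\mathscr B}\colon \Fqm \to \Fq^m$ for the $\Fq$-linear map sending an element of $\Fqm$ to the column vector of its coordinates in the basis $\mathscr B$; this is an $\Fq$-linear isomorphism. By construction $M_{\mathscr B}$ applies $\varphi_{\mathscr B}$ to each entry $v_j$ and stacks the results as the columns of an $m\times n$ matrix, so $M_{\mathscr B}$ is $\Fq$-linear, and it is injective because $\varphi_{\mathscr B}$ is. Since both $\Fqm^n$ and $\Fq^{m\times n}$ have $\Fq$-dimension $mn$, the map $M_{\mathscr B}$ is in fact an $\Fq$-linear isomorphism. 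For the dimension claim, an $\Fqm$-linear code of $\Fqm$-dimension $\kappa$ has $\Fq$-dimension $\kappa m$; applying the $\Fq$-linear injection $M_{\mathscr B}$ preserves $\Fq$-dimension, so its image is an $[m\times n,\kappa m]_q$ matrix code.

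The heart of the argument is the identity $\rk\bigl(M_{\mathscr B}(\vv)\bigr)=|\vv|$ for every $\vv=(v_1,\dots,v_n)\in\Fqm^n$. I would argue as follows. The columns of $M_{\mathscr B}(\vv)$ are precisely $\varphi_{\mathscr B}(v_1),\dots,\varphi_{\mathscr B}(v_n)\in\Fq^m$, so the rank of the matrix, which equals its column rank, is $\dim_{\Fq}\langle \varphi_{\mathscr B}(v_1),\dots,\varphi_{\mathscr B}(v_n)\rangle$. Because $\varphi_{\mathscr B}$ is an $\Fq$-isomorphism, it carries the $\Fq$-span $\langle v_1,\dots,v_n\rangle$ bijectively and $\Fq$-linearly onto $\langle \varphi_{\mathscr B}(v_1),\dots,\varphi_{\mathscr B}(v_n)\rangle$, hence these two subspaces share the same $\Fq$-dimension. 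By definition that dimension is the rank weight $|\vv|$, which yields $\rk\bigl(M_{\mathscr B}(\vv)\bigr)=|\vv|$.

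The two remaining assertions then follow at once. For the isometry, $\Fq$-linearity gives $M_{\mathscr B}(\vv)-M_{\mathscr B}(\vec w)=M_{\mathscr B}(\vv-\vec w)$, so $|M_{\mathscr B}(\vv)-M_{\mathscr B}(\vec w)|=\rk\bigl(M_{\mathscr B}(\vv-\vec w)\bigr)=|\vv-\vec w|$, which is exactly the preservation of the rank distance. For the basis independence, the right-hand side of $\rk\bigl(M_{\mathscr B}(\vv)\bigr)=|\vv|$ makes no reference to $\mathscr B$, so for any two bases $\mathscr B,\mathscr B'$ one gets $\rk\bigl(M_{\mathscr B}(\vv)\bigr)=\rk\bigl(M_{\mathscr B'}(\vv)\bigr)$. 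There is no genuine obstacle in this statement; the only step that deserves care is the identification of the column rank of $M_{\mathscr B}(\vv)$ with the $\Fq$-dimension of $\langle v_1,\dots,v_n\rangle$ through the coordinate isomorphism, together with the bookkeeping $\dim_{\Fq}=m\cdot\dim_{\Fqm}$ underlying the dimension count.
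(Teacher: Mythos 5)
Your proof is correct and follows exactly the route the paper itself indicates (the paper offers no formal proof, only the preceding remark that the rank weight of $\vv$ is by definition the rank of the matrix obtained by decomposing its entries in an $\Fq$--basis of $\Fqm$). Your write-up simply makes that one-line observation rigorous via the coordinate isomorphism $\varphi_{\mathscr B}$, the identification of column rank with the $\Fq$--dimension of the column span, and $\Fq$--linearity for the distance and dimension claims; nothing is missing.
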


 The set of $\Fqm$--linear codes can thus be viewed as a
 subset of matrix codes having some
 ``extra'' algebraic structure in the same way as, for instance,
 cyclic linear codes can be viewed as ``structured'' versions of linear
 codes. Let $\mathcal{C}$ be an~$\Fqm$--linear code, then, given
 $\vec{c} \in \mathcal{C}$ we have that
 $\alpha \vec{c} \in \mathcal{C}$ for any scalar~$\alpha \in \Fqm$. Therefore $\mathcal{C}$ is left globally invariant by
 the $\Fq$--linear
 mapping~$\vec{x} \in \Fqm^{n} \mapsto \alpha \vec{x}
 \in \Fqm^{n}$. We deduce that
 \iftoggle{llncs}{
 $$
  M_{\mathscr{B}}(\mathcal{C}) \eqdef \left\{
 M_{\mathscr{B}}(\vec{c}): \; \vec{c} \in \mathcal{C} \right\}
 $$}{$M_{\mathscr{B}}(\mathcal{C}) \eqdef \left\{
   M_{\mathscr{B}}(\vec{c}): \; \vec{c} \in \mathcal{C} \right\}$} is
 globally left invariant by the linear mappings
 $M_{\mathscr{B}}(\vec{x}) \in \Fq^{m \times n} \mapsto
 M_{\mathscr{B}}(\alpha\vec{x}) \in \Fq^{m \times n}$ for
 $\alpha \in \Fqm$, {\it i.e.,}
 $M_{\mathscr{B}}(\mathcal{C})$ is globally invariant by the left
 multiplication by some matrices
 $\vec{P}_{\alpha} \in \Fq^{m \times m}$
  which represent the multiplication-by-$\alpha$ maps.

\subsection{Gabidulin codes}
  We are now almost ready to properly define Gabidulin codes. Our last
  ingredient is the set of~$q$-polynomials. A \emph{$q$-polynomial} is a
  polynomial of the form
	\[
      P(X) = p_0X + p_1 X^{q} + \dots + p_{d}X^{q^{d}} \; \mbox{ where
        $p_{i} \in \mathbb{F}_q$ and $p_{d} \neq 0$}\ .
	\]
    The integer $d$ is called the \emph{$q$-degree} of $P$ and denoted $\deg_q(P)$.

 \begin{definition}[Gabidulin codes]\label{def:gab}
   Let $m,n,\kappa$ be integers where $\kappa \leq n \leq m$ and~$\vec{g} = (g_1,\dots,g_n) \in \Fqm^{n}$ whose
   entries are $\Fq$--linearly independent. The Gabidulin
   code of evaluation vector $\vec{g}$ and~$\Fqm$--dimension $\kappa$
   is defined as
 	\[
      \mathsf{Gab}(\vec{g},\kappa) \eqdef \left\{ \left( P(g_1),\dots, P(g_n)
        \right): \; \deg_q(P) <\kappa \right\}.
 	\]
 	It defines an $[n,\kappa]_{q^{m}}$-linear code with minimum distance
    $n-\kappa+1$.
  \end{definition}

\begin{remark}\label{rem:Gab_above_GV}
  Gabidulin codes are Maximum Rank Distance (MRD) codes, {\it i.e.,}
  they have the largest possible minimum distance (as matrix and
  $\mathbb{F}_{q^m}$--linear codes) for fixed $m, \kappa$. In particular, their
  minimum distance is much greater than that of almost any code, which
  is given by the {\it Gilbert-Varshamov} radius. When $m=n$
  the latter radius is approximately equal to \cite[Example $1$]{L06}
	\[
	m\left( 1 - \sqrt{\frac{\kappa}{m}} \right)~.
	\]
\end{remark}

In the sequel, we will mainly deal with \emph{matrix} Gabidulin
codes. That is to say, the image of a Gabidulin code by some
$M_{\mathscr B}$ map as given in \eqref{eq:matrix_representation}.

\iftoggle{llncs}{}{
\medskip
\noindent \textbf{Sampling matrix Gabidulin codes.}
In the sequel, we will frequently have to sample matrix Gabidulin
codes.  This will be done as follows. Since for our instantiations we
only consider codes of length $m$ \emph{i.e.,} such that $m=n$ we
restrict here to this case.
\begin{enumerate}\setlength{\itemsep}{5pt}
\item Draw a uniformly random $\Fq$--basis $\gv$ of $\Fqm$;
\item Compute a basis $\Gabcode{\kappa}{\gv}$;
\item Draw another uniformly random $\Fq$--basis $\mathscr B$ of
  $\Fqm$;
\item Return a uniformly random $\Fq$--basis of $M_{\mathscr B}(\Gabcode{\kappa}{\gv})$.
 \end{enumerate}
}
\medskip

\subsection{$\miranda$'s trapdoor}

 We now have all the ingredients we need to present $\miranda$'s
 $\trap$ algorithm.
It basically consists in building a uniform code from
 $\mathsf{AddRemove}\left( \mathcal{F},\ell_a,\ell_s \right)$ where
 $\mathcal{F}$ is the family of matrix Gabidulin codes with parameters
 $[n,\kappa]_{q^{m}}$ viewed as~$[m \times n, \kappa m]_{q}$-matrix
 codes. The trapdoor will be roughly speaking the structure of the
 used Gabidulin code, {\it i.e.,} the basis $\mathscr B$ chosen for
 the map $M_{\mathscr{B}}$ (see
 Equation~\eqref{eq:matrix_representation}). However, notice that in
 the
 $\mathsf{AddRemove}\left( \mathcal{F},\ell_a,\ell_s
 \right)$-construction, given a code $\mathcal{C} \in \mathcal{F}$
 {\it we choose a
   subcode~$\mathcal{C}_{s}\subseteq \mathcal{C}\subseteq \Fq^{m
     \times n}$}. Here $\mathcal{C}$ will be a Gabidulin code,
 therefore it is~$\Fqm$--linear and it is left globally invariant by
 the left multiplication of matrices which represent the
 multiplication by scalars~$\alpha \in \Fqm$. But by choosing a
 sub-matrix code $\mathcal{C}_{s} \subseteq \Fq^{m \times n}$ of this
 matrix Gabidulin code, we precisely destroy the $\Fqm$--linearity: a
 sub-matrix code of a matrix Gabidulin code is {\bf not} \emph{\`a
   priori} $\Fqm$--linear. Note that recovering the hidden
 $\Fqm$--linear structure is enough to fully recover the
 secret key (see Section~\ref{ss:finishing}).
 
\begin{algorithm}[htb]
 	\caption{$\trap\left(\lambda\right)$: \textsf{Miranda} trapdoor algorithm outputting $\vec{B}_1,\dots,\vec{B}_{mn-km+\ell_{s}-\ell_{a}}$ and its corresponding trapdoor $T$}\label{algo:trap}\medskip 
 	Parameters: $m,n,\kappa,t,q, \ell_{a},\ell_{s}$ to ensure $\lambda$ bits of security\medskip
 	\hrule
 \setlength{\baselineskip}{1.25\baselineskip}
 	\begin{algorithmic}[1]
\State $ \vec{g} \Unif \left\{ (x_1,\dots,x_n) \in \Fqm^{n}\mbox{ whose entries are $\Fq$--linearly independent} \right\}$ 
 		\State $\mathscr{B} \Unif \left\{ \Fq\mbox{--basis of $\Fqm$ as $\mathbb{F}_q$-space}\right\}$
 		\State Compute $\mathscr{C} \eqdef M_{\mathscr{B}}(\mathsf{Gab}(\vec{g}, \kappa))$ \Comment{Matrix Gabidulin code of $\Fq$--dimension $\kappa m$} 
 		\State $\mathcal{C}_{s} \Unif \left\{\text{subcodes of } \mathcal C \text{ of codimension } \ell_{s}\right\}$
 		\State $\Amat \Unif \left\{ \mathcal{U}: \mbox{ $[m\times n,\ell_{a}]_{q}$-codes such that $\mathcal{U} \cap \mathcal{C} = \{ \vec{0}_{m \times n}\}$}  \right\}$
 		\State Compute $\vec{B}_{1},\dots,\vec{B}_{mn-\kappa m  + \ell_{s}-\ell_{a}}$ a {\it random} $\Fq$--basis of $\left( \mathcal{C}_s \oplus \Amat \right)^{\perp}$ 
 		\State Complete the above basis into a basis $\vec{B}_1,\dots, \vec{B}_{mn-\kappa m + \ell_{s}}$ of $\mathcal{C}_{s}^{\perp}$
 		\State $T \leftarrow \left( \vec{g}, \mathscr{B}, \left( \vec{B}_{mn-\kappa m + \ell_{s}-\ell_{a}+1},\dots,\vec{B}_{mn-\kappa m + \ell_{s}}\right)\right)$
 		\State\Return $\left( \left( \vec{B}_{1},\dots,\vec{B}_{mn-\kappa m + \ell_{s}-\ell_{a}}\right),T \right)$ 
 	\end{algorithmic}
 \end{algorithm}

 \iftoggle{llncs}{}{
 \begin{remark}
   Note that the trapdoor could have only consisted in the pair
   $(\gv, \mathscr B)$. Indeed, from $\gv$ and $\mathscr B$ and the
   given basis of $(\C_s \oplus \Amat)^\perp$, it is possible to
   deduce a basis of $\C_s^\perp$.
 \end{remark}

 \begin{remark}
   Notice that in the trapdoor algorithm, {\it i.e.,}
   Algorithm~\ref{algo:trap}, we are returning $\vec{B}_{1}$, $\dots$,
   $\vec{B}_{m(n-\kappa) + \ell_{s}-\ell_{a}}$ matrices corresponding
   to a random dual basis of a code picked uniformly at random among
   the family of
   $\mathsf{AddRemove}\left( \mathcal{F},\ell_a,\ell_s \right)$ codes
   where $\mathcal{F}$ is the set of length $n$ Gabidulin codes with
   $\mathbb{F}_{q^{m}}$--dimension $\kappa$. In particular these
   Gabidulin codes have dimension~$\kappa m$ when viewing them as
   matrix codes.
 \end{remark}
 }
 
\subsection{Miranda: how to invert with the help of the trapdoor?}\label{subsec:Invert}
 
Now, we present how to invert~$\fow$ as defined in Equation \eqref{eq:fToInvert} for the matrices $\vec{B}_{1},\dots,\vec{B}_{m(n-k) +\ell_{s}-\ell_{a}}$ being output by Algorithm~\ref{algo:trap} and with the help of their associated trapdoor $T$. As mentioned previously it relies on decoding a Gabidulin code. The first such decoding algorithm was proposed in \cite{G85}. In the following proposition, we summarize some of its properties that will be useful for us. However, there exist many optimizations improving its running time, see for instance  \cite[Table A.1]{W13} for a survey of running time with different variants of the algorithm from \cite{G85,L06a}.

 \begin{proposition}[\cite{G85}]\label{propo:decoGab}
 	Given a Gabidulin code $\mathsf{Gab}(\vec{g},\kappa)$ with parameters \iftoggle{llncs}{$m$, $n$, $\kappa$, $q$}{$m,n,\kappa,q$}, {\it i.e.,} given the knowledge of~$\vec{g}\in \Fqm^{n}$ and $\kappa$, there exists a deterministic algorithm $\mathsf{Decode}^{\mathsf{Gab}}$ running in $O(n^{2})$ operations in $\Fqm$ and such that given $\vec{y} \in \Fqm^{n}$, $\vec{g}$ and $\kappa$, 
 	\begin{itemize}\setlength{\itemsep}{5pt}
 		\item if $\vec{y} = \vec{c} + \vec{e}$ where $\vec{c} \in \mathsf{Gab}(\vec{g},\kappa)$ and $|\vec{e}| \leq \frac{n-\kappa}{2}$, it outputs $\vec{e}$,

 		\item otherwise, it outputs $\bot$~. 
 	\end{itemize}
 \end{proposition}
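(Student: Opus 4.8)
The statement to establish is the correctness and running-time of the classical Gabidulin decoder. The plan is to follow the $q$-polynomial reconstruction approach of~\cite{G85}. Recall that a codeword of $\mathsf{Gab}(\vec g,\kappa)$ is $(P(g_1),\dots,P(g_n))$ for a $q$-polynomial $P$ of $q$-degree $<\kappa$, and the received word is $\yv=\cv+\ev$ with $|\ev|\leq \frac{n-\kappa}{2}$. The first key idea is to exploit the $\Fq$-linearity of $q$-polynomials: the map $x\mapsto P(x)$ is $\Fq$-linear, so evaluation of $q$-polynomials interacts well with the rank metric. I would set $t_0\eqdef|\ev|\leq\frac{n-\kappa}{2}$ and seek an \emph{error-locator} $q$-polynomial $V$ of $q$-degree $t_0$ that annihilates the error, in the sense that $V$ vanishes on the $\Fq$-span of the ``error support'' — the subspace of $\Fqm$ spanned by the error coordinates. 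Because $V$ is $\Fq$-linear and its kernel must contain a space of dimension $t_0=\deg_q V$, such a $V$ exists and is essentially unique up to scalar.

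\textbf{The reconstruction (key-equation) step.} The core of the argument is to combine the unknown $V$ with the unknown message polynomial $P$ into a single linear system. Writing $N\eqdef V\circ P$ (composition of $q$-polynomials), one checks that for each coordinate $i$ one has $N(g_i)=V(y_i)$, since $V(y_i)=V(P(g_i))+V(e_i)=V(P(g_i))=N(g_i)$, using that $V$ kills each $e_i$ and that $V,P$ are $\Fq$-linear. This is the analogue of the Berlekamp–Massey key equation in the rank-metric setting. Treating the unknown coefficients of $V$ (of $q$-degree $\leq t_0$) and of $N$ (of $q$-degree $<\kappa+t_0$) as unknowns, the $n$ relations $N(g_i)=V(y_i)$ form an $\Fqm$-linear system. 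I would argue that the number of unknowns is at most $n+1$ and that the weight bound $t_0\leq\frac{n-\kappa}{2}$ guarantees the system has a nonzero solution with $V\neq 0$; one then recovers $P$ by the left-division $P=N\,/\,V$ in the (noncommutative) ring of $q$-polynomials, which is exact precisely because $V$ was a genuine error-locator. Solving the linear system and performing the $q$-polynomial division are both done in $O(n^2)$ operations over $\Fqm$, giving the claimed complexity.

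\textbf{The uniqueness/failure analysis.} The remaining point is the dichotomy in the statement: the algorithm must output the correct $\ev$ when $|\ev|\leq\frac{n-\kappa}{2}$ and $\bot$ otherwise. Correctness when the bound holds follows because $\mathsf{Gab}(\vec g,\kappa)$ has minimum distance $n-\kappa+1$ (Definition~\ref{def:gab}), so any error of rank $\leq\frac{n-\kappa}{2}$ is strictly within the unique-decoding radius and the recovered $\cv=\yv-\ev$ is the unique nearest codeword; hence the reconstructed $P$, and thus $\ev$, is correct. For the failure case, after computing a candidate $\ev$ the algorithm verifies that $\yv-\ev$ actually lies in $\mathsf{Gab}(\vec g,\kappa)$ and that $|\ev|\leq\frac{n-\kappa}{2}$; if either test fails it returns $\bot$. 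I expect the main obstacle to be the clean justification that the key-equation linear system always admits a valid solution (nonzero $V$) when the weight bound holds and that the division $N/V$ is exact — this requires care with the noncommutative arithmetic of $q$-polynomials and with showing the error-locator is not degenerate. Since Proposition~\ref{propo:decoGab} is quoted from~\cite{G85}, I would keep this part brief, citing the reference for the detailed reconstruction argument and only sketching why the rank bound yields both existence of the locator and uniqueness of the decoded word.
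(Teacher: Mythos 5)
The paper does not prove this proposition: it is imported directly from \cite{G85} (with \cite{W13,L06a} cited for running-time refinements), so there is no internal proof to compare your attempt against. Your sketch is the standard linearized-polynomial reconstruction argument underlying that reference, and it is sound as a sketch. Two caveats are worth recording. First, the $O(n^2)$ bound does not follow from solving the key-equation system $N(g_i)=V(y_i)$ by generic Gaussian elimination, which costs $O(n^{3})$ (or $O(n^{\omega})$) operations in $\Fqm$; reaching $O(n^2)$ requires the structured resolution (the linearized analogue of the extended Euclidean / Berlekamp--Massey step), which is exactly what \cite{G85,L06a} supply. Second, the assertion that the left division of $N$ by $V$ is exact for \emph{any} nonzero solution of the linear system is the crux of correctness: it rests on the fact that a nonzero $q$-polynomial of $q$-degree $d$ has kernel of $\Fq$-dimension at most $d$, applied to $V\circ P - N$ together with the bound $t_0\leq\frac{n-\kappa}{2}$. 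You flag this as the main obstacle and defer to \cite{G85}, which is acceptable here since the paper defers entirely. Your final verification step (check that $\vec{y}-\vec{e}\in\mathsf{Gab}(\vec{g},\kappa)$ and $|\vec{e}|\leq\frac{n-\kappa}{2}$, otherwise return $\bot$) is indeed necessary to obtain the second bullet of the statement, and the determinism and uniqueness of the output follow, as you say, from the minimum distance $n-\kappa+1$ exceeding twice the decoding radius.
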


 The decoding algorithm of Gabidulin codes enjoys useful properties for our instantiation. First, it is deterministic. But even more importantly, this algorithm finds $\vec{e}$ whichever the input~$\vec{y} = \vec{c}+\vec{e}$ as soon as $\vec{c} \in \mathsf{Gab}(\vec{g},\kappa)$ and $|\vec{e}| \leq \frac{n-\kappa}{2}$~.
In other words, the algorithm works \emph{with certainty in the parameter regime of unique decoding}. This property will be crucial to show in Theorem~\ref{theo:unifPreimage} that $\miranda$ signatures do not leak any information on their underlying trapdoor.

 Our algorithm to invert $\fow$ (as defined in Equation \eqref{eq:fToInvert}) is given in Algorithm~\ref{algo:sgn}. Before providing a
 rigorous demonstration that it works, let us explain intuitively how it works. Our aim is, given $\vec{s} \in \Fq^{m(n-\kappa)+\ell_{s}-\ell_{a}}$, to find~$\vec{E} \in \mathcal{B}_{t}^{m,n,q}$ such that $\left( \tr \left( \vec{E}\vec{B}_{i}^{\top} \right)
 \right)_{i=1}^{m(n-\kappa)+\ell_{s} - \ell_{a}} = \vec{s}$ where the the rank $t$ of $\Ev$ is set as 
  $$
 t \eqdef \frac{n-\kappa}{2}~\cdot
 $$  
 \begin{itemize}\setlength{\itemsep}{5pt}
 	\item[{\bf Step 1.}] First, we sample uniformly at random $\vec{t} \in \Fq^{\ell_a}$ and we compute by linear algebra a $\vec{Y} \in \Fq^{m \times n}$ such that 
 	$$
 	\left( \tr \left( \vec{Y}\vec{B}_{i}^{\top} \right)
 	\right)_{i=1}^{m(n-\kappa)+\ell_{s}} = (\vec{s},\vec{t})~.
 	$$
 	Notice that we used here a basis $\left(\vec{B}_1,\dots,\vec{B}_{m(n-\kappa) + \ell_{s}}\right)$ of $\mathcal{C}_{s}^{\perp}$ by considering the additional~$\ell_{a}$ matrices from the trapdoor. The point of this step is that if we succeed to decode $\vec{Y}$ in $\mathcal{C}_{s}$, {\it i.e.,} if we compute $\vec{E} \in \mathcal{B}_{t}$ such that~$\vec{Y}-\vec{E} \in \mathcal{C}_{s}$, then we deduce that $	\left( \tr \left( (\vec{Y}-\vec{E})\vec{B}_{i}^{\top} \right)
 	\right)_{i=1}^{m(n-\kappa)+\ell_{s}} = \vec{0}$ and therefore our decoder outputs $\vec{E}$ which is a solution we are looking for. The aim of the next steps is precisely to decode $\vec{Y}$ in $\mathcal{C}_{s}$.

 	\item[{\bf Step 2.}] Given $\vec{Y}$, we compute $\vec{y} \eqdef M_{\mathscr{B}}^{-1}(\vec{Y})$ and we try to decode it in the Gabidulin code~$\mathsf{Gab}(\vec{g},\kappa)$. Our rationale is that if it exists $\vec{E} \in \mathcal{B}_{t}$ such that $\vec{Y} - \vec{E} \in \mathcal{C}_{s}$, then we have $\vec{Y} - \vec{E} \in M_{\mathscr{B}}\left(\mathsf{Gab}(\vec{g},\kappa)\right)$ as $\mathcal{C}_{s} \subseteq M_{\mathscr{B}}\left( \mathsf{Gab}(\vec{g},\kappa) \right)$. In other words, in such a case we have $\vec{y} - \vec{e} \in \mathsf{Gab}(\vec{g},\kappa)$ where~$\vec{e} \eqdef M_{\mathscr{B}}^{-1}(\vec{E})$. But~$\vec{e}$ verifies $|\vec{e}| = |\vec{E}| \leq t = (n-\kappa)/2$ as $M_{\mathscr{B}}$ is an isometry and the decoding algorithm $\mathsf{Decode}^{\mathsf{Gab}}$ 
will necessarily return $\vec{e}$.

 	\item[{\bf Step 3.}] In the previous step, we decoded $\vec{y}$ in $\mathsf{Gab}(\vec{g},\kappa)$. If the decoding algorithm fails, we return to Step~1 and sample another $\vec{t}$. If the decoding algorithm succeeds, we have computed some~$\vec{e}$ with rank $t$ such that~$\vec{y} - \vec{e} \in \mathsf{Gab}(\vec{g},\kappa)$. Therefore,~$\vec{Y} - \vec{E} \in M_{\mathscr{B}}(\mathsf{Gab}(\vec{g},\kappa))$ where~$\vec{E} \eqdef M_{\mathscr{B}}(\vec{e})$ and we are almost done. To conclude, we just need to verify that~$\vec{Y} - \vec{E} \in \mathcal{C}_{s}$ which may not hold as \emph{à priori} only $\mathcal{C}_{s}\subseteq M_{\mathscr{B}}(\mathsf{Gab}(\vec{g},\kappa))$ is supposed. Otherwise, we go back to Step~1 and guess another $\vec{t}$. 
 \end{itemize}

 The whole point of our approach is that given $\vec{s} \in \Fq^{m(n-\kappa) - \ell_{a} + \ell_{s}}$, then for any $\vec{E} \in \mathcal{B}_{t}$ satisfying $\left( \tr \left( \vec{E}\vec{B}_{i}^{\top} \right)
 \right)_{i=1}^{m(n-\kappa)+\ell_{s} - \ell_{a}} = \vec{s}$, there exist exponentially many possible $\vec{t} \in \Fq^{\ell_{a}}$ such that~$\left( \tr \left( \vec{E}\vec{B}_{i}^{\top} \right)
 \right)_{i=1}^{m(n-\kappa)+\ell_{s}} = (\vec{s},\vec{t})$ has a low rank solution.
 Therefore, if we guessed the right $\vec{t}$ in Step~{1}, then our algorithm  will return $\vec{E}$ since the decoding algorithm we use for Gabidulin codes is in its unique decoding regime of parameters. More importantly, as we sampled $\vec{t}$ uniformly at random, our inversion algorithm will compute one of the solutions~$\vec{E}$ uniformly at random. This is the key to prove that our
 signature does not leak information from the secret key.

 \begin{algorithm}
 	\caption{$\sampPre(\vec{s},T)$: \textsf{Miranda} inversion algorithm outputting an~$\vec{E} \in \mathcal{B}_{\frac{n-\kappa}{2}}^{m,n,q}$
such that $\left( \tr\left( \vec{E}\vec{B}_{i}^{\top} \right)\right)_{i=1}^{m(n-\kappa)+ \ell_{s}-\ell_{a}} = \vec{s}$ where the~$\left( \vec{B}_{i} \right)_{i=1}^{m(n-\kappa)+\ell_{s}-\ell_{a}}$ and~$T=\left( \vec{g}, \mathscr{B},(\vec{B}_i)_{i= m(n-\kappa)+\ell_{s}-\ell_{a}+1}^{m(n-\kappa) +\ell_{s}} \right)$ are output by Algorithm~\ref{algo:trap} \label{algo:sgn}}
 	\begin{algorithmic}[1]
 		
\Repeat
 		\State $\vec{t} \Unif \Fq^{\ell_a}$\label{inst:uniformT}
 		\State\label{inst:Y} Compute $\vec{Y} \in \Fq^{m \times n}$: $\left( \tr \left( \vec{Y}\vec{B}_{i}^{\top} \right)
 		\right)_{i=1}^{m(n-\kappa)+\ell_{s}} = (\vec{s},\vec{t}) $
 		\State $\vec{y} \leftarrow M_{\mathscr{B}}^{-1}\left( \vec{Y} \right)$   \Comment{$M_{\mathscr{B}}$ isometry defined in~\eqref{eq:matrix_representation}}
 		\State $\vec{e} \leftarrow \mathsf{Decode}^{\mathsf{Gab}}(\vec{y},\vec{g},\kappa)$ 	\Comment{$\mathsf{Decode}^{\mathsf{Gab}}$ as per Proposition~\ref{propo:decoGab}}
\Until $\vec{e} \neq \bot$ \texttt{and} $\left( \tr \left( M_{\mathscr{B}}\left( \vec{e}\right)\vec{B}_{i}^{\top} \right)\label{instruct:Until} 
 		\right)_{i=1}^{m(n-\kappa)+\ell_{s}} = (\vec{s},\vec{t}) $
 		\State \Return $M_{\mathscr{B}}\left( \vec{e} \right)$
 	\end{algorithmic}
  \end{algorithm}
  
 \begin{restatable}{proposition}{unifT}\label{propo:unifSamplingSgn}
Suppose that Algorithm~\ref{algo:sgn} with input $\vec{s}\in \Fq^{m(n-\kappa) +\ell_s- \ell_{a}}$ returns~$\vec{X} \in \Fq^{m \times n}$, then~$\vec{X}$ is uniformly random in the set:
\begin{equation}\label{eq:X_unif}
 	 \left\{ \vec{E} \in \mathcal{B}_{\frac{n-\kappa}{2}}^{m,n,q}: \; \left(\tr\left( \vec{E}\vec{B}_{i}^{\top} \right)\right)_{i=1}^{m(n-\kappa)+\ell_{s}-\ell_{a}} = \vec{s} \right\}.
 	\end{equation}
 \end{restatable}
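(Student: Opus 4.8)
The plan is to observe that, once the random vector $\vec{t}$ drawn at Instruction~\ref{inst:uniformT} is fixed, the body of the loop is entirely deterministic, then to pin down exactly which values of $\vec{t}$ make the loop exit, and finally to conclude from the memoryless nature of rejection sampling. Throughout, write $t\eqdef(n-\kappa)/2$, let $S$ denote the target set~\eqref{eq:X_unif}, set $\mathcal{C}\eqdef M_{\mathscr{B}}(\mathsf{Gab}(\vec{g},\kappa))$, and introduce the syndrome map
\[
  \psi:\ \vec{E}\in\Fq^{m\times n}\longmapsto\bigl(\tr(\vec{E}\vec{B}_i^{\top})\bigr)_{i=1}^{m(n-\kappa)+\ell_s}\in\Fq^{m(n-\kappa)+\ell_s}.
\]
Because $\vec{B}_1,\dots,\vec{B}_{m(n-\kappa)+\ell_s}$ is a basis of $\mathcal{C}_{s}^{\perp}$, this map is onto with $\ker\psi=\mathcal{C}_{s}$. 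First I would record a well-definedness remark: the matrix $\vec{Y}$ produced at Instruction~\ref{inst:Y} is only determined modulo $\mathcal{C}_{s}$, but since $\mathcal{C}_{s}\subseteq\mathcal{C}$ the output of $\mathsf{Decode}^{\mathsf{Gab}}$, which via the isometry $M_{\mathscr{B}}$ depends only on the class of $\vec{Y}$ modulo $\mathcal{C}$, does not depend on the chosen representative; neither does the test at line~\ref{instruct:Until}, which only reads $M_{\mathscr{B}}(\vec{e})$ and $(\vec{s},\vec{t})$. Thus one iteration is a genuine function of $\vec{t}$, the only source of randomness.

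Then I would characterise acceptance. For $\vec{E}\in S$, let $\vec{t}(\vec{E})$ be the last $\ell_a$ coordinates of $\psi(\vec{E})$, so that $\psi(\vec{E})=(\vec{s},\vec{t}(\vec{E}))$. I claim the loop exits with a given draw $\vec{t}$ exactly when $\vec{t}=\vec{t}(\vec{E})$ for some $\vec{E}\in S$, and that the value returned is then this $\vec{E}$. For the direct implication, $\psi(\vec{Y})=(\vec{s},\vec{t}(\vec{E}))=\psi(\vec{E})$ gives $\vec{Y}-\vec{E}\in\ker\psi=\mathcal{C}_{s}\subseteq\mathcal{C}$; since $|\vec{E}|\le t$ and $2t=n-\kappa<n-\kappa+1=d_{\textup{min}}(\mathcal{C})$, decoding $\vec{y}=M_{\mathscr{B}}^{-1}(\vec{Y})$ in $\mathsf{Gab}(\vec{g},\kappa)$ (equivalently, unique decoding of $\vec{Y}$ in $\mathcal{C}$) is unambiguous, so Proposition~\ref{propo:decoGab} returns $M_{\mathscr{B}}^{-1}(\vec{E})$, the test passes, and $\vec{E}$ is output. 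Conversely, if the test passes with output $\vec{E}$, then $|\vec{E}|\le t$ and $\psi(\vec{E})=\psi(\vec{Y})=(\vec{s},\vec{t})$, so $\vec{E}\in S$ and $\vec{t}=\vec{t}(\vec{E})$.

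Next I would upgrade this to a bijection: $\vec{E}\mapsto\vec{t}(\vec{E})$ is one-to-one on $S$, again by unique decoding, since $\vec{t}(\vec{E})=\vec{t}(\vec{E}')$ forces $\vec{E}-\vec{E}'\in\mathcal{C}_{s}\subseteq\mathcal{C}$ with both weights at most $t$, whence $\vec{E}=\vec{E}'$. Writing $T^{*}\eqdef\vec{t}(S)$, the previous paragraph says an iteration exits precisely when the drawn $\vec{t}$ lands in $T^{*}$, returning the unique preimage in $S$. Finally, since the loop repeatedly draws $\vec{t}\Unif\Fq^{\ell_a}$ and halts on the first hit of $T^{*}$, conditioning on termination makes the accepting $\vec{t}$ uniform over $T^{*}$; transporting along the bijection shows the returned $\vec{X}$ is uniform over $S$, which is the statement.

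The step I expect to be the main obstacle is the acceptance characterisation of the second paragraph: one has to keep scrupulously apart the two radii at play, namely being within distance $t$ of the \emph{full} Gabidulin code $\mathcal{C}$ (which controls whether $\mathsf{Decode}^{\mathsf{Gab}}$ returns anything) and being within distance $t$ of the \emph{subcode} $\mathcal{C}_{s}$ (which is what the line~\ref{instruct:Until} test actually enforces, through $\ker\psi=\mathcal{C}_{s}$), and to verify that the representative ambiguity in $\vec{Y}$ never interferes with either.
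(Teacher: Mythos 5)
Your proof is correct and follows essentially the same route as the paper's: establish that any output lies in the set~\eqref{eq:X_unif}, show via unique decoding (the minimum distance $n-\kappa+1$ of the ambient Gabidulin code) that the map $\vec{E}\mapsto\vec{t}(\vec{E})$ is a bijection from that set onto the accepting values of $\vec{t}$, and conclude from the uniformity of the draws of $\vec{t}$. Your additional remark that the representative $\vec{Y}$ computed at Instruction~\ref{inst:Y} is only determined modulo $\mathcal{C}_s$ but that this ambiguity cannot affect the decoder's output is a welcome precision the paper leaves implicit.
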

\begin{proof}
 	First, notice that if Algorithm~\ref{algo:sgn} outputs $\vec{X}$, then by definition $M_{\mathscr{B}}^{-1}(\vec{X})$ is output by~$\mathsf{Decode}^{\mathsf{Gab}}$ as defined in Proposition~\ref{propo:decoGab} and it is $\neq \bot$ as ensured by Instruction~\ref{instruct:Until} of Algorithm~\ref{algo:sgn}. Therefore, 
 	\begin{equation}\label{eq:cdtPoids} 
 	|\vec{X}| = \left| M_{\mathscr{B}}^{-1}(\vec{X}) \right| \leq \frac{n-\kappa}{2}~\cdot
 	\end{equation} 
 	Furthermore, as also ensured by Instruction~\ref{instruct:Until}, we have $\left( \tr \left( \vec{X}\vec{B}_{i}^{\top} \right) \right)_{i=1}^{m(n-\kappa)+\ell_{s}} = \left( \vec{s},\vec{t} \right)$ for some~$\vec{t} \in \Fq^{\ell_{a}}$. In particular, 
 	\begin{equation}\label{eq:cdtLin}  
 	\left( \tr \left( \vec{X}\vec{B}_{i}^{\top} \right) \right)_{i=1}^{m(n-\kappa)+\ell_{s}-\ell_{a}} = \vec{s} \ .
 	\end{equation} 
 	In other words, Equations \eqref{eq:cdtPoids} and \eqref{eq:cdtLin} show that $\vec{X}$ is sampled in the claimed set (\ref{eq:X_unif}). Let us now show that it is sampled \emph{uniformly at random} among this set. Let~$\vec{E}_1,\dots, \vec{E}_{N} \in \mathcal{B}_{\frac{n-\kappa}{2}}^{m,n,q}$ 
be the elements of the set (\ref{eq:X_unif}).  Let~$\vec{t}_1,\dots,\vec{t}_N \in \Fq^{\ell_a}$ be such that 
 	$$
 	\forall j \in [1,N], \quad \left( \tr\left( \vec{E}_{j}\vec{B}_{i}^{\top} \right) \right)_{i=1}^{m(n-\kappa)+\ell_{s}} = (\vec{s},\vec{t}_j)\ .
 	$$
 	Let us show that the $\vec{t}_{j}$'s are all distinct. Let $\vec{t}_{j_0} = \vec{t}_{j_1}$ for some $j_0, j_1 \in [1,N]$. Then we have, 
 	$$
 	{\left( \tr\left( \left( \vec{E}_{j_0} - \vec{E}_{j_1} \right) \vec{B}_{i}^{\top}\right)\right)}_{i=1}^{m(n-\kappa)+\ell_{s}} = \vec{0}\ ,
 	$$
 	{\it i.e.,} $\vec{E}_{j_0} - \vec{E}_{j_1} \in \mathcal{C}_{s}$ as $(\vec{B}_{i})_{i=1}^{m(n-\kappa) + \ell_{s}}$ forms a dual basis of $\mathcal{C}_{s}$ as ensured by Algorithm~\ref{algo:trap}. But by the triangle inequality,
 	$$
 	\left|\vec{E}_{j_0} - \vec{E}_{j_1}\right| \leq 2\frac{n-\kappa}{2} = n-\kappa < n-\kappa+1 \ .
 	$$
    Since the minimum distance of $\mathcal{C}_{s} \subseteq
    M_{\mathscr{B}}\left(
      \mathsf{Gab}(\vec{g},\kappa)\right)$ is greater than
    $n-\kappa+1$ (which is the minimum distance of
    $\mathsf{Gab}(\vec{g},\kappa)$), this proves that $\vec{E}_{j_0} =
    \vec{E}_{j_1}$. Therefore,~$j_0 = j_1$ and all the
    $\vec{t}_j$'s are distinct.
 	
 	Notice now that in Instruction~\ref{inst:uniformT}, if we do not sample a $\vec{t}_{j}$, then there is no chance to pass Instruction~\ref{instruct:Until}. On the other hand, if we sample $\vec{t}_{j}$ for some~$j \in [1,N]$, then the algorithm will output~$\vec{E}_{j}$. Indeed, let $\vec{Y}_{j}$ as computed in Instruction~\ref{inst:Y} for $\vec{t} = \vec{t}_j$, then $\vec{Y}_{j} -\vec{E}_{j} \in \mathcal{C}_{s} \subseteq M_{\mathscr{B}}\left( \mathsf{Gab}(\vec{g},\kappa)\right)$. Therefore, we have $M_{\mathscr{B}}^{-1}\left( \vec{Y} \right) - M_{\mathscr{B}}^{-1}(\vec{E}_j) \in \mathsf{Gab}(\vec{g},\kappa)$ and by Proposition~\ref{propo:decoGab}, given~$M_{\mathscr{B}}^{-1}\left( \vec{Y} \right)$, $\mathsf{Decode}^{\mathsf{Gab}}$ will necessarily output~$M_{\mathscr{B}}^{-1}(\vec{E}_j)$ (it has rank $\leq (n-\kappa)/2$) which will pass Instruction~\ref{instruct:Until}. Therefore, the probability that~$\vec{X} = \vec{E}_{j}$ is equal to the probability that $\vec{t} = \vec{t}_{j}$ in Instruction~\ref{inst:uniformT} which is independent from~$j$ as $\vec{t}$ is sampled uniformly at random. This concludes the proof. \iftoggle{llncs}{\qed}{}
 \end{proof}

 \begin{remark}\label{rem:unif}
   Notice that in the proof of the above proposition we only used the
   fact that we can decode a Gabidulin code in its unique decoding
   regime of parameters, {\it i.e.,} for noisy codewords where the
   error's rank is less than half the minimum distance of the underlying code.
 \end{remark}

Up to now we have proved that Algorithm~\ref{algo:sgn} is correct: any of its outputs $\vec{E}$ has rank $t \leq (n-\kappa)/2$ and verifies $\left( \tr\left( \vec{E}\vec{B}_{i}^{\top} \right) \right)_{i=1}^{m(n-\kappa) + \ell_{s} - \ell_{a}} = \vec{s}$ where $\vec{s}$ is the hash of the message to be signed. However we did not give its average running time. 
All the difficulty is that given some~$\vec{s} \in \Fq^{m(n-\kappa) + \ell_{s}-\ell_{a}}$, we are looking for some $\vec{t} \in \Fq^{\ell_a}$ such that there exists $\vec{E} \in \mathcal{B}_{t}$ verifying
$$
\left( \tr\left( \vec{E}\vec{B}_{i}^{\top} \right) \right)_{i=1}^{m(n-\kappa) + \ell_{s} -\ell_a} = \left( \vec{s},\vec{t} \right). 
$$
In the following proposition we give (in average over the algorithm input $\vec{s}$) the average number of guesses of $\vec{t}$ before finding one for which the algorithm terminates.

\begin{restatable}{proposition}{runningtime}\label{propo:runningTime}
  Let $\vec{s} \Unif \Fq^{m(n-\kappa) + \ell_{s}-\ell_{a}}$ be an input of Algorithm~\ref{algo:sgn} and we let $T_{\vec{s}}$ denote its average running-time over its internal randomness. We have
	$$
	\mathbb{E}_{\vec{s}}\left( 1/T_{\vec{s}} \right) = \Theta\left( \frac{1}{m^{3}n^{3}} \cdot \frac{\sharp \mathcal{B}_{\frac{n-\kappa}{2}}^{m,n,q} }{q^{m(n-\kappa) +\ell_{s}}} \right) = \Theta\left(  \frac{1}{m^{3}n^{3}} \cdot q^{  \frac{\left( n-\kappa\right) \cdot \left( m + n  + \kappa\right)}{4} - \ell_{s} } \right) ~\cdot
$$
\end{restatable}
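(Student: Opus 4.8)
The plan is to decouple the cost of a single loop iteration of Algorithm~\ref{algo:sgn} from the expected number of iterations, and then average over $\vec{s}$. Fix an input $\vec{s}$ and let $N_{\vec{s}}$ denote the cardinality of the set \eqref{eq:X_unif}. The analysis already carried out in the proof of Proposition~\ref{propo:unifSamplingSgn} does most of the work: it shows that the map sending a solution $\vec{E}_j$ to the vector $\vec{t}_j \in \Fq^{\ell_a}$ with $\left(\tr(\vec{E}_j\vec{B}_i^{\top})\right)_{i=1}^{m(n-\kappa)+\ell_s} = (\vec{s},\vec{t}_j)$ is injective, and conversely that drawing any one of these $N_{\vec{s}}$ vectors $\vec{t}_j$ in Instruction~\ref{inst:uniformT} makes the loop terminate (outputting $\vec{E}_j$), while any other choice of $\vec{t}$ fails the test of Instruction~\ref{instruct:Until}. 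Hence a single iteration succeeds with probability $p_{\vec{s}} = N_{\vec{s}}/q^{\ell_a}$, the iterations are mutually independent, and the number of iterations before termination is geometric of mean $1/p_{\vec{s}}$ (infinite when $N_{\vec{s}}=0$).

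First I would bound the cost $C$ of one iteration. Sampling $\vec{t}$ is negligible; computing $\vec{Y}$ in Instruction~\ref{inst:Y} amounts to solving, for a fresh right-hand side, an $\Fq$-linear system of $m(n-\kappa)+\ell_s = \Theta(mn)$ equations in the $mn$ entries of $\vec{Y}$, which by Gaussian elimination costs $\Theta((mn)^3)=\Theta(m^3n^3)$ operations in $\Fq$; the remaining steps (the change of basis $M_{\mathscr{B}}^{-1}$, the Gabidulin decoding of Proposition~\ref{propo:decoGab}, and the trace check of Instruction~\ref{instruct:Until}) all cost $O(m^2n^2)$ and are of lower order. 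Thus every iteration costs $C=\Theta(m^3n^3)$, and the expected running time is $T_{\vec{s}} = C\cdot(1/p_{\vec{s}})$, so that
\begin{equation*}
\frac{1}{T_{\vec{s}}} = \Theta\!\left( \frac{p_{\vec{s}}}{m^{3}n^{3}} \right) = \Theta\!\left( \frac{N_{\vec{s}}}{m^{3}n^{3}\, q^{\ell_a}} \right).
\end{equation*}
This identity also handles the degenerate inputs $\vec{s}$ with $N_{\vec{s}}=0$, for which $T_{\vec{s}}=\infty$ and both sides vanish; this is precisely why averaging $1/T_{\vec{s}}$, rather than $T_{\vec{s}}$, yields a clean expression.

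Since $1/T_{\vec{s}}$ is, up to the constant $\Theta(1/(m^3n^3q^{\ell_a}))$, linear in $N_{\vec{s}}$, it remains to compute $\mathbb{E}_{\vec{s}}(N_{\vec{s}})$ for $\vec{s}$ uniform over $\Fq^{m(n-\kappa)+\ell_s-\ell_a}$. I would evaluate this by double counting: summing $N_{\vec{s}}$ over all $q^{m(n-\kappa)+\ell_s-\ell_a}$ values of $\vec{s}$ counts each pair $(\vec{s},\vec{E})$ with $\vec{E}\in\mathcal{B}_{\frac{n-\kappa}{2}}^{m,n,q}$ and $\left(\tr(\vec{E}\vec{B}_i^{\top})\right)_{i=1}^{m(n-\kappa)+\ell_s-\ell_a}=\vec{s}$ exactly once, and every $\vec{E}$ in the ball occurs for a unique $\vec{s}$; hence $\sum_{\vec{s}}N_{\vec{s}} = \sharp\mathcal{B}_{\frac{n-\kappa}{2}}^{m,n,q}$ and
\begin{equation*}
\mathbb{E}_{\vec{s}}\!\left( N_{\vec{s}} \right) = \frac{\sharp \mathcal{B}_{\frac{n-\kappa}{2}}^{m,n,q}}{q^{m(n-\kappa)+\ell_s-\ell_a}}.
\end{equation*}
Substituting into the previous display, the factor $q^{\ell_a}$ cancels and gives the first claimed estimate
\begin{equation*}
\mathbb{E}_{\vec{s}}\!\left( 1/T_{\vec{s}} \right) = \Theta\!\left( \frac{1}{m^{3}n^{3}} \cdot \frac{\sharp \mathcal{B}_{\frac{n-\kappa}{2}}^{m,n,q}}{q^{m(n-\kappa)+\ell_s}} \right).
\end{equation*}

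Finally, the closed form follows by inserting the standard estimate $\sharp\mathcal{B}_{t}^{m,n,q} = \Theta(q^{t(m+n-t)})$ for the rank-metric ball (whose dominant term is the sphere of radius $t$, counted through Gaussian binomials) with $t=\frac{n-\kappa}{2}$, and simplifying the exponent. The main obstacle is the cost model: one must argue that the per-iteration cost is genuinely $\Theta(m^3n^3)$ in both directions, in particular that re-solving for $\vec{Y}$ dominates and that no step is cheaper or more expensive by more than a constant factor, and that the ball-size estimate is tight up to the polynomial factors absorbed into the $\Theta$, so that both $\Theta$'s in the statement are sharp and not merely upper bounds.
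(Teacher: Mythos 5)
Your proof is correct and follows essentially the same route as the paper's: both identify the per-iteration cost as $\Theta(m^3n^3)$ dominated by the computation of $\vec{Y}$, use the unique-decoding/injectivity argument of Proposition~\ref{propo:unifSamplingSgn} to show that an iteration succeeds exactly when $(\vec{s},\vec{t})$ hits one of the $\sharp\mathcal{B}_{\frac{n-\kappa}{2}}^{m,n,q}$ attainable syndromes, and then average over $\vec{s}$ (your double count of the pairs $(\vec{s},\vec{E})$ is just the paper's computation of $\mathbb{P}_{\vec{s},\vec{t}}\left((\vec{s},\vec{t})\in\mathcal{S}\right)$ written fibre by fibre). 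One caveat: if you actually carry out the final exponent simplification you announce, $\sharp\mathcal{B}_{\frac{n-\kappa}{2}}^{m,n,q}/q^{m(n-\kappa)+\ell_s}$ gives $q^{\frac{(n-\kappa)(n+\kappa-2m)}{4}-\ell_s}$ rather than the exponent printed in the statement, which appears to be a typo (for $m=n$ it reduces to $q^{-t^2-\ell_s}$, matching the densities reported in Table~\ref{mir128}).
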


\iftoggle{llncs}{\begin{proof}
      Appendix~\ref{sec:runningTime}.
      \qed
    \end{proof}
}{
    \iftoggle{llncs}{
\section{Proof of Proposition~\ref{propo:runningTime}}\label{sec:runningTime}
\runningtime*}{}

\begin{proof}
	First notice that all the 
	$$
	\left( \tr\left( \vec{E}\vec{B}_{i}^{\top} \right) \right)_{i=1}^{m(n-\kappa) + \ell_{s}}
	$$
	for $|\vec{E}| \leq (n-\kappa)/2$ are distinct as the minimum distance of $\mathcal{C}_{s} \subseteq \mathsf{Gab}(\vec{g},\kappa)$ is greater than $n-\kappa+1$. Therefore there are $\sharp \mathcal{B}_{(n-\kappa)/2}$ such vectors of traces. Let $\mathcal{S}$ denote this set of vectors.

	Given an input $\vec{s} \in \Fq^{m(n-\kappa)+\ell_{s}-\ell_{a}}$, let $G_{\vec{s}}$ be the average number of trials in Instruction~\ref{inst:uniformT} before guessing $\vec{t} \in \Fq^{\ell_{a}}$ such that the algorithm terminates, {\it i.e.,} before guessing $\vec{t}$ such that
	$$
	(\vec{s},\vec{t}) \in \mathcal{S}\ .
	$$
	By definition, 
	$$
	1/G_{\vec{s}} = \mathbb{P}_{\vec{t}}\left( (\vec{s},\vec{t}) \in \mathcal{S} \right)
	$$
	and as $\vec{s},\vec{t}$ are uniform 
	\begin{align*}
		\mathbb{E}_{\vec{s}}\left( 1/G_{\vec{s}} \right)& = \mathbb{P}_{\vec{t},\vec{s}}\left( (\vec{s},\vec{t}) \in \mathcal{S}  \right) = \frac{\sharp \mathcal{S}}{q^{m(n-\kappa)+\ell_{s}}} \\ &=  \frac{\sharp \mathcal{B}_{(n-\kappa)/2}}{q^{m(n-\kappa)+\ell_{s}}} = \Theta\left( q^{  \frac{\left( n-\kappa\right) \cdot \left( m + n  + \kappa\right)}{4} - \ell_{s} } \right)  \ .
	\end{align*} 
	To conclude, notice that each iteration of Algorithm~\ref{algo:sgn} costs $O(m^{3}n^{3})$ which corresponds to the cost for computing~$\vec{Y}$ in Instruction~\ref{inst:Y} (it is the dominant cost).\iftoggle{llncs}{\qed}{}
\end{proof}
     }
The above proposition shows how to chose parameters such that the average cost of Algorithm~\ref{algo:sgn}, {\it i.e.,} the number of guesses of $\vec{t}$ in Instruction~\ref{inst:uniformT}, is not too large (our parameter choice in Section~\ref{sec:param} will ensure that it will not exceed $2^{40}$ and it will typically ranging between $2^{9}$ and $2^{37}$). 
\newline

 {\bf \noindent No leakage of $\miranda$ signatures.} We have instantiated $\sampPre$ associated to $\miranda$ in Algorithm~\ref{algo:sgn} and studied its average running time. However, we have not proved that this algorithm does not leak any information about the trapdoor $T$ given as input, namely that $\miranda$ satisfies Property~\eqref{eq:atpsf} of Definition~\ref{def:rankATPS}. Our aim in what follows is to prove it. This will prove that $\miranda$ is a rank-based average trapdoor preimage sampleable function (see Definition~\ref{def:rankATPS}).

 It turns out that we already have at our disposal the key-ingredient to prove that $\miranda$ signatures do not leak information on their underlying trapdoor: for a fixed input $\vec{s}$, Algorithm~\ref{algo:sgn} either outputs one of the possible signatures~$\vec{E}$ uniformly at random, or it fails (after testing all the~$\vec{t} \in \Fq^{\ell_a}$ in Instruction~\ref{inst:uniformT}) as shown by Proposition~\ref{propo:unifSamplingSgn}. This is indeed almost enough to prove that $\miranda$ verifies Equation \eqref{eq:atpsf}. The following generic lemma shows that we only need to prove that for~$\vec{E} \Unif \mathcal{B}_{t}$, the distribution of $\fow(\vec{E})$ (see Equation~\eqref{eq:fToInvert}) is close to the uniform distribution on~$\Fq^{m(n-\kappa) +\ell_{s}-\ell_{a}}$ with respect to the statistical distance.
Indeed, 
the statistical distance between~$\vec{E}_{\vec{s}}$ (notation `$x$' in Lemma~\ref{lemma:invertF} below) being an output of Algorithm~\ref{algo:sgn}, {\it i.e.,} an inverse of~$\fow(\vec{s})$, for input~$\vec{s}\Unif \Fq^{m(n-\kappa)+\ell_{s}-\ell_{a}}$ (notation `$y$' in the lemma) and~$\vec{E} \Unif \mathcal{B}_{t}$ (notation `$x_u$' in Lemma~\ref{lemma:invertF} below) is equal to the statistical distance between $\vec{s}$ and~$\fow(\vec{E})$.

 \begin{restatable}{lemma}{invertF}\label{lemma:invertF}
 	Let $f: \Dmat \rightarrow \mathcal{F}$ where $\Dmat,\mathcal{F}$ are some finite sets. Let, $y \Unif \mathcal{F}$ and~$x_u\Unif \Dmat$. Let~$x \Unif f^{-1}(y)$ but in the case where $f^{-1}(y) = \emptyset$ then by convention $x = \bot \notin \Dmat$. 
 	We have 
$$
 	\Delta(x,x_u) =\Delta\left( y, f(x_u) \right).
 	$$
 \end{restatable}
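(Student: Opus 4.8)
The plan is to prove the identity by an explicit computation of both statistical distances, after first pinning down the laws of the relevant random variables. Write $\sharp$ for cardinality and set $\mathrm{Im}(f) \eqdef f(\Dmat)$. The first step is to determine the distribution of $x$ on the finite set $\Dmat \cup \{\bot\}$. Conditioning on the value of $y$ and using that $y$ is uniform on $\mathcal{F}$ while $x$ is uniform on the fibre $f^{-1}(y)$, one gets for every $a \in \Dmat$
$$\Prob(x = a) = \frac{1}{\sharp \mathcal{F}} \cdot \frac{1}{\sharp f^{-1}(f(a))},$$
where the fibre $f^{-1}(f(a))$ is nonempty since it contains $a$, together with
$$\Prob(x = \bot) = \frac{\sharp \mathcal{F} - \sharp \mathrm{Im}(f)}{\sharp \mathcal{F}},$$
which is just the probability that $y$ lands outside the image of $f$. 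On the other side, the law of $f(x_u)$ is immediate: $\Prob(f(x_u) = b) = \sharp f^{-1}(b)/\sharp \Dmat$ for every $b \in \mathcal{F}$, while $y$ is uniform.

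Next I would expand $2\Delta(x, x_u)$ according to the definition \eqref{eq:def1SD}, taken over the sample space $\Dmat \cup \{\bot\}$; it splits into a sum over $a \in \Dmat$ plus the single term $\Prob(x = \bot)$, since $x_u$ never equals $\bot$. The key manipulation is to group the $a$-sum by the fibre to which $a$ belongs: for fixed $b \in \mathrm{Im}(f)$, every $a \in f^{-1}(b)$ contributes the same quantity, so the inner sum contains $\sharp f^{-1}(b)$ equal terms, and one uses the elementary identity
$$\sharp f^{-1}(b) \cdot \left| \frac{1}{\sharp \mathcal{F} \cdot \sharp f^{-1}(b)} - \frac{1}{\sharp \Dmat} \right| = \left| \frac{1}{\sharp \mathcal{F}} - \frac{\sharp f^{-1}(b)}{\sharp \Dmat} \right|,$$
valid because $\sharp f^{-1}(b) > 0$. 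This turns the $a$-sum into $\sum_{b \in \mathrm{Im}(f)} \left| \frac{1}{\sharp \mathcal{F}} - \frac{\sharp f^{-1}(b)}{\sharp \Dmat} \right|$.

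It then remains to account for the $\bot$ term. For $b \notin \mathrm{Im}(f)$ one has $\sharp f^{-1}(b) = 0$, hence $\left| \frac{1}{\sharp \mathcal{F}} - \frac{\sharp f^{-1}(b)}{\sharp \Dmat} \right| = \frac{1}{\sharp \mathcal{F}}$, and summing over all such $b$ gives exactly $(\sharp \mathcal{F} - \sharp \mathrm{Im}(f))/\sharp \mathcal{F} = \Prob(x = \bot)$. Thus the $\bot$ term is precisely the missing part of the sum, and reinstating it yields
$$2\Delta(x, x_u) = \sum_{b \in \mathcal{F}} \left| \frac{1}{\sharp \mathcal{F}} - \frac{\sharp f^{-1}(b)}{\sharp \Dmat} \right| = 2\Delta(y, f(x_u)),$$
which is the claimed equality.

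The only genuinely delicate point is this bookkeeping of the symbol $\bot$: one must be careful that $\Delta(x, x_u)$ is computed over $\Dmat \cup \{\bot\}$, and recognise that the $\bot$-mass of $x$ is exactly what restores the zero-size fibres, i.e. the $b \in \mathcal{F} \setminus \mathrm{Im}(f)$, that are silently dropped when one sums only over $a \in \Dmat$. Once the two distributions are written down, everything else is a routine rearrangement of a single absolute-value sum.
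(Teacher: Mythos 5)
Your proof is correct and follows essentially the same route as the paper's: both expand the statistical distance over $\Dmat \cup \{\bot\}$, group the sum over $\Dmat$ by the fibres of $f$ using that all terms within a given fibre coincide, and identify the $\bot$-mass of $x$ with the contribution of the points of $\mathcal{F}$ lying outside $f(\Dmat)$. The only cosmetic difference is that you write the laws of $x$ and $f(x_u)$ out explicitly before rearranging, whereas the paper keeps the probabilities symbolic throughout.
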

\iftoggle{llncs}{
 \begin{proof}
 	See Appendix~\ref{app:distStat}. \iftoggle{llncs}{\qed}{} 
    \end{proof}
  }
  {
    \iftoggle{llncs}{
\section{Proof of Lemma \ref{lemma:invertF}}\label{app:distStat} 

\invertF* }{}

\begin{proof}
  By definition of the statistical distance and since $\prob(x_u=\bot) = 0$ we get
  \[
    \Delta (x, x_u) = \left| \prob(x = \bot) \right| + \sum_{x_0\in \mathcal D} \left| \prob (x=x_0) - \prob(x_u=x_0)\right|~.
  \]
  Therefore,
  \begin{equation}\label{eq:distance_stat}
    \Delta (x, x_u) =   \left| \prob(y \notin f(\mathcal{D})) \right|+ \sum_{y_0 \in f(\mathcal{D})} \sum_{x_0 \in f^{-1}(\{y_0\})} \left| \prob (x=x_0) - \prob(x_u=x_0)\right|~. 
  \end{equation}
  Now, note that for a given $y_0 \in \mathcal F$, the quantity
  $\left| \prob (x=x_0) - \prob(x_u=x_0)\right|$ is the same for any~$x_0 \in f^{-1}(\{y_0\})$.
  Consequently, fix $y_0 \in \mathcal F$ and $x_1 \in f^{-1}(\{y_0\})$, then
  \begin{align*}
    \sum_{x_0\in f^{-1}(\{y_0\})} \left| \prob (x=x_0) - \prob(x_u=x_0)\right| & =
                                                                                 \sharp f^{-1}(\{y_0\}) \left| \prob (x=x_1) - \prob(x_u=x_1)\right| \\
                                                                               & =  \left| \sharp f^{-1}(\{y_0\}) \prob (x=x_1) - \sharp f^{-1}(\{y_0\}) \prob(x_u=x_1)\right|\\
                                                                               & = \left| \prob(x \in f^{-1}(y_0)) - \prob(x_u \in f^{-1}(y_0)) \right| \\
    & = \left| \prob(y = y_0) - \prob(f(x_u)= y_0) \right|    
  \end{align*}
  Back to (\ref{eq:distance_stat}), we deduce:
  \begin{align*}
    \Delta (x,x_u) & = \left| \prob(y \notin f(\mathcal D)) \right|
                     + \sum_{y_0 \in f(\mathcal D)} \left| \prob(y = y_0) - \prob(f(x_u)= y_0) \right|\\
    & = \sum_{y_0 \in \mathcal F} \left| \prob(y = y_0) - \prob(f(x_u)= y_0) \right| = \Delta(y, f(x_u))~.
  \end{align*}\iftoggle{llncs}{\qed}{}
\end{proof}

   }

The above lemma shows that $\miranda$ signatures do not leak any information, namely that Equation \eqref{eq:atpsf} is verified, under the condition that,
$$
\mathbb{E}_{{(\Bv_i)}_i}
\left( \Delta\left( \fow(\vec{E}),\vec{s} \right) \right) \in \negl(\lambda)
$$ 
where $\vec{E} \Unif \mathcal{B}_{\frac{n-\kappa}{2}}$, $\vec{s} \Unif \Fq^{m(n-\kappa)+\ell_{s}-\ell_{a}}$ and the $\vec{B}_{i}$'s being output by Algorithm~\ref{algo:trap}.
It turns out that proving this can be done quite easily via the so-called {\em left-over hash lemma} over the functions~$\fow$ which are implicitly a family of (hash) functions  indexed by the matrices $\vec{B}_{i}$'s.
\begin{restatable}{proposition}{propositionLHL}
  \label{proposition:domainSampleability}
	Denote by
	$\left(\vec{B}_{i}\right)_{i=1}^{m(n-\kappa)+\ell_{s}-\ell_{a}}$ the outputs of Algorithm~\ref{algo:trap}. We have,
	\[
	\mathbb{E}_{\left(
		\vec{B}_{i}\right)_{i=1}^{m(n-\kappa)+\ell_{s}-\ell_{a}}}\left(\Delta\left( \left( \tr
	\left(\vec{E}\vec{B}^{\top}_{i}\right)\right)_{i=1}^{m(n-\kappa) + \ell_{s}-\ell_{a}},\vec{s}
	\right)\right) = O\left( \sqrt{\frac{q^{m(n-\kappa)+\ell_{s}-\ell_{a}}}{\sharp \mathcal{B}_{\frac{n-\kappa}{2}}^{m,n,q}}}
	\right)~,\]
	where $\vec{E} \Unif \mathcal{B}_{\frac{n-\kappa}{2}}^{m,n,q}$ and
	$\vec{s} \Unif \Fq^{m(n-\kappa) + \ell_{s}-\ell_{a}}$.
\end{restatable}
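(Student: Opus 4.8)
The plan is to recognise $\fow$ as a member of a family of (almost) universal hash functions indexed by the output of Algorithm~\ref{algo:trap}, and then to run a leftover-hash-lemma argument. Write $r\eqdef m(n-\kappa)+\ell_{s}-\ell_{a}$ and $t\eqdef(n-\kappa)/2$, and let $\Dmat=\mathcal{C}_{s}\oplus\Amat$ be the Add-and-Remove code produced by the trapdoor, so that the map $\vec{E}\mapsto(\tr(\vec{E}\vec{B}_{i}^{\top}))_{i=1}^{r}$ is surjective onto $\Fq^{r}$ with kernel exactly $\Dmat$. Since two bases of $\Dmat^{\perp}$ yield maps differing by an invertible linear change of coordinates on $\Fq^{r}$ (which preserves the distance to the uniform $\vec{s}$), the quantity $\Delta(\fow(\vec{E}),\vec{s})$ depends on the output of Algorithm~\ref{algo:trap} only through $\Dmat$. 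First I would record, from Cauchy--Schwarz together with the concavity of the square root, the inequality
\[
\mathbb{E}_{\Dmat}\left(\Delta\left(\fow(\vec{E}),\vec{s}\right)\right)\leq \frac{1}{2}\sqrt{q^{r}\,\gamma-1},\qquad \gamma\eqdef\mathbb{P}_{\Dmat,\vec{E},\vec{E}'}\left(\vec{E}-\vec{E}'\in\Dmat\right),
\]
where $\vec{E},\vec{E}'\Unif\mathcal{B}_{t}^{m,n,q}$ are independent. Everything then reduces to the near-universality bound $\gamma\leq \frac{1}{\sharp\mathcal{B}_{t}^{m,n,q}}+\frac{1}{q^{r}}$.

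To estimate $\gamma$, I would isolate the diagonal term $\vec{E}=\vec{E}'$, contributing $1/\sharp\mathcal{B}_{t}^{m,n,q}$. For the off-diagonal terms, set $\vec{M}\eqdef\vec{E}-\vec{E}'\neq\vec{0}$; the triangle inequality gives $|\vec{M}|\leq 2t=n-\kappa$. The crucial observation -- exactly the minimum-distance argument already used in the proof of Proposition~\ref{propo:unifSamplingSgn} -- is that the matrix Gabidulin code $\mathcal{C}$ has minimum distance $n-\kappa+1$, so every nonzero matrix of rank at most $n-\kappa$ lies outside $\mathcal{C}$, hence outside $\mathcal{C}_{s}\subseteq\mathcal{C}$. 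Thus, for every off-diagonal $\vec{M}$, membership $\vec{M}\in\Dmat$ forces a non-trivial $\Amat$-component and the structured part $\mathcal{C}_{s}$ never produces an off-diagonal collision. This is true for every realisation of $\mathcal{C}$ and $\mathcal{C}_{s}$, so only the randomness of $\Amat$ remains to be exploited.

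It then remains to bound $\mathbb{P}_{\Amat}(\vec{M}\in\mathcal{C}_{s}\oplus\Amat)$ for a fixed $\vec{M}\notin\mathcal{C}$. I would pass to the quotient $V\eqdef\Fq^{m\times n}/\mathcal{C}_{s}$, where $\vec{M}\in\mathcal{C}_{s}\oplus\Amat$ becomes $\bar{\vec{M}}\in\bar\Amat$, the bars denoting images in $V$; note $\dim\bar\Amat=\ell_{a}$ because $\Amat\cap\mathcal{C}_{s}=\{\vec 0\}$, and $\bar{\vec{M}}\notin W\eqdef\mathcal{C}/\mathcal{C}_{s}$ since $\vec{M}\notin\mathcal{C}$. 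The constraint $\Amat\cap\mathcal{C}=\{\vec 0\}$ translates into $\bar\Amat\cap W=\{\vec 0\}$, and the correspondence $\Amat\mapsto\bar\Amat$ makes $\bar\Amat$ uniform among the $\ell_{a}$-dimensional subspaces of $V$ meeting $W$ trivially. Averaging $\sharp(\bar\Amat\setminus W)=q^{\ell_{a}}-1$ over the $q^{\dim V}-q^{\ell_{s}}$ elements of $V\setminus W$ -- using that the stabiliser of $W$ in $\GL(V)$ acts transitively on $V\setminus W$ -- gives the exact value
\[
\mathbb{P}_{\Amat}\left(\vec{M}\in\Dmat\right)=\frac{q^{\ell_{a}}-1}{q^{\dim V}-q^{\ell_{s}}}=\frac{q^{\ell_{a}}-1}{q^{\ell_{s}}\left(q^{m(n-\kappa)}-1\right)}\leq\frac{1}{q^{r}},
\]
the last inequality using $\ell_{a}\leq m(n-\kappa)$, which is forced by $\Amat\cap\mathcal{C}=\{\vec 0\}$. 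Summing the diagonal and off-diagonal contributions yields $q^{r}\gamma-1\leq q^{r}/\sharp\mathcal{B}_{t}^{m,n,q}$, and the announced $O\big(\sqrt{q^{r}/\sharp\mathcal{B}_{t}^{m,n,q}}\big)$ bound follows. I expect the main obstacle to be precisely this last paragraph: proving rigorously that the projection $\bar\Amat$ of the constrained random code $\Amat$ is \emph{exactly} uniform among the subspaces of $V$ avoiding $W$, so that the clean symmetry computation applies and no spurious additive error term survives in the final estimate.
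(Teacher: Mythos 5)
Your proof is correct and follows the paper's overall strategy: a leftover-hash-lemma bound that reduces everything to the collision probability $\gamma$ of the syndrome map on $\mathcal{B}_{\frac{n-\kappa}{2}}^{m,n,q}$, with the diagonal term isolated and the off-diagonal term controlled by the observation that $|\vec{E}-\vec{E}'|\leq n-\kappa$ lies below the minimum distance of the Gabidulin code, so $\vec{E}-\vec{E}'\notin\mathcal{C}_s$. Where you genuinely diverge is in how the off-diagonal collision probability is evaluated. The paper (Lemmas~\ref{lemma:probFund} and~\ref{lemma:probaCol}) randomizes over the \emph{dual basis}: it completes the public basis of $\Dmat^\perp$ to a basis of $\mathcal{C}_s^\perp$, uses the invariance of the distribution of random dual bases under right multiplication by $\GL$ to argue that the extended syndrome of a fixed low-rank $\vec{M}$ is uniform over nonzero vectors, and sums over the $q^{\ell_a}-1$ admissible syndromes to obtain $\frac{q^{\ell_a}-1}{q^{m(n-\kappa)+\ell_s}-1}$. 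You instead condition on $\mathcal{C}$ and $\mathcal{C}_s$ and exploit only the randomness of $\Amat$, via an orbit computation in the quotient $\Fq^{m\times n}/\mathcal{C}_s$, getting the exact value $\frac{q^{\ell_a}-1}{q^{m(n-\kappa)+\ell_s}-q^{\ell_s}}$. The two constants differ by a factor $1+O\bigl(q^{-m(n-\kappa)}\bigr)$, which is absorbed by the big-$O$, so both suffice; your version is in fact the sharper one (it correctly accounts for the constraint $\Amat\cap\mathcal{C}=\{\vec 0\}$, which the paper's direct appeal to Lemma~\ref{lemma:probFund} glosses over) and makes explicit that no randomness from the Gabidulin code itself is needed. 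The uniformity of $\bar{\Amat}$ that you flag as the delicate point does hold: for every admissible $\bar{\Amat}$, the fibre of $\Amat\mapsto\bar{\Amat}$ consists exactly of the complements of $\mathcal{C}_s$ in its preimage (the extra condition $\Amat\cap\mathcal{C}=\{\vec 0\}$ is automatic there because the preimage meets $\mathcal{C}$ only in $\mathcal{C}_s$ once $\bar{\Amat}\cap W=\{\vec 0\}$), and all these fibres have the same cardinality $q^{\ell_a\dim\mathcal{C}_s}$.
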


\iftoggle{llncs}{
\begin{proof}
	See Appendix~\ref{app:proofPropoLHL}.\iftoggle{llncs}{\qed}{} 
    \end{proof}
  }{
    
\iftoggle{llncs}{
\section{Proof of Proposition \ref{proposition:domainSampleability}}\label{app:proofPropoLHL}

\propositionLHL* 
}{}

	The proof of Proposition~\ref{proposition:domainSampleability} will rely on the leftover hash lemma \cite{BDKPPS11} (for a proof of this statement in its current form see for instance \cite[Ch.~2, \S 2.5, Lem.~2.5.1]{D23}). 
\begin{lemma}[Leftover hash lemma]\label{lemme:leftOver} Let
      $E, F$ be finite sets. Let $\mathcal{H} = (h_i)_{i \in I}$ be a
      finite family of applications from $E$ in~$F$. Let $\varepsilon$
      be the ``collision bias'' defined as:
	\begin{displaymath}
		\mathbb{P}_{h,e,e'}(h(e)=h(e')) = \frac{1}{\sharp F} (1 + \varepsilon)
      \end{displaymath}
where $h$ is uniformly drawn in $\mathcal{H}$, $e$ and $e'$ are independent random variables taking their values~$E$ and following some distribution $\mathscr{D}$. Let $u$ be a random variable uniformly distributed over $F$. We have,
	\begin{displaymath}
		\mathbb{E}_{h}\left( \Delta(h(e), u) \right) \leq \frac{1}{2} \; \sqrt{\varepsilon}.
	\end{displaymath}
\end{lemma}

The proof of Proposition \ref{proposition:domainSampleability} will be a simple combination of the above lemma with the following lemmas.

\begin{lemma}\label{lemma:probFund}
	Let $\mathcal{F}$ be a set of $\lbrack m \times n, k\rbrack_{q}$-codes 
with minimum distance $\geq d$. 
	Let $\left( \vec{B}_{1},\dots,\vec{B}_{mn-k}\right)$ be chosen uniformly at random among all dual bases of codes in $\mathcal{F}$. We have,
	\iftoggle{llncs}{
	\begin{multline*}
			\forall \vec{s} \in \mathbb{F}_{q}^{m(n-k)}, \; \forall \vec{X} \in \mathbb{F}_{q}^{m \times n}	\mbox{ such that } 0 < |\vec{X}|< d, \\ \mathbb{P}_{\left( \vec{B}_{i}\right)_{i}}\left( \tr\left( \vec{X}\vec{B}_{i}^{\top}\right)_{i} = \vec{s} \right) = \left\{ \begin{array}{cl}
			\frac{1}{q^{mn-k}-1} & \mbox{ if $\vec{s} \neq \vec{0}$} \\
			0 & \mbox{ otherwise.} 
		\end{array} \right.
	\end{multline*}
	}{
		\begin{multline*}
	\forall \vec{s} \in \mathbb{F}_{q}^{m(n-k)}, \; \forall \vec{X} \in \mathbb{F}_{q}^{m \times n}	\mbox{ such that } 0 < |\vec{X}|< d, \\ 
	\mathbb{P}_{\left( \vec{B}_{i}\right)_{i=1}^{mn-k}}\left( \left( \tr\left( \vec{X}\vec{B}_{i}^{\top}\right)\right)_{i=1}^{mn-k} = \vec{s} \right) = \left\{ \begin{array}{cl}
		\frac{1}{q^{mn-k}-1} & \mbox{ if $\vec{s} \neq \vec{0}$} \\
		0 & \mbox{ otherwise} 
	\end{array} \right.
		\end{multline*}
} 
\end{lemma}

\begin{proof}
	First, notice that $\left( \vec{B}_{i} \right)_{i=1}^{mn-k}$ distribution is invariant by any non-singular matrix $\vec{T} = (t_{i,j}) \in \mathbb{F}_{q}^{(mn-k)\times (mn-k)}$, {\it i.e., } $\left( \sum_{i} t_{i,j}\vec{B}_{i}\right)_{j}$ is another dual basis of the code defined by $\left( \vec{B}_{i} \right)_{i=1}^{mn-k}$. Therefore\iftoggle{llncs}{ (as $\vec{T}$ is non-singular)}{},  
	\begin{align}
		\mathbb{P}_{\left( \vec{B}_{i} \right)_{i=1}^{mn-k}}\left( \tr\left( \vec{X}\vec{B}_{i}^{\top}\right)_{i} = \vec{s} \right) &= \mathbb{P}_{\left( \vec{B}_{i} \right)_{i=1}^{mn-k}} \left( \tr\left( \sum_{i=1}^{mn-k} t_{i,j}\vec{X}\vec{B}_{i}^{\top} \right)_{j} = \vec{s}\vec{T} \right) \iftoggle{llncs}{\nonumber \\}{\quad \left( \mbox{as $\vec{T}$ is non-singular}\right)\nonumber \\}
&= \mathbb{P}_{\left( \vec{B}_{i} \right)_{i=1}^{mn-k}}\left( \tr\left( \vec{X}\vec{B}_{i}^{\top}\right)_{i=1}^{mn-k} = \vec{s}\vec{T} \right) \label{eq:invS} 
	\end{align}
	Given $\vec{s}_{1}, \vec{s}_{2} \in \mathbb{F}_{q}^{mn-k}$ which are both non-zero, it exists $\vec{T} \in \mathbb{F}_{q}^{(mn-k)\times (mn-k)}$ such that,
	$$
	\vec{s}_{1}\vec{T} = \vec{s}_{2}\ .
	$$
	Therefore, using Equation \eqref{eq:invS}, we obtain, 
	$$
	\mathbb{P}_{\left( \vec{B}_{i} \right)_{i=1}^{mn-k}}\left( \tr\left( \vec{X}\vec{B}_{i}^{\top}\right)_{i=1}^{mn-k} = \vec{s}_1 \right) =  	\mathbb{P}_{\left( \vec{B}_{i} \right)_{i=1}^{mn-k}}\left( \tr\left( \vec{X}\vec{B}_{i}^{\top}\right)_{i=1}^{mn-k} = \vec{s}_2 \right).
	$$
	Notice that if $\vec{s} = \vec{0}$ we have that, 
	$$
	\mathbb{P}_{\left( \vec{B}_{i} \right)_{i=1}^{mn-k}}\left( \tr\left( \vec{X}\vec{B}_{i}^{\top}\right)_{i} = \vec{0} \right) = 0
	$$
	as $0 < |\vec{X}| < d$ where $d$ is lower than the minimum distance of the code admitting as dual basis~$\left( \vec{B} \right)_{i=1}^{mn-k}$. It concludes the proof. \iftoggle{llncs}{\qed}{}
\end{proof}

	\begin{lemma}\label{lemma:probaCol}
		Let $m,n,k,q,\ell_a,\ell_s$ be integers and~$\mathcal{F}$ be a set of
		$[m \times n,k]_{q}$-codes with minimum distance $\geq d$ and let $t < d/2$. Let 
		$\mathcal{C} \Unif \mathsf{AddRemove}\left( \mathcal{F},\ell_a,\ell_s \right)$ (as per Definition \ref{def:addRemove}) and~$(\vec{B}_{i})_{i=1}^{mn - k + \ell_{s} - \ell_{a}}$ be a random basis of $\mathcal{C}^{\perp}$. We have,
\begin{multline*} 
	\mathbb{P}_{\left( \vec{B}_{i} \right)_{i=1}^{mn-k+\ell_{s}-\ell_{a}}, \vec{X}, \vec{Y}}\left( \left( \tr\left( \vec{X}\vec{B}_{i}^{\top}\right)\right)_{i=1}^{mn-k+\ell_{s}-\ell_{a}} = \left(\tr\left( \vec{Y}\vec{B}_{i}^{\top} \right)\right)_{i=1}^{mn-k+\ell_{s}-\ell_{a}} \right) \\
	=  \frac{1}{q^{mn-k+\ell_s - \ell_a}}\left( 1 +  O\left( \frac{q^{mn-k+\ell_s - \ell_a}}{\sharp \mathcal{B}_{t}^{m,n,q}} \right) \right) 
	\end{multline*} 
	where $\vec{X},
    \vec{Y}\Unif \mathcal{B}_{t}^{m,n,q}$. 
\end{lemma}

\begin{proof}
	In what follows all probabilities will be computed with the random variables $ \left( \vec{B}_{i} \right)_{i=1}^{mn-k+\ell_{s}-\ell_{a}},$ $\vec{X}$ and $\vec{Y}$. First, by the law of total probabilities,
	\begin{align}
\mathbb{P}&\left( \left( \tr\left( \vec{X}\vec{B}_{i}^{\top}\right)\right)_{i=1}^{mn-k+\ell_{s}-\ell_{a}} = \left(\tr\left( \vec{Y}\vec{B}_{i}^{\top} \right)\right)_{i=1}^{mn-k+\ell_{s}-\ell_{a}} \right) \nonumber \\
		&= \mathbb{P}\left( \left(\tr\left( (\vec{X}-\vec{Y})\vec{B}_{i}^{\top}\right)\right)_{i=1}^{mn-k+\ell_{s}-\ell_{a}} =\vec{0} \mid \vec{X} \neq \vec{Y} \right)\mathbb{P}\left( \vec{X} \neq \vec{Y} \right) \iftoggle{llncs}{\nonumber\\}{+ \mathbb{P}_{\vec{X}, \vec{Y}}\left( \vec{X} = \vec{Y} \right)\nonumber\\}
		\iftoggle{llncs}{&\qquad\qquad\qquad\qquad\qquad\qquad\qquad\qquad \qquad\qquad\qquad\qquad\qquad + \mathbb{P}_{\vec{X},\vec{Y}}\left( \vec{X} = \vec{Y} \right)\nonumber\\}{}
		&= \mathbb{P}\left( \left( \tr\left( (\vec{X}-\vec{Y})\vec{B}_{i}^{\top}\right)\right)_{i=1}^{mn-k+\ell_{s}-\ell_{a}} =\vec{0} \mid \vec{X} \neq \vec{Y} \right)\left( 1- \frac{1}{\sharp \mathcal{B}_{t}^{m,n,q}} \right) + \frac{1}{\sharp \mathcal{B}_{t}^{m,n,q}}~\cdot \label{eq:probaCol} 
\end{align}
	But by assumption, $\left( \vec{B}_{1},\dots,\vec{B}_{mn - k + \ell_{s}- \ell_{a}} \right)$ is by construction a basis of the dual of 
	$$
	\mathcal{C} = \mathcal{C}_s \oplus \mathcal{A}
	$$
where $\mathcal{C}_{s}$ is a subcode with codimension $\ell_{s}$ in a code from $\mathcal{F}$. 
Therefore, 
$$
	\mathcal{C}^{\perp} \subseteq \mathcal{C}_s^{\perp}.
	$$
Let us complete the random $\left( \vec{B}_{i} \right)_{i=1}^{mn-k+\ell_{s}-\ell_{a}}$ basis  into a basis of $\mathcal{C}_{s}^{\perp}$ with the help of $\ell_{a}$ matrices $\left( \vec{B}_{mn -k +\ell_{s} -\ell_{a} + 1},\dots,\vec{B}_{mn - k + \ell_s} \right)$. By randomizing we obtain, 
\begin{multline*} 
		\mathbb{P}\left( \left(\tr\left( (\vec{X}-\vec{Y})\vec{B}_{i}^{\top}\right)\right)_{i=1}^{mn-k+\ell_{s}-\ell_{a}} =\vec{0} \mid \vec{X}\neq \vec{Y} \right) 
		\\
		=\sum_{\substack{\vec{s}\in \mathbb{F}_{q}^{mn-k+\ell_{s}}:\\ s_1=\dots= s_{mn-k + \ell_{s} -\ell_{a} } = 0}} \mathbb{P} \left( \left( \tr\left( (\vec{X}-\vec{Y})\vec{B}_{i}^{\top} \right)\right)_{i=1}^{mn-k+\ell_{s}} = \vec{s} \mid \vec{X} \neq \vec{Y}\right)~.
	\end{multline*} 
	In the above sum $\vec{s}$ cannot be equal to $\vec{0}$. Indeed, $0< |\vec{X} - \vec{Y}|\leq 2t < d$ where $d$ is smaller than the minimum distance of $\mathcal{C}_s$. Therefore
by using Lemma \ref{lemma:probFund},
	\begin{align*} 
		\mathbb{P}
		( \tr\left( 
		(\vec{X}-\vec{Y})\vec{B}_{i}^{\top}\right)_{i} &=\vec{0} \mid \vec{X} \neq \vec{Y} ) = \frac{q^{\ell_{a}}-1}{q^{mn-k+ \ell_{s}}-1}~\cdot
	\end{align*} 
	Plugging this into Equation \eqref{eq:probaCol} leads to:
	\begin{align}
		\nonumber \mathbb{P}&_{\left(\vec{B}_{i} \right)_{i=1}^{mn-k+\ell_{s}-\ell_{a}},\vec{X},\vec{Y}}\left( \tr\left( \vec{X}\vec{B}_{i}^{\top}\right)_{i} = \tr\left( \vec{Y}\vec{B}_{i}^{\top} \right)_{i} \right)\\ 
		\iftoggle{llncs}{\nonumber &= \frac{q^{\ell_{a}}-1}{q^{mn-k + \ell_{s}}-1}\left( 1 + \frac{ \frac{q^{mn-k+\ell_{s}}-1}{q^{\ell_{a}}} -1}{\sharp \mathcal{B}_{t}^{m,n,q}} \right) \\}{} 
\nonumber  & = \frac{1}{q^{mn-k+\ell_s-\ell_{a}}} \frac{(1-\frac{1}{q^{\ell_a}})}{(1- \frac{1}{q^{mn-k+\ell_s}} )} \left( 1 + O \left( \frac{q^{mn-k +\ell_{s} - \ell_{a}}}{\sharp \mathcal{B}_{t}^{m,n,q}} \right) \right)\\                                                                                                  & = \frac{1}{q^{mn-k+\ell_s-\ell_{a}}}
                             \left( 1 + O\left(\frac{1}{q^{\ell_a}}\right) + O\left(\frac{1}{q^{mn-k+\ell_s}}\right)  + O \left( \frac{q^{mn-k +\ell_{s} - \ell_{a}}}{\sharp \mathcal{B}_{t}^{m,n,q}} \right) \right)~.\label{eq:bigohs}
    \end{align}
    Next, we have to identify which of the big-O's is the dominant
    term. Since
    $mn-k+\ell_s-\ell_a$ is the dimension of the dual code
    $\C^\perp$, this quantity is non negative and hence $1/q^{mn-k+\ell_s} = o(1/q^{\ell_a})
$.  Moreover, since $\Bv_1, \dots, \Bv_{mn-k}$ is
    a dual basis of a code in $\mathcal F$ \emph{i.e.} a code with minimum distance~$\geq d$,
    the assumption $t < d/2$ entails that the map below is injective
    \[
      \map{\mathcal{B}_{t}^{m,n,q}}{\Fq^{mn-k}}{\mathbf E}{{(\tr (\mathbf{E}\Bv_{i} ^\top))}_{i=1}^{mn-k}~.}
    \]
    Therefore,
    \[
      \sharp \mathcal{B}_{t}^{m,n,q} \leq q^{mn-k} \leq q^{mn-k+\ell_s}
    \]
    which shows that the dominant big-O in~(\ref{eq:bigohs}) is the rightmost one.
    Hence,
    \[
      \mathbb{P}\left( \tr\left( \vec{X}\vec{B}_{i}^{\top}\right)_{i} = \tr\left( \vec{Y}\vec{B}_{i}^{\top} \right)_{i} \right)
		= \frac{1}{q^{mn-k+\ell_s-\ell_{a}}}\left( 1 + O\left( \frac{q^{mn-k +\ell_{s} - \ell_{a}}}{\sharp \mathcal{B}_{t}^{m,n,q}} \right) \right) 
	\]
which concludes the proof.\iftoggle{llncs}{\qed}{}  
\end{proof}

We are now ready to prove Proposition \ref{proposition:domainSampleability}.

\begin{proof}[Proof of Proposition \ref{proposition:domainSampleability}]
	We just combine Lemmas \ref{lemme:leftOver} and \ref{lemma:probaCol} and the fact that the $\left( \vec{B}_{i} \right)_{i=1}^{mn - k + \ell_{s} - \ell_{a}}$ output by Algorithm \ref{algo:trap} defines a random dual basis of a random code from  $\mathsf{AddRemove}\left( \mathcal{F},\ell_a,\ell_s \right)$ with $\mathcal{F}$ being the family of Gabidulin codes with parameters $[n,k]_{q^{m}}$ (in particular these codes have minimum distance $n-k+1$).\iftoggle{llncs}{\qed}{}  
\end{proof}
   }

\begin{remark}\label{rem:lhl}
	The proof of Proposition \ref{proposition:domainSampleability} crucially relies on the fact that we have instantiated $\miranda$ via the $\mathsf{AddRemove}\left( \mathcal{F},\ell_a,\ell_s \right)$-construction for a family of codes $\mathcal{F}$ with minimum distance~$\geq n-\kappa+1$. In particular, we could have stated this proposition more generally with $\mathcal{F}$ being a family of codes with minimum distance $\geq d$, not necessarily Gabidulin codes which are implicitly used in the above proposition (they are used to build the $\vec{B}_{i}$'s output by Algorithm~\ref{algo:trap}). 
  \end{remark}

Combining Propositions~\ref{propo:unifSamplingSgn} and~\ref{proposition:domainSampleability}  with Lemma~\ref{lemma:invertF} shows that $\miranda$ signatures do not leak any information on their underlying trapdoor.

\begin{theorem}\label{theo:unifPreimage}
	Let ${\left(\vec{B}_{i}\right)}_{i=1}^{m(n-\kappa)+\ell_{s}-\ell_{a}}$ as output by Algorithm~\ref{algo:trap} and $\vec{E}_{\vec{s}} \in \mathcal{B}_{\frac{n-\kappa}{2}}^{m,n,q} \cup \{ \bot\}$ be the output of Algorithm~\ref{algo:sgn} with input $\vec{s} \Unif \Fq^{m(n-\kappa)+\ell_{s}-\ell_{a}}$. Let $\vec{E} \Unif \mathcal{B}_{\frac{n-\kappa}{2}}^{m,n,q}$. We have, 
	\begin{equation*}
		\mathbb{E}_{\left(
			\vec{B}_{i}\right)_{i=1}^{m(n-\kappa)+\ell_{s}-\ell_{a}}}\left(\Delta
		\left(\vec{E}_{\vec{s}},\vec{E}\right) \right) = O\left( \sqrt{\frac{q^{m(n-\kappa)+\ell_{s}-\ell_{a}}}{\sharp \mathcal{B}_{\frac{n-\kappa}{2}}^{m,n,q}}}
		\right)~\cdot
\end{equation*}
\end{theorem}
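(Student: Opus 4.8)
The plan is to assemble the theorem from the three results already at hand, namely Proposition~\ref{propo:unifSamplingSgn}, Lemma~\ref{lemma:invertF} and Proposition~\ref{proposition:domainSampleability}; no new computation is required. The guiding observation is that, for a fixed key ${(\vec{B}_i)}_i$, the one-way function $\fow$ from~\eqref{eq:fToInvert} restricted to the ball $\mathcal{B}_{\frac{n-\kappa}{2}}^{m,n,q}$ is exactly the map $f$ appearing in Lemma~\ref{lemma:invertF}, with domain $\mathcal{D}=\mathcal{B}_{\frac{n-\kappa}{2}}^{m,n,q}$ and codomain $\mathcal{F}=\Fq^{m(n-\kappa)+\ell_s-\ell_a}$. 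It then suffices to check that, for each key, the output $\vec{E}_{\vec{s}}$ of Algorithm~\ref{algo:sgn} on a uniform input $\vec{s}$ is the random variable called $x$ in that lemma, while $\vec{E}$ plays the role of $x_u$.

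First I would fix the key ${(\vec{B}_i)}_i$ and establish this distributional identification. By Proposition~\ref{propo:unifSamplingSgn}, whenever Algorithm~\ref{algo:sgn} terminates on input $\vec{s}$, its output is uniformly distributed over the preimage set~\eqref{eq:X_unif}, that is, over $\fow^{-1}(\vec{s})$; and the algorithm produces no valid output (in which case we set $\vec{E}_{\vec{s}}=\bot$) exactly when this set is empty. This matches verbatim the $\bot$-convention of Lemma~\ref{lemma:invertF}. Hence, at fixed key, the pair $(\vec{E}_{\vec{s}},\vec{E})$ with $\vec{s}\Unif\Fq^{m(n-\kappa)+\ell_s-\ell_a}$ and $\vec{E}\Unif\mathcal{B}_{\frac{n-\kappa}{2}}^{m,n,q}$ is precisely the pair $(x,x_u)$ of the lemma.

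Applying Lemma~\ref{lemma:invertF} at fixed key then yields the identity
\[
\Delta(\vec{E}_{\vec{s}},\vec{E}) = \Delta(\vec{s},\fow(\vec{E})),
\]
which is the crux of the argument: it rewrites the hard-to-control left-hand side, a distribution over signatures, as a distribution of trace vectors that no longer refers to the secret decoder. Taking the expectation over the key ${(\vec{B}_i)}_i$ on both sides and using the symmetry of $\Delta$, the right-hand side becomes exactly the quantity bounded in Proposition~\ref{proposition:domainSampleability} (recall $\fow(\vec{E})={(\tr(\vec{E}\vec{B}_i^{\top}))}_i$). Substituting the resulting $O\!\left(\sqrt{q^{m(n-\kappa)+\ell_s-\ell_a}/\sharp \mathcal{B}_{\frac{n-\kappa}{2}}^{m,n,q}}\right)$ bound gives the claim.

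Since each ingredient is already proved, the only point demanding care is the bookkeeping in the reduction to Lemma~\ref{lemma:invertF}. I would be careful to apply the lemma pointwise in the key, so that the uniform-over-preimage structure furnished by Proposition~\ref{propo:unifSamplingSgn} is available \emph{before} averaging, and to identify the failure event of Algorithm~\ref{algo:sgn} with the empty-preimage case so that the $\bot$-convention is honoured. Once this identification is clean, the passage to the expectation over the key and the invocation of Proposition~\ref{proposition:domainSampleability} are immediate.
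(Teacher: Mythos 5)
Your proposal is correct and follows exactly the paper's own route: the paper proves the theorem by the same combination of Proposition~\ref{propo:unifSamplingSgn} (uniformity over the preimage set, with failure identified with the empty-preimage/$\bot$ case), Lemma~\ref{lemma:invertF} applied pointwise in the key with $f=\fow$, $x=\vec{E}_{\vec{s}}$, $x_u=\vec{E}$, $y=\vec{s}$, and Proposition~\ref{proposition:domainSampleability} to bound the resulting expectation. Your bookkeeping of the identifications is the same as (indeed slightly more explicit than) what the paper records.
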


 \subsection{About the genericity of our approach}

 We have just fully instantiated $\miranda$ and we have shown that it is a rank-based Average Trapdoor Preimage Sampleable Function (ATPSF). It can thus be used as a Full-Domain-Hash function whose security inherits from $\minrank$ hardness, which is a quantum resistant assumption. It is worth noting that very few FDH schemes which are quantumly secure are currently known: $\mathsf{Wave}$ \cite{DST19a} in the code-based setting, $\mathsf{Falcon}$~\cite{FHKLPPRSWZ}  in the lattice-based  setting or $\mathsf{UOV}$-based signatures like $\mathsf{Mayo}$~\cite{mayo}. $\miranda$ has therefore been added to this short (non exhaustive) list of quantum resistant FDH schemes. However, it turns out that we have not just instantiated a new secure FDH-signature relying on the hardness of $\minrank$: we have also introduced \emph{a new framework for designing secure FDH signatures}.

 The key ingredient for instantiating $\miranda$ as an ATPSF has been the $\mathsf{AddRemove}\left( \mathcal{F},\ell_a,\ell_s \right)$-construction with
 $\mathcal{F}$ being the family of Gabidulin codes. However, so far (to prove that signatures do not leak information), we have only used one fact: we can deterministically decode Gabidulin codes in their unique decoding regime of parameters (see Remarks~\ref{rem:unif} and~\ref{rem:lhl}). This quite innocent looking remark shows that $\miranda$ can be instantiated  with \emph{any} matrix code admitting an efficient decoder in its unique decoding regime of parameters, showing that our scheme is not only about Gabidulin (matrix) codes.

 The situation is even more general than with matrix codes equipped with rank metric (see Remark~\ref{rem:AR}). Our approach also works with linear codes equipped for instance with the Hamming metric or lattices with the Euclidean metric. The only needed concept is the notion of \emph{unique decoding}. In some way, our framework is like McEliece's framework \cite{M78} in the meaning that if one wishes to instantiate McEliece's encryption, one has just to choose a family of codes which can be efficiently decoded. $\miranda$ offers the same degree of freedom: if one wishes to instantiate an FDH signature scheme with no leaking signatures, one has just to choose a family of codes which can be efficiently decoded in their unique decoding regime of parameters. However such as for McEliece, one needs to be \emph{extremely} careful on the choice of the family of codes when instantiating $\miranda$. Historically, for McEliece, many families of codes equipped with an efficient decoder were proposed, and most of them lead to devastating attacks. Only very few families of codes are still considered as secure in McEliece's framework: $\textsf{MDPC}$ codes~\cite{AABBBBDGGGGMPRSTVZ22} and Goppa-codes~\cite{M78} as in McEliece's original instantiation. The issue is the same with $\miranda$: when using the $\mathsf{AddRemove}\left( \mathcal{F},\ell_a,\ell_s \right)$-construction we have to be \emph{extremely} careful on how we set $\mathcal{F}$. Let us see via an example how a bad choice of codes family leads to an attack.

 A natural candidate to instantiate $\miranda$ is the family of
 \emph{Reed-Solomon} codes. Recall that they are subspaces of
 $\Fq^{n}$ be defined as (for
 $\vec{x} = (x_1,\dots,x_n) \in \Fq^{n}$ where all the
 $x_i$ are different)
 $$
 \mathsf{RS}(\vec{x},\kappa) \eqdef \left\{ \left( P(x_1),\dots,P(x_n) \right): \; P \in \Fq[X]_{<\kappa} \right\}~.
 $$
 These codes have dimension $\kappa$ and they can be decoded at Hamming distance~$\leq \frac{n-\kappa}{2}$ which corresponds to their unique decoding regime of parameters. Unfortunately Reed-Solomon codes are a non-secure choice to instantiate $\miranda$ when $\kappa$ is not too large. 
Indeed, with such instantiation our public-key is a basis (or a dual basis but it amounts to the same as from a basis we easily compute a dual basis and conversely) of a code
 $$
\Dmat \eqdef \mathcal{C}_{s}\oplus \Amat \quad \mbox{ where } \mathcal{C}_{s} \subseteq \mathsf{RS}(\vec{x},\kappa) \mbox{and } \mathcal{C}_{s} \cap \Amat = \{\vec{0}\}~.
 $$ 
 Beware that it was recently shown \cite{CPTZ25} that recovering the hidden Reed--Solomon code from~$\Dmat$ is easy as long as the codimension of $\C_s$ is low.
 
The above example of a non-secure instantiation of $\miranda$ may have puzzled some readers. Indeed we were just claiming that $\miranda$ is a secure FDH-scheme. There are no contradiction here. What we have shown up to now is that $\miranda$ offers a framework to easily instantiate an FDH-scheme whose signatures are not leaking any information about their trapdoor (via ATPSF). This enables to show (in the EUF-CMA security-model) that breaking $\miranda$ amounts to invert the public trapdoor-one-way function (without knowing its associated trapdoor) which in this case amounts to decode a code from the $\mathsf{AddRemove}$-construction by only knowing a random basis of it. In our above example we have discussed a non-secure instantiation of the 
$\mathsf{AddRemove}$-construction with Reed-Solomon codes. The goal of the next section is precisely to study the hardness of this decoding problem but when instantiating the $\mathsf{AddRemove}$-construction with Gabidulin codes.

 	\section{Security}\label{sec:secu}
In this section, we discuss the security of $\miranda$ instantiated
with Gabidulin codes.  Based on Definition~\ref{def:rankATPS}, the
security (in the EUF-CMA security model) of our signature rests on the
hardness of inverting function $\fow$ in
Equation~\eqref{eq:fToInvert}, where the $\Bv_i$'s are sampled by the
trapdoor algorithm. Equivalently, the security reduces to the
following problem.

\begin{definition}[$\mathsf{MinAddRemove}^{\mathsf{Gab}}$]\label{pb}
 Given parameters $m,n,\kappa,q,\ell_a, \ell_s$, let $\Dmat$ be a uniformly random $\addremove$ matrix Gabidulin code
(see Definition~\ref{def:addRemove} where $\mathcal{F}$ being the family of Gabidulin codes with parameters $m,n,\kappa$ as per Definition~\ref{def:gab}). Given a random basis $\Bv_1, \dots, \Bv_{m(n-\kappa)+\ell_s - \ell_a}$  of its
  dual and a uniformly random $\sv \in\mathbb{F}_{q}^{m(n-\kappa) + \ell_{s} - \ell_a}$,
  find $\Ev$ of rank $t = \frac{n-\kappa}{2}$ such that
  \[    
{\left(\tr(\Bv_i \Ev^\top)\right)}_{i=1}^{m(n-\kappa)\ell_{s}-\ell_a} = \sv\ .
  \]
\end{definition}

In the sequel, we discuss the main attacks we identified on this problem.
As a decoding problem, one can consider two general approaches:
\medskip 
\begin{itemize}\setlength{\itemsep}{5pt}
\item either applying a generic $\minrank$ solver;
\item or performing a key-recovery attack.
\end{itemize}
\medskip 
For the use of a $\minrank$ solver, we consider attacks proposed by~\cite{GC00,BBCGPSTV20,BBBGT23}. We give more details in Section \ref{sec:param}.

The best complexity we found was based on a key recovery attack that we
describe in the sequel.  We will first observe in
Section~\ref{ss:distinguisher} that \addremove Gabidulin codes are
efficiently distinguishable from random  with our parameter choices though the distinguisher
we found is by nature exponential. Hence the security of
$\miranda$ does not
reduce with our parameter choice to the hardness of $\minrank$. However, the
difficulty of exploiting such a distinguisher is discussed in the sequel and we claim that
it cannot be used to solve $\mathsf{MinAddRemove}^{\mathsf{Gab}}$ problem (as per Definition~\ref{pb}).

Finally, the best attack we found and which we describe in what follows
essentially consists in recovering the hidden $\Fqm$--linearity of the Gabidulin
code before the \addremove operation.

\subsection{Computing low rank elements in matrix codes}\label{ss:basic_routine}
The distinguisher and the attack to follow both use as a basic routine
the search of low rank codewords in matrix codes. This can be
performed using the so-called information set decoding-like algorithms
such as in the Hamming setting. We refer the readers to
\cite{OJ02,GRS16,AGHT18} for references on those algorithms. Since we
are considering matrix codes that are not $\Fqm$--linear, our task is
slightly different and hence, we will describe the calculation of
low rank codewords in a self contained manner.

Let $\Cmat \subseteq \Fq^{m \times n}$ be a matrix code of dimension $k$.
Denote $a,b$ the positive integers such that
\begin{equation}\label{eq:decomp_K}
  k = am + b \qquad \text{where} \qquad 0 < b \leq m\ .
\end{equation}
Note first that, by Gaussian elimination, one can construct a nonzero matrix
of $\Cmat$ whose $a$ leftmost columns are zero and whose $b-1$ top entries of
the $(a+1)$--th column are zero too. That is to say a matrix of the shape given in
Figure~\ref{fig:matrix_with_man_zeroes}.

\begin{figure}[!h]
  \centering
    \begin{tikzpicture}
\draw (0,0) rectangle (3,3);
      \draw[<->] (-.2,0) -- (-.2,3);
      \draw[<->] (0,3.2) -- (3,3.2);
      \draw[<->] (0,-.2) -- (1,-.2);
      \node[below] at (.5,-.2) {$a$};
      \draw[-] (1,0) -- (1,3);
      \draw[<->] (1.45,3) -- (1.45,1.6);
      \node[right] at (1.45,2.3) {${b-1}$};
      \draw[-] (1.3,3) -- (1.3,0);
      \draw[-] (1,1.6) -- (1.3,1.6);
      \node[left] at (-.2,1.5) {$m$};
      \node[above] at (1.5,3.2) {$n$};
      \node at (.5,1.3) {${(0)}$};
      \node at (1.15,2.8) {$\scriptstyle 0$};
      \node at (1.15,1.8) {$\scriptstyle 0$};
      \node at (1.15,2.4) {$\scriptstyle \vdots$};
      \node at (1.15,1.4) {$\scriptstyle *$};
      \node at (1.15,.2) {$\scriptstyle *$};
      \node at (1.15,.9) {$\scriptstyle \vdots$};
      \node at (2,1.3) {${(*)}$};
    \end{tikzpicture}
      \caption{Element of an $m \times n$ matrix code of dimension
    $k = am+b$ obtained from Gaussian elimination in the matrix code.}
    \label{fig:matrix_with_man_zeroes}
\end{figure}
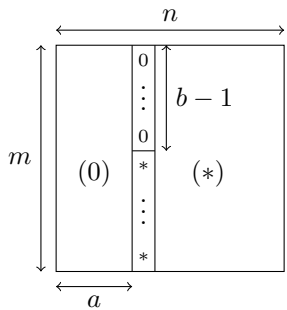
Indeed, imposing such entries to be zero corresponds to $k-1$ linear
constraints, hence there is a nonzero solution $\Cv$ in $\Cmat$. This
matrix $\Cv$ has $a$ zero columns and hence has rank at most
$n-a$. The key of the algorithm is that we can iterate this process
while changing the linear constraints until we obtain a matrix $\Cv$
of rank $s < n-a$ where $s$ is some fixed target. This can be
performed by repeating Algorithm~\ref{algo:low_weight}.
\begin{algorithm}
   	\caption{Computing codewords of fixed rank}\label{algo:low_weight}\medskip 
 	\textbf{Inputs.} Code $\Cmat$, target rank $s \leq n-a$.\\
    \textbf{Outputs.} $\Cv \in \Cmat$ of rank $s$ if exists.\medskip
 	\hrule
 \setlength{\baselineskip}{1.25\baselineskip}
 \begin{algorithmic}[1]
   \State {Pick a uniformly random matrix} $\Qv \in \mathbf{GL}_n(\Fq)$ {and compute}
   $\Cmat \Qv$
    \State {Compute a nonzero matrix} $\Cv \in \Cmat \Qv$ \texttt{whose} $a$
    {leftmost columns are zero and} $b-1$ {top entries of the} $(a+1)${-th
    column are zero either}
    \If{$|\Cv| \leq s$}
    \State \Return $\Cv \Qv^{-1}$;
    \Else{} {{go to} $1$.}
    \EndIf
 \end{algorithmic} 
\end{algorithm}

\iftoggle{llncs}{}{
\begin{remark}
  Note that if $b = 1$, then we are looking for matrices with
  prescribed zero columns and, when applied to a $\Cmat \Qv$, the process
  is nothing but the so-called \emph{shortening} of the code with respect to a
  given subspace of $\Fq^n$ as introduced in \cite{BR17,S19a}.
\end{remark}
}

The remaining question is \emph{how many times shall we iterate?} When $\Cmat$
is a random code, then $\Cv$ is a uniformly random matrix with prescribed
zeroes. The following statement yields the probability that such a matrix has rank
$s < m-a$.

Recall that $\mathcal{S}_s^{m,n,q}$ denotes the rank metric sphere of radius $s$, \emph{i.e.}
the set of $m\times n$ matrices over $\Fq$ with rank $s$.
In the sequel we denote
\begin{equation}\label{eq:sigma}
  \sigma_s^{m,n,q} \eqdef \sharp \mathcal{S}_s^{m,n,q}
\end{equation}
and recall that (see \cite{L06})
\begin{equation}\label{eq:approx_sigma}
  \sigma_s^{m,n,q} \approx q^{s(m+n-s)}.
\end{equation}

\begin{restatable}{proposition}{propprobrank}\label{prop:prob_rank}
  Suppose that $\Cmat\subseteq \Fq^{m \times n}$ is a uniformly random
  code of dimension $k = am+b$ for some $a\geq 0$ and
  $1 \leq b \leq m$. Let $s$ be a positive integer such that,
  \[
    s < n-a.
  \]
  Let $\Cv \in \Cmat$ be a non-zero matrix whose $a$ leftmost columns
  are zero and the $b-1$ topmost entries of the $(a+1)$--th column are
  zero either (\emph{i.e.,} with the same profile as in
  Figure~\ref{fig:matrix_with_man_zeroes}). Then, the probability that
  $\Cv$ has rank $s$ equals
    \begin{align*}
    \sigma_{s}^{m,n-a-1,q} q^{m (a-n) + b-1} & +
    \frac{\sigma_s^{m, n-a,q} - \sigma_s^{m,n-a-1,q}}{(q^m-1)q^{m(n-a)}} (1-q^{b-1-m})
    \end{align*}
    which is $\approx q^{s(m+n-a-s)+m(a-n)+\max(0,b-1-s)}.$
\end{restatable}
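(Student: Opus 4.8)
The plan is to reduce to a counting problem by first pinning down the distribution of $\Cv$. Let $V\subseteq \Fq^{m\times n}$ be the space of matrices whose $a$ leftmost columns vanish and whose $(a+1)$-th column has its $b-1$ topmost entries equal to $0$. Imposing this profile is exactly the system of $k-1$ linear conditions used in the Gaussian elimination, hence $\dim V = mn-(k-1)=m(n-a)-b+1$. For a uniformly random $k$-dimensional code $\Cmat$ the intersection $\Cmat\cap V$ has expected dimension $1$, and I would first show that, away from the lower-order event $\dim(\Cmat\cap V)\geq 2$, the line $\Cmat\cap V$ is \emph{uniformly distributed} among the lines of $V$. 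This rests on a symmetry argument: the law of $\Cmat$ is $\GL_{mn}(\Fq)$-invariant, hence invariant under the stabiliser $P$ of $V$; since $g(\Cmat\cap V)=g(\Cmat)\cap V$ for $g\in P$ and the restriction $P\to\GL(V)$ is surjective, the law of $\Cmat\cap V$ is $\GL(V)$-invariant, thus uniform on the lines of $V$. As the rank is constant along a line, this gives
\[
\prob(\rk \Cv=s)=\frac{\#\{\Mv\in V:\ \rk \Mv=s\}}{q^{\dim V}-1}.
\]

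The second step is to evaluate the numerator. Since the $a$ leftmost columns are zero, $\rk \Mv$ equals the rank of the $m\times(n-a)$ block $\Mv'$ made of columns $a+1,\dots,n$, the sole constraint being that the $b-1$ topmost entries of its first column vanish. I would split on this first column $\vec v$, which ranges over $U=\{\vec v\in\Fq^m:\ v_1=\dots=v_{b-1}=0\}$, of dimension $m-b+1$. When $\vec v=\vec 0$ the rank is carried by the remaining $m\times(n-a-1)$ block, so this case contributes exactly $\sigma_{s}^{m,n-a-1,q}$ matrices; dividing by $q^{\dim V}$ produces the first summand $\sigma_{s}^{m,n-a-1,q}\,q^{m(a-n)+b-1}$.

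For the contribution of $\vec v\neq\vec 0$ I would use a second symmetry: among the $m\times(n-a)$ rank-$s$ matrices with nonzero first column, of which there are $\sigma_{s}^{m,n-a,q}-\sigma_{s}^{m,n-a-1,q}$, the left $\GL_m(\Fq)$-action is transitive on the first column, so that column is uniform over $\Fq^m\setminus\{\vec 0\}$; hence the proportion whose first column moreover lies in $U$ is $(q^{m-b+1}-1)/(q^m-1)$. Multiplying, dividing by $q^{\dim V}-1$ and combining with the previous case yields a two-term expression of this shape.

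Finally I would pass to asymptotics via $\sigma_{s}^{m,\nu,q}\approx q^{s(m+\nu-s)}$ (Equation~\eqref{eq:approx_sigma}). Writing $E=s(m+n-a-s)+m(a-n)$, the first summand is of order $q^{E+(b-1-s)}$ and the second of order $q^{E}$; their ratio is $q^{b-1-s}$, so the larger of the two contributes the factor $q^{\max(0,b-1-s)}$, giving $q^{s(m+n-a-s)+m(a-n)+\max(0,b-1-s)}$. The main obstacle is the first step, namely rigorously establishing that $\Cmat\cap V$ is equidistributed on the lines of $V$ and controlling the dimension-$\geq 2$ event, since the entire count hinges on it; once that is in place, the two symmetry reductions make the remaining enumeration routine.
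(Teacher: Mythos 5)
Your proposal is correct and follows essentially the same route as the paper's proof: the same split on whether the $(a+1)$--th column vanishes, and the same orbit count under the left $\mathbf{GL}_m(\Fq)$--action on nonzero first columns (which is exactly the paper's Lemma~\ref{lem:counting}), followed by the same asymptotic comparison of the two summands. The only difference is that you explicitly justify, via $\GL$--invariance of the law of $\Cmat$, why $\Cv$ may be treated as uniform over the profile space $V$ — a point the paper simply asserts — and your bookkeeping ($q^{\dim V}-1$ versus $q^{\dim V}$) is if anything slightly more careful.
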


\iftoggle{llncs}{\begin{proof} See Appendix~\ref{sec:proof_prop_prob_rank}.
    \qed \end{proof}}
{
  \iftoggle{llncs}{\section{Proof of Proposition~\ref{prop:prob_rank}}\label{sec:proof_prop_prob_rank}    
\propprobrank*}{}
    
The proof of the proposition above rests on the following lemma.
\begin{lemma}\label{lem:counting}
  Let $\vec{v} \in \Fq^m \setminus \{\vec{0}\}$.  Let
  $u \geq v \geq s$.  Using Notation~(\ref{eq:sigma}), the number of $u\times v$ matrices of rank $s$
  whose first column is equal to $\vec{v}$ is:
  \[
    \frac{\sigma_s^{u,v,q} - \sigma_s^{u,v-1,q}}{q^u - 1}~\cdot
  \]
\end{lemma}

\begin{proof}
  For any $\vec{v}$, denote by $\mathcal{S}_s^{u,v,q}(\vec v)$ the
  subset of $\mathcal{S}_s^{u,v,q}$ of matrices whose leftmost column
  is~$\vec v$. This yields a partition
  \[\mathcal{S}_s^{u,v,q} = \bigsqcup_{\vec v \in \Fq^u}
    \mathcal{S}_s^{u,v,q}(\vec v)~. \] Recall that $\mathbf{GL}_u(\Fq)$
  acts by multiplication on the left on $\mathcal{S}_s^{u,v,q}$ and
  acts transitively on $\Fq^u \setminus \{\vec{0}\}$. Therefore, the
  action of $\mathbf{GL}_u(\Fq)$ on $\mathcal{S}_s^{u,v,q}$ permutes
  the $\mathcal{S}_s^{u,v,q}(\vec v)$ when $\vec v$ ranges over
  $\Fq^u \setminus \{\vec{0}\}$. Thus, the $\mathcal{S}_s^{u,v,q}(\vec v)$'s
  when $\vec v$ ranges over $\Fq^u \setminus \{\vec{0}\}$ all have the same
  cardinality.

  Furthermore, $\mathcal{S}_s^{u,v,q}(\vec 0)$ is in one-to-one correspondence
  with $\mathcal{S}_s^{u,v-1,q}$.
  Therefore, for any~$\vec v \in \Fq^u \setminus \{\vec{0}\}$.
  \[
    \sharp \mathcal{S}_s^{u,v,q} = \sharp \mathcal{S}_s^{u,v,q}(\vec
    0) + (q^u-1) \sharp \mathcal{S}_s^{u,v,q} (\vec v) = \sharp \mathcal{S}_s^{u,v-1,q} + (q^u-1) \sharp \mathcal{S}_s^{u,v,q} (\vec v),
  \]
  which gives
  \[
    \sharp \mathcal{S}_s^{u,v,q} (\vec v) = \frac{\sigma_s^{u,v,q} - \sigma_s^{u,v-1,q}}{q^u - 1}~\cdot 
  \]
  \iftoggle{llncs}{\qed}{} \end{proof}

\begin{remark}
  According to Lemma~\ref{lem:counting}, the cadinality of the set of
  $u\times v$ matrices of rank $s$ whose first column is a prescribed
  nonzero vector is
  \[
    \approx q^{s(u+v-s)-m}~.
  \]
\end{remark}

\begin{proof}[Proof of Proposition~\ref{prop:prob_rank}]
Denote by $\Cv_{a+1}$ the $(a+1)$--th column of $\Cv$. The total probability formula yields:
  \begin{align*}
    \Prob(|\Cv| = s)  = 
    \Prob(|\Cv| = s &~|~ \Cv_{a+1} = \vec 0)\Prob(\Cv_{a+1} = \vec 0)\\
    &+ \Prob(|\Cv| = s ~|~ \Cv_{a+1} \neq \vec 0)\Prob(\Cv_{a+1} \neq \vec 0)~.
  \end{align*}
  Since $\Cv$ is a uniformly random matrix with $b-1$ prescribed zero entries, we easily get
  \[
    \Prob(\Cv_{a+1} = \vec 0) = q^{b-1-m} \ .
  \]
  Moreover, \( \Prob(|\Cv| = s ~|~ \Cv_{a+1} = \vec 0) \) is
  nothing but the probability that a uniformly random
  $m \times (n-a-1)$ matrix has rank $s$, which yields
  \[ \Prob(|\Cv| = s ~|~ \Cv_{a+1} = \vec 0) =
    \frac{\sigma_{s}^{m,n-a-1,q}}{q^{m (n-a-1)}}~\cdot
  \]
  Next,
  \begin{align*}
    \Prob\Big(|\Cv| = s &~|~ \Cv_{a+1} \neq \vec 0\Big) \\
     &=  \sum_{\vec{w} \in \Fq^{m-b+1}\setminus \{\vec 0\}} \Prob \left( |\Cv|=s ~\Big|~ \Cv_{a+1} =
                                                                 \begin{pmatrix}
                                                                   \vec 0 \\ \vec w
                                                                 \end{pmatrix}
                                                                            \right)
                                                                            \Prob\left(
                                                                            \Cv_{a+1} =                                                                  \begin{pmatrix}
                                                                   \vec 0 \\ \vec w
                                                                 \end{pmatrix}
    \right)
    \\
    &= \sum_{\vec w} \left( \frac{\sigma_s^{m, n-a,q} - \sigma_s^{m,n-a-1,q}}{(q^m-1)}\cdot \frac{1}{q^{m(n-a-1)}} \right)\Prob\left(
                                                                            \Cv_{a+1} =                                                                  \begin{pmatrix}
                                                                   \vec 0 \\ \vec w
                                                                 \end{pmatrix}
    \right) \\
                                                               & = \frac{\sigma_s^{m, n-a,q} - \sigma_s^{m,n-a-1,q}}{(q^m-1)}\cdot \frac{1}{q^{m(n-a-1)}}~\cdot
  \end{align*}
  The second equality is due to  Lemma~\ref{lem:counting} and the third one
  due to the uniformity of $\Cv$ which entails that
  \[
  \Prob\left(
                                                                            \Cv_{a+1} =                                                                  \begin{pmatrix}
                                                                   \vec 0 \\ \vec w
                                                                 \end{pmatrix}
                                                               \right)
                                                             \]
  does not depend on $\vec w \in \Fq^{m-b+1}$.
Putting all together yields
  \[
    \Prob(|\Cv| = s) =     \frac{\sigma_{s}^{m,n-a-1,q}}{q^{m(n-a-1)}} q^{b-1-m}  +
    \frac{\sigma_s^{m, n-a,q} - \sigma_s^{m,n-a-1,q}}{(q^m-1)q^{m(n-a-1)}} (1-q^{b-1-m}) \ .
  \]
  Asymptotically the left and right--hand summands are approximately equal to
  \[
    q^{s(m+n-a-s) + m(a-n) + b-1-s} \quad \text{and} \quad
    q^{s(m+n-a-s) + m(a-n)}~.
  \]
  Thus, the dominant term is
  \(
    \approx q^{s(m+n-a-s)  + m(a-n) + \max (0, b-1-s)}.
    \) \iftoggle{llncs}{\qed}{}
\end{proof}
 }

\begin{proposition}\label{prop:complexity_finding_low_weight}
  The cost of a single loop of Algorithm~\ref{algo:low_weight} costs
  $O(km^\omega + mn k^{\omega - 1})$ operations in $\Fq$, where
  $\omega$ denotes the complexity exponent of linear algebra. The average complexity of
    Algorithm~\ref{algo:low_weight} is of
\[
  O \left((km^\omega+mnk^{\omega-1})q^{m(n-a) - s(m+n-a-s) + \min (0,
      s-b+1)} \right)\ \text{operations\ in}\ \Fq \ .
\]
\end{proposition}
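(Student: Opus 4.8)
The plan is to establish the two assertions separately: first the cost of one pass through the loop, then the expected number of passes, which I would read off directly from Proposition~\ref{prop:prob_rank}. The announced average complexity is then simply the product of the two.

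For a single loop I would isolate the two genuinely expensive operations. The first is forming a generator of $\Cmat\Qv$, which amounts to multiplying each of the $k$ basis matrices of $\Cmat$ (of size $m\times n$) on the right by the $n\times n$ matrix $\Qv$. Partitioning the $m$ rows into blocks of $n$, each such product costs $O(mn^{\omega-1})$, and since we may assume $n\le m$ (up to transposition, and as is the case for matrix Gabidulin codes), this is $O(m^\omega)$ per generator, hence $O(km^\omega)$ in total; the final multiplication $\Cv\Qv^{-1}$ performed only on success and the rank computation of the $m\times n$ matrix $\Cv$ (both $O(mn^{\omega-1})=O(m^\omega)$) are absorbed into this term. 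The second expensive operation is the computation of the special codeword $\Cv$: the prescribed zeros impose $am+(b-1)=k-1$ linear constraints on the $k$ coefficients of a codeword, so $\Cv$ is obtained by finding a nonzero vector in the left kernel of a $k\times(k-1)$ submatrix of the $k\times mn$ generator matrix and then reconstructing the full codeword. This costs $O(k^\omega)+O(kmn)=O(mn\,k^{\omega-1})$, using $k\le mn$ and $\omega\ge 2$. Summing the two contributions yields the announced per-loop bound $O(km^\omega+mn\,k^{\omega-1})$.

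For the average complexity I would argue that the loop is repeated until success and that the successive trials behave as independent Bernoulli experiments. At each pass $\Qv$ is drawn uniformly and independently in $\mathbf{GL}_n(\Fq)$, so $\Cv$ has the distribution of the prescribed-zero codeword of a uniformly random code, and Proposition~\ref{prop:prob_rank} gives the success probability
\[
p=\Prob(|\Cv|=s)\approx q^{\,s(m+n-a-s)+m(a-n)+\max(0,\,b-1-s)}.
\]
(The actual test is $|\Cv|\le s$, but ranks strictly below $s$ contribute only a lower-order term, so $\Prob(|\Cv|\le s)\approx p$.) The expected number of iterations of a geometric law of parameter $p$ is $1/p$, namely
\[
q^{\,m(n-a)-s(m+n-a-s)-\max(0,\,b-1-s)}=q^{\,m(n-a)-s(m+n-a-s)+\min(0,\,s-b+1)},
\]
where I used $-\max(0,b-1-s)=\min(0,s-b+1)$. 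Multiplying this by the per-loop cost $O(km^\omega+mn\,k^{\omega-1})$ gives precisely the stated average complexity.

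The matrix-multiplication bookkeeping is routine; the only point that requires care---and hence the main obstacle---is the passage from Proposition~\ref{prop:prob_rank}, stated for a \emph{uniformly random} code $\Cmat$, to the present setting where $\Cmat$ is fixed and all the randomness comes from $\Qv$. One must justify, at least at the heuristic level standard for information-set-decoding analyses, that right multiplication by a uniform $\Qv\in\mathbf{GL}_n(\Fq)$ randomizes the code enough that the rank profile of the prescribed-zero codeword $\Cv$ matches the one computed in Proposition~\ref{prop:prob_rank}, and that distinct iterations may be treated as independent so that the number of trials indeed follows a geometric law of parameter $p$.
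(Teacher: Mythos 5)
Your proof is correct and follows essentially the same route as the paper's: the per-loop cost is split into the $O(km^\omega)$ computation of a basis of $\Cmat\Qv$ plus the $O(mnk^{\omega-1})$ Gaussian elimination producing $\Cv$, and the expected number of iterations is the inverse of the probability from Proposition~\ref{prop:prob_rank}. Your closing caveat about transferring that proposition (stated for a uniformly random code) to the fixed code randomized by $\Qv$ is a fair point that the paper's own proof silently elides.
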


\begin{proof}
  Computing a basis of $\Cmat \Qv$ consists in performing $k$ products
  of $m \times m$ matrices, hence a cost in $O(km^\omega)$ where
  $\omega$. Then, the calculation of $\Cv$ is done using Gaussian
  elimination: regarding an $m \times n$ matrix as a vector of length
  $mn$ we have to perform elimination on a $k \times mn$ matrix
  which costs $O(mn k^{\omega-1})$ operations. This yields the
  complexity of a single loop.
  The average number of loops
  is given by Proposition~\ref{prop:prob_rank}. Hence the result.
  \iftoggle{llncs}{\qed}{}
\end{proof}

\subsection{An exponential time distinguisher based on (the lack of) low rank
  codewords}\label{ss:distinguisher}
Our public key is the code
\iftoggle{llncs}{
\(
  \Dmat = \C_s \oplus \Amat,
\)
  }{
\[
  \Dmat = \C_s \oplus \Amat,
\]
}
where $\C_s$ is a sub-matrix code of codimension $\ell_s$ of a matrix
Gabidulin code of $\Fqm$--dimension $\kappa$ and hence of $\Fq$--dimension
\[
  k \eqdef \kappa m
\] and $\Amat$ is a random code such that
$\dim \Amat = \ell_{a}$ and $\C \cap \Amat = \{\vec 0\}$.

For the sake of simplicity, for the description of the distinguisher
we assume $\ell_s$ to be zero. This is the case for some proposed
parameters and even if it does not hold for other ones, the integers~$\ell_s$ remains small and hence will not harm that much the
efficiency of our proposed distinguisher.

Our distinguisher works as follows. Since $\Dmat$ contains a matrix Gabidulin
code, then $\Dmat^\perp$ is a subcode of codimension $\ell_a$ in an
$[m\times n, r m]$ Gabidulin code where
\[r \eqdef n-\kappa~.\] Therefore, the minimum distance of
$\Dmat^\perp$ is at least $m-r+1$ and hence, $\Dmat$ cannot contain
matrices of rank $n-r$ while, as already mentioned in
Remark~\ref{rem:Gab_above_GV}, random codes of the same dimension
do. Let,
\[
  \ell_a = \lambda m + \mu \quad \text{for} \quad
  0 \leq \mu < m~.
\]
Therefore,
\[
  \dim \Dmat^\perp = (r-\lambda-1)m + (m-\mu)\quad  \text{with}\quad
  0 < m-\mu \leq m \ .
\]
As already mentioned, in a random code of dimension $k = mr-\ell_a$,
according to Proposition~\ref{prop:complexity_finding_low_weight} applied to~$a = r-\lambda-1$,
$b = m-\mu$ and $s = n-r$,  codewords of rank $s = n-r$ can be found in
\iftoggle{llncs}{
\(
  \widetilde{O}\left(q^{r(\lambda+1 + m-n) + \min (0, \mu-r+1)} \right)\)
  operations in $\Fq$.
}{
\[
  \widetilde{O}\left(q^{r(\lambda+1 + m-n) + \min (0, \mu-r+1)} \right)\quad
  \text{operations\ in}\ \Fq \ .
\]
}
Therefore, to distinguish between $\Dmat$ and a random code, it suffices to seek for codewords with rank-weight $\kappa = n-r$ which has the aforementioned cost. If we find such codeword, then the code is random, otherwise it is likely to come from the $\mathsf{AddRemove}$ construction applied to a matrix Gabidulin code. 

Notice that our distinguisher has exponential time complexity. However, its running time will be below the security level for the parameters we choose in Section~\ref{sec:param} ($r$ is chosen as being quite small).

\medskip 

{\bf \noindent Comment about this distinguisher.}
We gave a distinguisher on the public keys, which despite its
exponential time complexity is practicable on the parameters we will
propose. However, we decided not to consider this as a threat. Indeed,
if many attacks in the literature on code--based cryptography rest on
the calculation of minimum weight codewords (see for instance
Sidelnikov Shestakov attack \cite{SS92}) here the situation is different
since the distinguisher is based on the absence of low weight codewords.
Such a lack of low weight codewords, despite being practically observable
for the parameters we will propose in Section~\ref{sec:param} does not
look to be exploitable to mount an attack.

\subsection{An attack on the system}
Here we present a key-recovery attack and first give a quick overview.
The goal is, starting from the dual public code $\Dmat^\perp$ to
recover the $\Fqm$--linear structure of the underlying Gabidulin code.
Before giving an overview of the attack, let us discuss a bit further
about the hidden $\Fqm$--linearity.

\subsubsection{About the hidden $\Fqm$--linear structure}
If we had access only to a matrix
description of a Gabidulin code $\Cmat$, the $\Fqm$--linear structure
could be recovered by computing the left--stabilizer algebra of the code,
namely,
\[
  \text{Stab}_L(\Cmat) \eqdef \{\Av \in \Fq^{m \times m} ~|~ \Av \Cmat
  \subseteq \Cmat\}~.
\]
The computation of this algebra boils down to the resolution of a
linear system and the obtained algebra is nothing but a representation
of $\Fqm$. With this algebra at hand, one recovers the $\Fqm$--linear
structure of the code (see \cite{CDG20} for further details). Once this
$\Fqm$--linear vector code structure is recovered, one can apply
classical key recovery attacks on Gabidulin codes
\cite{O06,GF22,CZ23}.

However, in our setting, we only have access to a public code $\Dmat$
which contains a subcode of a matrix Gabidulin code. Thus, it is no longer
$\Fqm$--linear and hence the left stabilizer algebra has no reason to
be isomorphic to $\Fqm$ and will be composed only by scalar matrices
that is to say, with a high probability
\[
  \text{Stab}_L (\Dmat) = \{\alpha \mathbf I_{m} ~|~ \alpha \in \Fq \} \ .
\]
Still, the public code contains a subcode of a Gabidulin code and the
latter has an $\Fqm$--linear structure and hence a non trivial left stabilizer algebra.
The attack to follow targets to recover this algebra which is isomorphic to $\Fqm$.

\subsubsection{Context}
As in Subsection~\ref{ss:distinguisher}, we introduce $\lambda, \mu$ such that
\begin{equation}\label{eq:l_a}
  \ell_a - \ell_s = \lambda m + \mu \quad \text{with}\quad 0 \leq \mu < m~.
\end{equation}
The code $\Dmat^\perp$ satisfies
\iftoggle{llncs}{\(
  \Dmat^\perp = \E \oplus \Rs~,
\)}{
\[
  \Dmat^\perp = \E \oplus \Rs~,
\]
}
where\medskip 
\begin{itemize}\setlength{\itemsep}{5pt}
\item $\Rs$ is a random matrix code of dimension $\ell_s$;
\item $\E$ is random subcode of codimension $\ell_a$ of the Gabidulin
  code $\C^\perp$, which has dimension $rm = (n-\kappa)m$;
\item $\Rs \cap  \C^\perp = \{\vec 0\}$.
\end{itemize}
\medskip 
The core of the attack consists in finding a pair of codewords of
$\mathcal E$ (and hence of $\mathcal{C}^\perp$) that are~$\Fqm$--collinear when regarded as vectors. Finding such a pair is
hard. A first hint for the search is the following statement.

\begin{lemma}\label{lem:row_space}
Let $\Cv, \Cv' \in \mathcal{C}^\perp$ which are $\Fqm$--collinear when
  regarded as vectors. Then the matrices~$\Cv, \Cv'$ have the same row space.
\end{lemma}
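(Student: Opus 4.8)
The plan is to convert the $\Fqm$--collinearity of the two codewords, viewed as vectors, into a left multiplication by an invertible $\Fq$--matrix, and then to use the elementary fact that left multiplication by an invertible matrix leaves the row space unchanged.

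First I would name the underlying vectors. Set $\cv \eqdef M_{\mathscr{B}}^{-1}(\Cv)$ and $\cv' \eqdef M_{\mathscr{B}}^{-1}(\Cv')$, the two vectors of $\Fqm^n$ associated with $\Cv$ and $\Cv'$ through the isometry $M_{\mathscr{B}}$. The hypothesis that $\Cv$ and $\Cv'$ are $\Fqm$--collinear as vectors means that there is a scalar $\alpha \in \Fqm \setminus \{0\}$ with $\cv' = \alpha \cv$ (the degenerate situation where one of the two vectors vanishes forces the other to vanish as well, and then both row spaces are trivial).

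Next I would invoke the observation recalled just before Definition~\ref{def:gab}: through $M_{\mathscr{B}}$, the multiplication-by-$\alpha$ map on $\Fqm^n$ corresponds to left multiplication by a matrix $\Pv_\alpha \in \Fq^{m\times m}$ representing multiplication by $\alpha$, that is $M_{\mathscr{B}}(\alpha \vec{x}) = \Pv_\alpha M_{\mathscr{B}}(\vec{x})$ for every $\vec{x} \in \Fqm^n$. Applying this to $\vec{x} = \cv$ yields
\[
\Cv' = M_{\mathscr{B}}(\cv') = M_{\mathscr{B}}(\alpha \cv) = \Pv_\alpha M_{\mathscr{B}}(\cv) = \Pv_\alpha \Cv .
\]
Since $\alpha \neq 0$, multiplication by $\alpha$ is an $\Fq$--linear bijection of $\Fqm$, so $\Pv_\alpha \in \GL_m(\Fq)$. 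Because $\Pv_\alpha$ is invertible, each row of $\Cv' = \Pv_\alpha \Cv$ is an $\Fq$--linear combination of the rows of $\Cv$, and conversely $\Cv = \Pv_\alpha^{-1}\Cv'$ shows every row of $\Cv$ lies in the row space of $\Cv'$. Hence $\Cv$ and $\Cv'$ share the same row space, as claimed.

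There is essentially no obstacle in this lemma: the whole content is the identification of $\Fqm$--scalar multiplication on vectors with left multiplication by the invertible matrix $\Pv_\alpha$, which has already been set up in the matrix-representation discussion, and the rest is the standard invariance of the row space under left multiplication by an element of $\GL_m(\Fq)$.
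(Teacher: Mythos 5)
Your proposal is correct and follows essentially the same route as the paper: both identify the $\Fqm$--scalar $\alpha$ with left multiplication by an invertible matrix $\Pv_\alpha \in \GL_m(\Fq)$ (the paper phrases this via the left stabilizer algebra being isomorphic to $\Fqm$, so its nonzero elements are invertible) and then conclude by invariance of the row space under left multiplication by a nonsingular matrix. Your explicit treatment of the degenerate zero-vector case is a small extra care the paper omits, but the argument is the same.
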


\begin{proof}
  If $\Cv, \Cv'$ are $\Fqm$--collinear when regarded as vectors, then
  there exists $\Pv$ in the left stabilizer algebra of
  $\mathcal{C}^\perp$ such that $\Cv' = \Pv \Cv$. Since the left
  stabilizer algebra of $\mathcal{C}^\perp$ is isomorphic (as a ring)
  to $\Fqm$, all its nonzero elements are invertible. Thus, $\Pv$ is
  nonsingular and the previous equality entails that $\Cv, \Cv'$ have
  the same row space. \iftoggle{llncs}{\qed}{}
\end{proof}

\subsubsection{Presentation of the attack} We fix a target rank $s$
(specified in
\iftoggle{llncs}{Appendix~}{Section~}\ref{ss:target_s}).
\medskip 
\begin{itemize}\setlength{\itemsep}{5pt}
\item[\textbf{Step 1.}] We search a codeword $\Cv \in \Dmat^\perp$ of
  rank $s$, using Algorithm~\ref{algo:low_weight}.
\item[\textbf{Step 2.}] If there exists another codeword
  $\Cv'\in \Dmat^\perp$ non $\Fq$--collinear to $\Cv$ and with the
  same row space as $\Cv$ we store it, otherwise we restart the
  process from the beginning.
\item[\textbf{Step 3.}] Given $\Cv, \Cv'$ we run a routine described
  further.  If the two matrices $\Cv, \Cv'$ are both in the subcode
  $\mathcal E$ of the Gabidulin code and that they are
  $\Fqm$--collinear (when regarded as vectors), then the
  aforementioned routine succeeds and the left stabiliser algebra of
  the hidden Gabidulin code is recovered. Otherwise, restart the
  process from the beginning.
\item[\textbf{Step 4.}] When the hidden $\Fqm$--linear structure is
  recovered, there are various manners to recover the structure of the
  secret key and then to forge signatures. Examples of possible
  recoveries are given in the end of the present section.
\end{itemize}
\medskip

\iftoggle{llncs}{
    The complexity of the attack is summarised in the statement below whose proof was sent in Appendix~\ref{sec:app_secu}.

    \begin{restatable}{proposition}{propositioncomplexity}\label{prop:complexity_structural_attack}
      The complexity of the attack counted in operations in $\Fq$ is
      \begin{align*}
      &\Omega \left(\frac{1}{b} (k^{\omega - 1}mn + k m^\omega +
        m^{\omega+1}n^{\omega - 1})q^{\ell_a + \ell_s - m - b}\right) \quad \text{if}\ s = n-a \\
                &\Omega \left(\left( k^{\omega - 1}m^2 +km^\omega + p\cdot m^{\omega+1}n^{\omega - 1}\right) \cdot P \right) \qquad\quad \text{if}\ n-r+1 \leq s<n-a,
      \end{align*}
    where
    \iftoggle{llncs}{
$P  = q^{\ell_s+\ell_a+2-m + (m-s)(n-a-s)-\max(0, b-1-s)}$ and  $p  = q^{m(a+s-n)+b-2}$.
      }{    \begin{itemize}
  \item[\textbullet] $P  = q^{\ell_s+\ell_a+2-m + (m-s)(n-a-s)-\max(0, b-1-s)}$~;
  \item[\textbullet] $p  = q^{m(a+s-n)+b-2}$~.
  \end{itemize}
}
    \end{restatable}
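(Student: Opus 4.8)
The plan is to write the total cost as (number of outer iterations of Steps~1--3)~$\times$~(cost paid per iteration), where an iteration is counted a \emph{success} exactly when it produces a pair $(\Cv,\Cv')$ of codewords that both lie in the subcode $\E$ of the Gabidulin code $\C^\perp$ and are $\Fqm$--collinear when regarded as vectors. First I would recast everything into the framework of Algorithm~\ref{algo:low_weight}. Writing $\ell_a-\ell_s=\lambda m+\mu$ with $0\le\mu<m$ as in~\eqref{eq:l_a}, the searched code $\Dmat^\perp$ has dimension
\[
\dim \Dmat^\perp = m(n-\kappa)-(\ell_a-\ell_s) = (r-\lambda-1)m+(m-\mu);
\]
this is the quantity denoted $k$ in the statement, and in the notation~\eqref{eq:decomp_K} it gives $a=r-\lambda-1$, $b=m-\mu$ with $0<b\le m$. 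This identification is the sole role of $\lambda,\mu$, and it makes Propositions~\ref{prop:prob_rank} and~\ref{prop:complexity_finding_low_weight} applicable verbatim to $\Dmat^\perp$ for the target rank $s$ fixed in \iftoggle{llncs}{Appendix~}{Section~}\ref{ss:target_s}.

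Next I would isolate the three per-iteration cost contributions. Step~1 is one call to Algorithm~\ref{algo:low_weight}, whose single loop costs $O(km^\omega+mn\,k^{\omega-1})$ by Proposition~\ref{prop:complexity_finding_low_weight}; the two regimes differ only in how many loops are needed to reach rank $s$, namely $\Theta(1)$ when $s=n-a$ (Gaussian elimination generically already yields rank $n-a$) and $\approx 1/\Prob(|\Cv|=s)$ when $n-r+1\le s<n-a$, the probability being the estimate of Proposition~\ref{prop:prob_rank}. Step~2 restricts $\Dmat^\perp$ to the codewords whose rows lie in the $s$--dimensional row space of $\Cv$; by Lemma~\ref{lem:row_space} any $\Fqm$--collinear partner must be found there, and this subcode has expected dimension $\dim\Dmat^\perp-m(n-s)$, equal to $b$ when $s=n-a$ and at most $b-m\le 0$ when $s<n-a$ (so that generically only $\Fq\cdot\Cv$ survives). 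Its computation is the linear-algebra term $k^{\omega-1}mn$ (or $k^{\omega-1}m^2$ in the sharper second-case bound). Step~3 is the recovery routine applied to a candidate pair, whose cost I would bound by $O(m^{\omega+1}n^{\omega-1})$; the key remark is that it is triggered only when Step~2 returns a non-$\Fq$--collinear partner. When $s=n-a$ the $b$--dimensional subcode always furnishes such a partner, so the routine runs every iteration and carries no extra weight; when $s<n-a$ it is reached only with the probability $p=q^{m(a+s-n)+b-2}$ that the same-row-space subcode is nonempty beyond $\Fq\cdot\Cv$, which is exactly the weight attached to the $m^{\omega+1}n^{\omega-1}$ term in the second case.

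The heart of the proof, and what I expect to be the main obstacle, is the determination of the outer iteration counts $\tfrac1b\,q^{\ell_a+\ell_s-m-b}$ and $P$, i.e.\ of the per-iteration success probability. I would split it into two independent layers. First, a rank-$s$ codeword produced by Step~1 lies in $\E$ only with probability $\approx q^{-\ell_s}$, since $\E$ has codimension $\ell_s$ in $\Dmat^\perp=\E\oplus\Rs$; this contributes a restart factor $q^{\ell_s}$. Second, conditioning on $\Cv\in\E\subseteq\C^\perp$, the $\Fqm$--line through $\Cv$ is an $m$--dimensional $\Fq$--subspace of $\C^\perp$ whose nonzero elements all have rank $s$ and all share the row space of $\Cv$; it meets $\E$ (codimension $\ell_a$ in $\C^\perp$) in expected $\Fq$--dimension $m-\ell_a$, so a non-$\Fq$--collinear partner inside $\E$ exists only with probability $\approx q^{m-\ell_a}$, contributing a further restart factor $\approx q^{\ell_a-m}$. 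Multiplying the loop count of Proposition~\ref{prop:prob_rank} by these two factors yields the exponent of $P$ up to additive constants absorbed into $\Fq$--scaling ($q-1\approx q$), and the same bookkeeping in the regime $s=n-a$, where the $\Theta(q^{b})$ codewords of the same-row-space subcode are examined at once, produces $q^{\ell_a+\ell_s-m-b}$ together with the polynomial gain $1/b$.

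Finally I would assemble the total in each regime as (outer count)$\times$(loop cost $+$ Step~2 cost $+$ (trigger probability)$\times$Step~3 cost) and retain the dominant product. Since the statement is an $\Omega$--bound it suffices to exhibit, in each case, the single product realising the displayed exponent; the delicate verification is that the estimates $\sigma_s^{m,n,q}\approx q^{s(m+n-s)}$ and the approximation of Proposition~\ref{prop:prob_rank} hide no factor larger than those already shown. The genuine difficulty lies entirely in the probabilistic analysis of the collinear-pair search over the \emph{non}-$\Fqm$--linear code $\Dmat^\perp$, where the MRD weight distribution of $\C^\perp$ must be combined consistently with the two codimension counts $\ell_s$ and $\ell_a$.
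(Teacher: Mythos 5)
Your proposal follows essentially the same route as the paper's own proof: the same decomposition of the per-iteration cost into the Algorithm~\ref{algo:low_weight} loop, the same-row-space restriction, and the Step~3 routine weighted by the trigger probability $p$; the same two probabilistic layers (a $q^{-\ell_s}$ factor for $\Cv \in \E$ and an $\approx q^{m-\ell_a}$ factor for the existence of an $\Fqm$--collinear partner, obtained from the expected dimension of the intersection of the $\Fqm$--line with $\Dmat^\perp$ and a Markov-type bound); and the same use of Proposition~\ref{prop:prob_rank} to count loops in the regime $s<n-a$. The only difference is presentational — you state the collinearity probability via the expected dimension directly where the paper sums the Markov bound over $\delta \in [1,b-1]$ to extract the $(b-1)$ factor — so this is the paper's argument in all essentials.
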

  }{}

 	\iftoggle{llncs}{}{
        \iftoggle{llncs}{
    \section{Further details on the key recovery attack in Section~\ref{sec:secu}\label{sec:app_secu}}
  }{}

\subsection{The target rank $s$}\label{ss:target_s}
According to the previous discussion (see (\ref{eq:l_a}) and below),
the code~$\Dmat^\perp$ has dimension
\[
  k = (r-\lambda -1)m + (m-\mu) \quad \text{with}\quad 0 \leq \mu < m \ .
\]
Set
\begin{equation}\label{eq:ab}
  a \eqdef (r-\lambda -1), \quad b \eqdef m - \mu,
\end{equation}
which
gives
\[
  k = am+b,\quad \text{with} \quad 0 < b \leq m \ .
\]
The
target rank $s$ will be chosen in the range
\begin{equation}\label{eq:range_s}
  s \in [n-r+1, n-a]\ .
\end{equation}
Indeed,
the underlying Gabidulin code $\mathcal{C}^\perp$ has minimum distance
$n-r+1$ and $\mathcal{C}^\perp \subseteq \mathcal E$, codewords of
weight less than $n-r+1$ are rather unlikely in $\Dmat^\perp$. On the other
hand, the discussion in Section~\ref{ss:basic_routine} shows that, by
Gaussian elimination, computing codewords of weight $n-a$ can be done
in polynomial time: for $s = n-a$, Algorithm~\ref{algo:low_weight}
succeeds in a single loop. The trade-off we have to find comes from the following discussion:
\medskip 
\begin{itemize}\setlength{\itemsep}{5pt}
\item When $s$ is close to $n-r+1$, codewords of weight $s$ are hard to find but a pair
  of such codewords are likely to be $\Fqm$--collinear when regarded as vectors;
\item When $s$ is close to $n-a$, codewords of weight $s$ are easy to
  find but a pair of them is unlikely to be $\Fqm$--collinear.
\end{itemize}
\medskip 
We will give a complexity formula for any value of $s \in [n-r+1, n-a]$. In practice,
the best complexities turned out to occur for $s = n-a$ or $s = n-a-1$.

\subsection{Description of Step 3}\label{ss:compute_Fqm}
Suppose we are given a pair $\Cv, \Cv' \in \Dmat^\perp$ with the same row space,
we aim to decide whether they come from $\Fqm$--collinear vectors and if they do,
to deduce the left stabilizer of the hidden Gabidulin code $\mathcal{C}^\perp$:
\[
  \A \eqdef \{\Pv \in \Fq^{m\times m} ~|~ \Pv \C^\perp = \C^\perp\}~,
\]
which is a sub-algebra of $\Fq^{m\times m}$ isomorphic to $\Fqm$.

The key idea is simple: if $\Cv, \Cv'$ are $\Fqm$--collinear, then there exist
$\Pv \in \mathbf{GL}_m(\Fq)$ such that
\[
  \Cv' = \Pv \Cv.
\]
Therefore, we compute all the solutions~$\Xv$
of the affine system
\[
  \Cv' = \Xv \Cv~.
\]
Clearly $\Pv$ is a solution of this system. Unfortunately it is not
unique. Indeed, the space of solutions
is nothing but the space of matrices $\Xv$ such that
\(
  (\Xv - \Pv)\Cv = 0.
\)
Equivalently, it is the affine space of matrices of the shape:
\begin{equation}\label{eq:sol_1}
  \{\Pv + \Mv ~|~ \text{RowSpace}(\Mv) \in \ker_L(\Cv)\}~,
\end{equation}
where $\ker_L(\Cv)$ denotes the left kernel of $\Cv$.
Similarly, one can search $\Pv^{-1}$ as the solutions $\Xv$ of~\(
  \Cv = \Xv \Cv'
\)
and get the affine space
\begin{equation}\label{eq:sol_2}
  \{\Pv^{-1} + \Nv ~|~ \text{RowSpace}(\Nv) \in \ker_L(\Cv')\} \ .
\end{equation}

After possibly performing a change of basis (which will consist
in left multiplying the public code by some nonsingular matrix), one can
assume that
\medskip 
\begin{itemize}\setlength{\itemsep}{5pt}
\item $\ker_L (\Cv')$ is spanned by the $n-s$ last elements of the canonical basis.
\end{itemize}
\medskip 
Thus, let us pick nonsingular matrices $\Av, \Bv$ in the solutions spaces
(\ref{eq:sol_1}) and (\ref{eq:sol_2}) respectively. With the
aforementioned change of basis we have
\begin{align*}
  \Pv &= \Av + \Vv \quad \text{for some } \Vv \in
  \Fq^{m\times n}~;\\
  \Pv^{-1} &= \Bv + (\mathbf{0} ~|~ \Wv) \quad \text{for some } \Wv \in
  \Fq^{m\times (n-s)}~.
\end{align*}
Recall at this step that $\Av, \Bv$ have been taken in the solutions
spaces (\ref{eq:sol_1}) and (\ref{eq:sol_2}) respectively. Hence they are
known, while $\Pv, \Vv, \Wv$ are still unknown.
Now note that
\[
   \mathbf{I}_m = \Av \Bv 
  + \Vv\Bv + (\vec 0 ~|~ \Av \Wv) + \Vv (\vec 0 ~|~ \Wv)~.
\]
The sum of the last two terms, namely
$(\vec 0 ~|~ \Av \Wv) + \Vv (\vec 0 ~|~ \Wv)$ has the same row space
as $(\vec 0 ~|~ \Wv)$ and hence has its $s$ leftmost columns equal to
zero. Consequently, we know that the $s$ leftmost columns of $\Vv\Bv$
equal those of $\mathbf{I}_m - \Av \Bv$.  Regarding $\Vv$ as an
unknown, and since~$\Vv$'s row space is contained in $\ker_l \Cv$
which has dimension $n-s$, this yields a linear system with $m(n-s)$
unknowns in $\Fq$ (the entries of $\Vv$) and $ms$ equations (one per
entry in the $s$ leftmost columns). From (\ref{eq:range_s}),
$s \geq n-r+1$ and, in our instantiations, $r = m - \kappa$ is small
(see Section~\ref{sec:param}), we have $s \geq n-s$ and hence the
latter system is over-constrained and its resolution is very likely to
provide~$\Vv$, which yields $\Pv$.

Once $\Pv$ is obtained, it suffices to compute the matrix algebra
$\Fq[\Pv]$ spanned by $\Pv$. This algebra is very likely to be
$\A$. In case we only obtain a subalgebra of $\A$, which would
correspond to a subfield of $\Fqm$ we re-iterate the process and
compose the obtained algebras until we get an algebra of dimension
$m$, which will occur after a constant number of iterations.

In \emph{summary} the previous process permits to deduce the left
stabilizer $\A$ of $\C^\perp$ and hence to get the hidden $\Fqm$--linear
structure. Moreover, if $\Cv, \Cv'$ were not $\Fqm$--collinear, then the above
process will fail. Thus it can also be used to decide whether $\Cv, \Cv'$
are $\Fqm$--collinear.

\begin{lemma}\label{lem:complexity_step3}
  The previous procedure permits to decide whether a given pair
  $\Cv, \Cv'$ of rank $s$ and with the same row space come from
  $\Fqm$--collinear vectors.  If they do, the procedure 
  computes the left stabilizer algebra of the hidden Gabidulin code.
  The complexity of the process is
  \[
    O(m^{\omega +1}n^{\omega - 1}) \quad \text{operations in }\Fqm.
  \]
\end{lemma}

\begin{proof}
  Only the complexity should be checked. The bottleneck of the process
  is the resolution of the system $\Cv' = \Xv \Cv$, which has $m^2$ unknowns
  and $mn$ equations. Hence a complexity in~$O(m^2(mn)^{\omega - 1})$. \iftoggle{llncs}{\qed}{}
\end{proof}

\subsection{Step 4. Finishing the attack}\label{ss:finishing}
Once the $\Fqm$--linear structure is computed, one can complete the
key recovery in many different manners. We briefly explain one
possible approach, but claim that many other ones exist.  Note that
the cost of the routine described below remains negligible compared to
that of obtaining $\Cv, \Cv'$ (as seen in
Section~\ref{ss:compute_Fqm}).

Recall that $\Dmat^\perp = \E \oplus \Rs$, for some random $\Rs$ of
dimension $\ell_s$ and $\E$ being a codimension~$\ell_a$ sub-code of
the matrix Gabidulin code $\C^\perp$.  Suppose we also have access to
the left stabilizer algebra of $\C^\perp$.
Compute:
\[
  \mathcal{A} \Dmat^\perp = \mathcal{A} \mathcal{E} + \mathcal{A} \mathcal{R}_s.
\]
This space can be computed as the span of all the possible products of
elements of a given $\Fq$--basis of $\mathcal{A}$ by an element of a
given $\Fq$--basis of $\Dmat^\perp$. The space
$\mathcal{A} \mathcal{E}$ is the ``$\Fqm$--linear span of
$\mathcal E$'', which is very likely to be $\C^\perp$.  The other term
$\mathcal{A} \mathcal{R}_s$ is a random $\Fqm$--linear code of
dimension~$\leq m\ell_s$. Next, the dual of this code is an
$\Fqm$--linear sub-code of $\C$ of $\Fqm$--codimension at most
$\ell_s$.  Then, the corresponding Gabidulin code $\C$ can be
recovered from the sub-code using Overbeck--like techniques \cite{O05,O05a}.

\subsection{Complexity of the attack}\label{ss:attack_complexity}

\iftoggle{llncs}{
    \propositioncomplexity*}{
\begin{proposition}\label{prop:complexity_structural_attack}
  The complexity of the attack is the following.
  \begin{enumerate}
  \item\label{item:prop_1} When choosing $s = n-a$:
    \[
      \Omega \left(\frac{1}{b} (k^{\omega - 1}mn + k m^\omega +
        m^{\omega+1}n^{\omega - 1})q^{\ell_a + \ell_s - m - b}\right)
      \quad \text{operations in }\Fq \ .
    \]
    \item\label{item:prop_2} When choosing $n-r+1 \leq s<n-a$:
      \[
        \Omega \left(\left( k^{\omega - 1}m^2 +km^\omega + p\cdot m^{\omega+1}n^{\omega - 1}\right) \cdot P \right)
      \quad \text{operations in }\Fq,
    \]
    where
    \begin{align*}
  P & = q^{\ell_s+\ell_a+2-m + (m-s)(n-a-s)-\max(0, b-1-s)}; \\
  p & = q^{m(a+s-n)+b-2} \ .
\end{align*}
  \end{enumerate}
\end{proposition}
}
  \begin{proof}[Sketch of Proof of the case $s=n-a$]
    The calculation of a pair $\Cv, \Cv'$ of rank $s$ with the same
    row space can be done in a single loop of
    Algorithm~\ref{algo:low_weight}. Its cost is
    $O(k^{\omega - 1}mn + km^{\omega})$, according to
    Proposition~\ref{prop:complexity_finding_low_weight}.

    Next, the probability that $\Cv$ is in $\mathcal E$ is
    $q^{- \ell_s}$ since $\mathcal E$ has codimension $\ell_s$ in
    $\Dmat$.

    Let $W$ be the subspace of $\Dmat^\perp$ of matrices whose row
    space in contained in that of $\Cv$ has~$\Fq$--dimension
    $b$. According to Figure~\ref{fig:matrix_with_man_zeroes} the
    computation of $W$ can be regarded as the computation of a
    subspace of matrices whose $a$ leftmost columns are $0$. Therefore
    $W$ is very likely to have codimension $ma$ in $\Dmat^\perp$ and
    hence we very likely have
    \[
      \dim W = b\ .
    \]
    
    The matrix $\Cv'$ was picked uniformly at random in
    $W \setminus \langle \Cv \rangle_{\Fq}$.
    Let us estimate the
    probability that $\Cv'$ is $\Fqm$--collinear to $\Cv$ in
    $W$.
    An element $\Cv'$ which is $\Fqm$--collinear to $\Cv$ without
    being~$\Fq$--collinear to it lies in $\mathcal{C}^\perp$ and the
    probability that $\Cv'$ lies in
    $\Dmat^\perp \setminus \langle \Cv \rangle_{\Fq}$ is
    $q^{- \ell_a}$.  Consider the space $V \subseteq \Fq^{m \times n}$
    of matrices $\Fqm$--collinear to $\Cv$ and $V_0$ a complement
    subspace of $\langle \Cv \rangle_{\Fq}$ in $V$.  Then $V_0$ has $\Fq$--dimension
    $m-1$ and
    \[
      \mathbb{E} \left(\sharp  V_0 \cap \Dmat^\perp \right) = q^{m-1-\ell_a} \ . 
    \]
    From Markov inequality, for any $1 \leq \delta \leq b-1$
    \[
      \prob \left( \dim V \cap \Dmat^\perp \geq {\delta} +1 \right) =
      \prob \left( \dim V_0 \cap \Dmat^\perp \geq {\delta} \right)
      \leq q^{m-1-\ell_a-\delta}.
    \]
    Denote by $\mathscr{E}$ the event ``$\Cv'$ is $\Fqm$--collinear to $\Cv$''.
    Next, since $\Cv'$ is drawn uniformly at random in $W \setminus \langle \Cv \rangle_{\Fq}$,
    we have
    \begin{align*}
      \prob \big(\mathscr{E} \big)  
                                   & = \sum_{\delta=1}^{b-1}  \prob \big(\mathscr E  ~|~ \dim V_0 \cap \Dmat^\perp=\delta\big) \prob(\dim V_0 \cap \Dmat^\perp=\delta)\\
             & \leq \sum_{\delta = 1}^{b-1} q^{\delta+1-b} q^{m-1-\ell_a - \delta} =  (b-1) q^{m-\ell_a+b}
    \end{align*}
    In summary, the probability to find $\Cv, \Cv'$ such that $\Cv \in \mathcal \Cv^\perp$ and $\Cv'$ is $\Fqm$--collinear to $\Cv$
    is upper bounded by
    \[
      (b-1)q^{m - \ell_a + b - \ell_s} \ .
    \]
    Hence the average number of times we iterate the process is bounded from below by the inverse of the above probability.
    Moreover, one iteration consists in one loop of Algorithm~\ref{algo:low_weight} followed by the routine described in Section~\ref{ss:compute_Fqm}
    (Step 3).
    Thus, according to Proposition~\ref{prop:complexity_finding_low_weight} and Lemma~\ref{lem:complexity_step3}, the
    complexity of one iteration is $O(k^{\omega - 1}mn + k m^\omega +
        m^{\omega+1}n^{\omega - 1})$.
    \iftoggle{llncs}{\qed}{}
  \end{proof}

  \begin{proof}[Sketch of Proof of the case $s< n-a$]
    We run Algorithm~\ref{algo:low_weight} to get $\Cv$ of rank $s$. From Proposition~\ref{prop:complexity_finding_low_weight}
    the probability to get such
    a $\Cv$ is \[\approx q^{s(m+n-a-s)+ m(a-n)+\max(0, b-1-s)}~.\] Moreover, the probability that it also lies in $\C^\perp$ is $q^{-\ell_s}$.
    Thus, the probability to get a valid $\Cv$ in one loop of Algorithm~\ref{algo:low_weight} is
    \begin{equation}\label{eq:p1}
      \approx q^{s(m+n-a-s)+ m(a-n)+\max(0, b-1-s)-\ell_s}.
    \end{equation}

    Now, here the most likely situation is that, when $\Cv$ is found, the subspace $W$ of $\Dmat^\perp$ of matrices
    whose row space is contained in that of $\Cv$ will have dimension $1$ and be spanned by~$\Cv$. Indeed,~$\Cv$
    is a solution of an over-constrained system. Now we are getting into the unlikely situation where~$W$ has dimension $>1$.

    The overall space $W' \subseteq \Fq^{m\times n}$ of matrices whose
    row space is contained in that of $\Cv$ has dimension $ms$ and
    \[W = \Dmat^\perp \cap W'~.\]
    Choose $W'_0$ a complement
    subspace of $\langle \Cv \rangle_{\Fq}$ in~$W'$. 
    Regarding $\Dmat$ as a random space, the probability that an
    element of~$W'_0$ lies in $\Dmat^\perp$ is
    $q^{\dim \Dmat^\perp - mn} = q^{m(a-n)+b}$. Therefore,
    \begin{align*}
      \mathbb{E} (\sharp W'_0 \cap \Dmat^\perp) &= \sum_{\Dv \in W'_0 }\prob (\Dv \in \Dmat^\perp)
                              =  q^{ms-1}q^{m(a-n)+b} =  q^{m(a+s-n)+b-1}~.
    \end{align*}
    By Markov inequality,
    \begin{equation}\label{eq:p2}
      \prob (\dim  W'_0 \cap \Dmat^\perp \geq 1)  = \prob (\sharp  W'_0 \cap \Dmat^\perp \geq q) 
       \leq q^{m(a+s-n)+b-2}.
    \end{equation}

    This yields the probability that, when a $\Cv$ of rank $s$ is
    found in $\Dmat$, another $\Cv'$ non $\Fq$--collinear to $\Cv$ and
    with the same row space also lies in $\Dmat^\perp$. Still, such a
    $\Cv'$ might not be~$\Fqm$--collinear to $\Cv$.
    
    Actually, Using a similar reasoning as in the proof of the case
    $s = n-a$, we prove that the probability that a matrix $\Cv'$ that
    is $\Fqm$--collinear to $\Cv$ lies in $\Dmat$ is bounded from
    above by
    \begin{equation}\label{eq:p3}
      q^{m-2-\ell_a}.
    \end{equation}

    In summary, putting \eqref{eq:p1} and \eqref{eq:p3} we will be
    able to find a valid pair $\Cv, \Cv'$ in an average number of
    \begin{align*}
      \approx P & \eqdef q^{\ell_s - s(m+n-a-s)-m(a-n)-\max(0, b-1-s) - m + \ell_a+2}\\
                & = q^{\ell_s+\ell_a+2-m + (m-s)(n-a-s)-\max(0, b-1-s)} \quad \text{iterations of Algorithm~\ref{algo:low_weight}.}
   \end{align*}
   For each iteration, the most likely situation is that either there
   is no $\Cv$ or there is a $\Cv$ and no candidate for $\Cv'$. More
   precisely, when a $\Cv$ is found, the most likely situation is that the subspace of $\Dmat^\perp$ of
   matrices whose row space is contained in that of $\Cv$ has dimension 1 and is spanned by~$\Cv$ itself.
   In such a situation there is no need to run the routine described in Section~\ref{ss:compute_Fqm} (Step 3).
   Still, according to \eqref{eq:p2}, there is a proportion
   \[
     p \eqdef q^{m(a+s-1)+b-2}
   \]
   of cases where a candidate for $\Cv'$ exists without being necessarily $\Fqm$--collinear to $\Cv$. In such
   a case, we have to run the routine described in Section~\ref{ss:compute_Fqm} (Step 3.)
   This yields an additional cost of $O(m^{\omega +1}n^{\omega -1})$
   with a proportion $p$ in the complexity.
   \iftoggle{llncs}{\qed}{}
  \end{proof}

       }
	\section{Parameters}\label{sec:param}

\subsection{Parameters for our Miranda signature scheme}

In this section, we discuss the parameter selection of $\miranda$ when instantiated with Gabidulin codes (with respect to the attacks presented in the previous section). We consider Gabidulin codes whose length is equal to the dimension of the extension field: we choose $n=m$ for all our parameter sets. On the other hand, we consider only binary fields ($q=2$). The $128$-, $192$-, and $256$-bits security levels are considered. To ensure a fair comparison with other NIST-compliant signature schemes, we keep a 15-bits security margin above the required level.

The first (forgery) attack relies on solving a $\minrank$ instance (as per Definition~\ref{def:minRank}) with parameters $(m,n,\kappa m+\ell_{s}-\ell_a,q)$. Given a basis of a dual code~$(\vec{B}_{i})_{i=1}^{m(n-\kappa) + \ell_{s}-\ell_{a}}$ and a target~$\vec{s} \in \mathbb{F}_{q}^{m(n-\kappa)+\ell_{s}-\ell_{a}}$, it consists on finding an error~$\vec{E} \in \mathcal{B}_{\frac{n-\kappa}{2}}^{m,m,q}$ associated to $\vec{s}$, {\it i.e.,} 
\begin{equation}\label{eq:param} 
\left( \tr\left( \vec{E}\vec{B}_{i}^{\top} \right)\right)_{i=1}^{m(n-\kappa) + \ell_{s}-\ell_{a}} = \vec{s}
\end{equation} 
\emph{without attempting to exploit the structure of the}~$\vec{B}_{i}$'s \emph{which are considered as random}.

$\miranda$ has been instantiated as an FDH-signature scheme. In particular, for any $\vec{s}$ (the hash of the message to be signed together with an additional $\lambda$-bits salt, where $\lambda$ denotes the security level), there must exist a solution $\vec{E}$ of rank weight $\leq\frac{n-\kappa}{2}$ that forms a valid signature. This requirement in particular enforces that $\frac{n-\kappa}{2}$ must be greater than the \emph{Gilbert–Varshamov radius} of an $[m\times m, \kappa m -\ell_s+\ell_a]$--matrix code. 
Recall that the Gilbert–Varshamov radius is the minimum radius for which one can expect at least one solution to the above equation for any~$\vec{s}$. According to \cite[Ex.~1]{L06}, for parameters~$m,m,\kappa m+\ell_{a}-\ell_{s}$ this radius is asymptotically equivalent to
$$
m\left(1 - \sqrt{\frac{\kappa m + \ell_{a}-\ell_{s}}{m^{2}}} \right)~.
$$
Therefore, we have to choose $\ell_{a}$ large enough. But doing so will imply that the (expected) number of solutions to Equation~\eqref{eq:param} is large. In particular, it is equal to
$$
q^{\ell_a-\ell_s-\left(\frac{n-\kappa}{2}\right)^{2}}.
$$ This exponential number of solutions will help the $\minrank$-solver, since it just needs to find one solution to get a valid signature. It turns out that in this case, the so-called {\it combinatorial class of attacks} can exploit this case. To set our parameters we used the so-called \emph{kernel search} algorithm~\cite{GC00}. 
But one may say that combinatorial attacks are not always the best one to solve $\minrank$, there also exists another class of attacks: \emph{algebraic attacks} \cite{BBCGPSTV20,BBBGT22}. However the latter do not exploit the benefit of the exponential number of solutions and it turns out that they are not competitive with combinatorial attacks in our parameters set.

As explained in Section~\ref{sec:secu}, it is also possible to attack $\miranda$ instantiation via its algebraic structure arising from Gabidulin codes which are $\mathbb{F}_{q^{m}}$--linear. We also use this kind of \emph{structural} attacks to set $\miranda$'s parameters. Each time the cost of our attack is deduced from the least work factor given by Proposition~\ref{prop:complexity_structural_attack} and is compared with that of combinatorial attacks. We then chose the parameters so that both types of attacks have the same cost.

Once the $\miranda$ parameters have been set to obtain $\lambda$ security bits, here is how we computed the public key and signatures sizes. 
The public key consists on a matrix code of size $m\times n$ and dimension $k=m(n-\kappa) + \ell_{s}-\ell_{a}$ binary matrices (recall that we set $m=n$). 
It is possible to represent this code as a matrix of size $m^2\times k$ in systematic form, which requires 
$$
\text{Size} (\mathsf{pk}) = \left\lceil\frac{(m\kappa-\ell_s+\ell_a)(m(n-\kappa)+\ell_s-\ell_a)}{8}\right\rceil~\cdot
$$

The signature consists on the error matrix $\vec{E}$ and the random $\texttt{salt}$ with~$\lambda$ bits. But, instead of signing by providing the full matrix $\vec{E}$, it suffices to give its support (\emph{i.e.,} the $\Fq$--vector space spanned by its columns) which can be represented by $t\eqdef (n-\kappa)/2$ vectors $(\vec{b}_j)_{j\in\lbrace 1, \dots,t\rbrace}\in(\Fq^m)^t$. Therefore, verifying the signature involves checking that the following system of linear equations indeed has a solution. The system is \iftoggle{llncs}{~(\ref{eq:param})}{
$$
\vec{s} = (\text{tr}(\vec{E} \vec{B}_i^\top))_{i=1}^{m(n-\kappa)-\ell_a + \ell_s},
$$}
where each column $\vec{e}^{(i)}$ of the matrix error $\vec{E}$ is itself an $\Fq$-linear combination:
\iftoggle{llncs}{$\vec{e}^{(i)} = \sum_{j=1}^t a_j^{(i)}\vec{b}_j$
}{
    $$\vec{e}^{(i)} = \sum_{j=1}^t a_j^{(i)}\vec{b}_j$$
  } whose unknowns are the coefficients $a_j^{(i)}\in\Fq$.

The signature consists on an $\Fq$-basis of the support of the error $\vec{E} $, that we try to write on the reduced row echelon form. Assume we write this basis as a matrix in $\Fq^{m\times \frac{n-\kappa}{2}}$. Since this matrix has rank $\leq \frac{n-\kappa}{2}$, there exists a minor of size $\leq \frac{n-\kappa}{2}\times \frac{n-\kappa}{2}$ on which you can do linear combination on columns to make appear the identity matrix.  We encode the indexes of these rows as a string of $\ell$ bits.
We can encode the coordinates of these rows as a string of 9 bits (10 for the parameters for $\lambda=256$ bits of security), where each of the 512 possible strings (or 1024 for $\lambda=256$ bits of security) is associated to a set of $(n-\kappa)/2$ rows determined in advance. For the parameters we choose, the probability that none of the $2^9$ combinations (or $2^{10}$) suit is inferior to $2^\lambda$, since the probability that a matrix with entries in $\mathbb{F}_q$ is non singular is smaller than 0.28.

Therefore, the signature size in bytes is equal to (where $\lambda$ is the size of the salt): 
$$\text{Size} (\sigma) = \left\lceil\frac{\frac{n-\kappa}{2}\left(m-\frac{n-\kappa}{2}\right)+9+\lambda}{8}\right\rceil~\cdot$$

We propose several sets of parameters. In tables below, the abbreviation ``Dens'' relies to the $\log_2$ of the average density of decodable vectors in the Gabidulin code, {\it i.e.,} the set of $\vec{u} \in \mathbb{F}_{2}^{m(n-\kappa) +\ell_{s}}$ which are decodable in the considered Gabidulin codes. This average density is given by
$$
\frac{\sharp \mathcal{B}^{m,m,2}_{\frac{n-\kappa}{2}}}{2^{m(n-\kappa) + \ell_{s}}}~\cdot
$$
Notice that it also corresponds to the inverse of the average running-time of $\miranda$ signing algorithm (as per Proposition \ref{propo:runningTime}). Abbreviations ``Forge'' and ``Struc'' respectively rely on the~$\log_2$ of the complexity of the $\minrank$-solver and our structural attack.

We propose in Table \ref{mir128} our standard parameter sets for respective levels of $\lambda = 128$, $192$ and $256$ bits of security. In Table \ref{mir-dmin} we give a set of alternative parameters which exploits codes of low density. By low density we mean that we choose $\kappa/m$ smaller than in the first choice of parameters. They allow to achieve extremely low sizes of signatures, but at the cost of a slower signing algorithm (by repeating a larger amount of times the guess of $\vec{t}$ which is a highly parallelizable operation).

\begin{table}[h]
\begin{center}
{\setlength{\tabcolsep}{0.3em}
{\renewcommand{\arraystretch}{1.6}
{\scriptsize
  \begin{tabular}{|c|c|c|c|c|c||c|c|c||c|c|c|}
    \hline
    $m$ & $n$ & $\kappa$ & $\; t\;$ & $\ell_a$ & $\;\ell_s\;$ & Dens. & Forge. & Struc. & $\sigma$ (B) & $\mathsf{pk}$ (B) & $\lambda$\\ \hline\hline
    149 & 149 & 141 & 4 & 255 & 1 & -17 & 145 & 144 & 90 & 2.5 M & 128 \\ \hline
    151 & 151 & 143 & 4 & 258 & 0 & -16 & 145 & 144 & 91 & 2.6 M & 128 \\ \hline
    281 & 281 & 275 & 3 & 381 & 4 & -13 & 144 & 143 & 122 & 12.7 M & 128 \\ \hline
    293 & 293 & 287 & 3 & 396 & 1 & -10 & 145 & 143 & 126 & 14.4 M & 128 \\ \hline
    307 & 307 & 301 & 3 & 415 & 0 & -9 & 147 & 148 & 132 & 16.5 M  & 128 \\\hline
    \hline
    239 & 239 & 231 & 4 & 409 & 0 & -16 & 207 & 208 & 151 & 10.4 M & 192\\ \hline
    467 & 467 & 461 & 3 & 630 & 3 & -12 & 207 & 207 & 200 & 58.7 M & 192 \\ \hline
    479 & 479 & 473 & 3 & 644 & 0 & -9 & 208 & 207 & 204 & 63.3 M & 192 \\ \hline\hline    
    331 & 331 & 323 & 4 & 559 & 2 & -18 & 273 & 271 & 197 & 28.1 M & 256\\ \hline
    337 & 337 & 329 & 4 & 568 & 0 & -16 & 275 & 272 & 201 & 29.6 M & 256 \\ \hline
    673 & 673 & 667 & 3 & 901 & 0 & -9 & 279 & 272 & 285 & 176.3 M & 256 \\ \hline
  \end{tabular}
  \vspace{0.2\baselineskip}
}}}
\end{center}
\caption{Parameters for $\miranda$, taking  $\omega = 2.8$ as
	the complexity exponent of linear algebra.}
\label{mir128}
\end{table}

\begin{table}[h]
\begin{center}
{\setlength{\tabcolsep}{0.3em}
{\renewcommand{\arraystretch}{1.6}
{\scriptsize
 \begin{tabular}{|c|c|c|c|c|c||c||c|c||c|c|}
    \hline
    $m$ & $n$ & $k$ & $\; t\;$ & $\ell_a$ & $\;\ell_s\;$ & Dens. & Forge. & Struc. & $\sigma$ (B) & $\mathsf{pk}$ (B) \\ \hline\hline
    67 & 67 & 55 & 6 & 178 & 1 & -37 & 143 & 143 & 66 & 302 K \\ \hline
    89 & 89 & 79 & 5 & 189 & 7 & -32 & 143 & 143 & 70 & 638 K \\ \hline
    97 & 97 & 87 & 5 & 204 & 2 & -27 & 144 & 144 & 75 & 829 K \\ \hline
    101 & 101 & 91 & 5 & 212 & 0 & -25 & 144 & 146 & 78 & 938 K \\ \hline
    139 & 139 & 131 & 4 & 240 & 6 & -22 & 144 & 143 & 85 & 2.0 M \\ \hline\hline
    113 & 113 & 101 & 6 & 212 & 5 & -41 & 207 & 207 & 106 & 1.6 M \\ \hline
    151 & 151 & 141 & 5 & 313 & 6 & -31 & 207 & 207 & 117 & 3.2 M \\ \hline
    163 & 163 & 153 & 5 & 336 & 0 & -25 & 210 & 211 & 124 & 4.1 M \\ \hline
    233 & 233 & 225 & 4 & 396 & 5 & -21 & 210 & 207 & 140 & 9.7 M \\ \hline\hline
    163 & 163 & 151 & 6 & 393 & 1 & -37 & 272 & 271 & 151 & 4.8 M \\ \hline
    223 & 223 & 213 & 5 & 454 & 0 & -25 & 276 & 271 & 170 & 10.6 M \\ \hline
  \end{tabular}
  \vspace{0.2\baselineskip}
}}}
\end{center}
\caption{Alternative parameters for $\miranda$ instantiated with low density of decodable syndromes, with $\omega = 2.8$ as
	the complexity exponent of linear algebra.}
\label{mir-dmin}
\end{table}

Note that the parameters have been chosen such that the statistical distance between the distribution of outputs of $\mathsf{InvertAlg}$ and the uniform one over~$\mathcal{B}_t^{m, n, q}$ (as given in Theorem~\ref{theo:unifPreimage}) is less than $2^{-64}$, in order to comply with the requirements of the NIST (since it requires that the protocol withstand statistical attacks when $2^{64}$ signatures are available).
\medskip

{\bf \noindent Performances.} The key generation (which relies to the $\trapd$ algorithm) consists in determining the systematic form of a matrix code with parameters~$[m\times m,2tm-\ell_a+\ell_s]$. This can be done in $O(m^2(2tm-\ell_a+\ell_s)^2)$ binary operations.
As explained in Proposition \ref{propo:decoGab}, the decoding of a Gabidulin code of length $m$ requires $O(m^2)$ operations in $\Fqm$, which corresponds to $O(m^3\log m)$ binary operations (since we only consider parameter sets with $q=2$). The average number of syndromes we try to decode is given by the inverse of the density of decodable syndromes. We deduce that the average running-time of the signing algorithm is:
$$
O(m^3\log m2^{t^2+\ell_s}) \ .
$$
The signature verification algorithm consists in checking that an over-determined system of $2tm-\ell_a+\ell_s$ equations in $\F$ with $tm$ unknowns does indeed have a solution. This can be done by performing a Gaussian elimination on a binary matrix of size $tm\times (2tm-\ell_a+\ell_s)$, which requires $$
O(m^2t^2(2tm-\ell_a+\ell_s))
$$ 
operations.

We did a basic implementation of the Miranda signature on a computer whose processor is Intel(R) Core(TM) i5-8265U CPU @ 1.60GHz.
We consider the parameter set given by $m=149$ which leads to a signature of 90B (see Table~\ref{mir128}) to give an example for execution time.
This basic implementation leads to a key generation of approximately $1$s, a signing time is on the order of one minute, and a verification of the signature of approximately $30$ms.
We stress that our preimage search algorithm (by guessing a vector $\vec{t}$ and decoding a Gabidulin code)
can be easily parallelized, so that optimized implementation will clearly lead to far better timing. On the other hand, running the program on a computer with a processor that supports \texttt{AVX} would give us another great improvement in execution time.

\subsection{Comparison with other signature schemes}

\iftoggle{llncs}{}{\mbox{ }
	
	\medskip 
{\bf \noindent Comparison with \textsf{CFS} scheme.}
We can explain why the parameters of our Miranda signature scheme are better than those of \textsf{CFS} \cite{CFS01}. We fix $t$ the weight of the error to be decoded in the \textsf{CFS} signature. Let $m$ be the parameter associated with the considered Goppa code. As explained in \cite{F10}, the best known attack against the \textsf{CFS} signature is Bleichenbacher's attack based on the birthday paradox, and has a complexity of about $O(2^{mt/3})$. Denoting $O(2^\lambda)$ the value that this must reach according to the security parameter $\lambda$, we deduce that $m$ is linear in $\lambda$. However, the public key corresponding to the description of a code with parameters $[2^m,mt]$ has size exponential in $\lambda$.
Conversely, the complexity of this attack on our Miranda signature is on the order of $O(2^{kt})$, where the weight $t$ of the error to be decoded is fixed. The size of the Miranda public key is on the order of $O(k^3)$. We deduce that the size of the public key of our Miranda signature is cubic in the security parameter.

{\medskip 
	\bf \noindent Comparison with other schemes.}
}
We propose in Table~\ref{size-comp128} a comparison of our $\miranda$ scheme for 143 bits of security with other hash-and sign signature schemes.

\begin{table}[!h]
\begin{center}
{\setlength{\tabcolsep}{0.3em}
{\renewcommand{\arraystretch}{1.6}
{\scriptsize
\begin{tabular}{|c||c|c|}
   \hline
    Scheme & $\sigma$ (B) & $\pk$ (B) \\ \hline\hline
    \textsf{CFS} \cite{CFS01} & 49 & $26$T \\ \hline
   \textbf{$\miranda$} Tab. \ref{mir128} & 90 & $2.5$M \\ \hline
   UOV \cite{uov}  & 96 & $412$k \\ \hline
   Mayo \cite{mayo} & 186 & $4.9$k \\ \hline
   Falcon \cite{FHKLPPRSWZ} & 666  & $897$ \\ \hline
   Wave \cite{BCCCDGKLNSST23} & 822 & 3$.6$M \\ \hline
  \end{tabular}
  \vspace{0.5\baselineskip}
  }}}
\end{center}\caption{Comparison of different signature schemes for $143$ bits of security}
\label{size-comp128}
\end{table}

\iftoggle{llncs}{
    \section{Conclusion}
    Despite its large public keys and slightly long signing times, $\miranda$ remains fully practical. It also gathers the following strengths: no rejection sampling; very short signatures with possible applications such as blind signatures; other potential practical applications for blockchain or advanced cryptographic protocols.
  }{
We propose here the parameters that should be chosen for \textsf{CFS} to reach a security of 143 bytes, according to the formula for the attack complexity given by \cite{F10}. We will start by fixing $t=10$, which gives the density of decodable syndromes (the authors of \textsf{CFS} choose values ranging from~$8$ to $10$, and we choose $10$ to achieve a smaller public key), and then fix $m$ which allows reaching the required security level. We recall that if the public key is the description of a $[2^m,mt]$ linear code, and the signature consists in a vector of Hamming weight~$t$ of length $2^m$. The coordinates of 1 of this vector can be encoded in $b$ bits, where~$\binom{2^m}{t}<2^b$ (on the other hand, a counter value also needs to be stored, which averages around $t!$). It gives a signature size of approximately~$O(mt)$ bits, that is, contrary to the public key size, linear in the security parameter $\lambda$. Even if the signature sizes remain smaller than that of our Miranda protocol, the unreasonably large size of the public key required to achieve similar levels of security makes the scheme impractical.
}

 	\iftoggle{llncs}{
	  \bibliographystyle{src/splncs04}
	}{
	\section{Conclusion}

FDH signature schemes are notoriously difficult to design, especially those whose security is based on hard error-correcting code problems. Very few proposals of this type have been made to date, and most suffer from significant limitations. Our $\miranda$ signature relies on the problem to decode a random binary matrix code ($\minrank$), which distinguishes it clearly from other approaches. Our protocol is practical even if the public key is quite large and the signature algorithm rather slow.
Though our scheme relies on the \textsf{GPV} framework via ATPS functions, it does not require a rejection sampling phase (unlike other protocols, such as \textsf{Wave} or \textsf{Falcon}), which significantly simplifies its implementation.
Furthermore, in terms of efficiency, our system allows the generation of short signatures of only~$90$ bytes (but it comes at the cost of a slower signing time). For a security level equivalent to NIST-I, this signature size is the smallest known to date (with the exception of the \textsf{CFS} scheme, which remains impracticable due to its excessively huge public key). This small signature size, which was one of the expressed criterion asked by the NIST in its latest call for proposals, makes our protocol particularly attractive for some applications, for instance blockchains or anonymity credentials.
Another significant advantage lies in the versatility of the FDH paradigm: coupled with a zero-knowledge proof, our signature can be naturally turned into a blind signature. It allows applications where confidentiality and anonymity are essential, such as electronic voting systems or other privacy-preserving authentication protocols.
     \section*{Acknowledgments.}
The authors express their deep gratitude to Anthony Fraga for
providing us practical running times for the decoding of Gabidulin
codes.  The authors are supported by the French \emph{Agence Nationale
  de la Recherche} (ANR) through the \emph{Plan France 2030 programme}
ANR-22-PETQ-0008 ``{PQ-TLS}''. The work of Thomas Debris-Alazard was
funded through the French ANR project \emph{Jeunes Chercheuses, Jeunes
  Chercheurs} ANR-21-CE39-0011 ``{COLA}''.  Alain Couvreur is
partially funded by the french ANR grant \emph{Projet de recherche
  collaboratif} ANR-21-CE39-0009 ``{BARRACUDA}'' and by
Horizon--Europe MSCA-DN project ``{ENCODE}''.

     \bibliographystyle{alpha}
      }      
      \newcommand{\etalchar}[1]{$^{#1}$}

	\newpage
	\appendix
    \iftoggle{llncs}{
        \iftoggle{llncs}{
\section{Proof of Lemma \ref{lemma:invertF}}\label{app:distStat} 

\invertF* }{}

\begin{proof}
  By definition of the statistical distance and since $\prob(x_u=\bot) = 0$ we get
  \[
    \Delta (x, x_u) = \left| \prob(x = \bot) \right| + \sum_{x_0\in \mathcal D} \left| \prob (x=x_0) - \prob(x_u=x_0)\right|~.
  \]
  Therefore,
  \begin{equation}\label{eq:distance_stat}
    \Delta (x, x_u) =   \left| \prob(y \notin f(\mathcal{D})) \right|+ \sum_{y_0 \in f(\mathcal{D})} \sum_{x_0 \in f^{-1}(\{y_0\})} \left| \prob (x=x_0) - \prob(x_u=x_0)\right|~. 
  \end{equation}
  Now, note that for a given $y_0 \in \mathcal F$, the quantity
  $\left| \prob (x=x_0) - \prob(x_u=x_0)\right|$ is the same for any~$x_0 \in f^{-1}(\{y_0\})$.
  Consequently, fix $y_0 \in \mathcal F$ and $x_1 \in f^{-1}(\{y_0\})$, then
  \begin{align*}
    \sum_{x_0\in f^{-1}(\{y_0\})} \left| \prob (x=x_0) - \prob(x_u=x_0)\right| & =
                                                                                 \sharp f^{-1}(\{y_0\}) \left| \prob (x=x_1) - \prob(x_u=x_1)\right| \\
                                                                               & =  \left| \sharp f^{-1}(\{y_0\}) \prob (x=x_1) - \sharp f^{-1}(\{y_0\}) \prob(x_u=x_1)\right|\\
                                                                               & = \left| \prob(x \in f^{-1}(y_0)) - \prob(x_u \in f^{-1}(y_0)) \right| \\
    & = \left| \prob(y = y_0) - \prob(f(x_u)= y_0) \right|    
  \end{align*}
  Back to (\ref{eq:distance_stat}), we deduce:
  \begin{align*}
    \Delta (x,x_u) & = \left| \prob(y \notin f(\mathcal D)) \right|
                     + \sum_{y_0 \in f(\mathcal D)} \left| \prob(y = y_0) - \prob(f(x_u)= y_0) \right|\\
    & = \sum_{y_0 \in \mathcal F} \left| \prob(y = y_0) - \prob(f(x_u)= y_0) \right| = \Delta(y, f(x_u))~.
  \end{align*}\iftoggle{llncs}{\qed}{}
\end{proof}

         \iftoggle{llncs}{
\section{Proof of Proposition~\ref{propo:runningTime}}\label{sec:runningTime}
\runningtime*}{}

\begin{proof}
	First notice that all the 
	$$
	\left( \tr\left( \vec{E}\vec{B}_{i}^{\top} \right) \right)_{i=1}^{m(n-\kappa) + \ell_{s}}
	$$
	for $|\vec{E}| \leq (n-\kappa)/2$ are distinct as the minimum distance of $\mathcal{C}_{s} \subseteq \mathsf{Gab}(\vec{g},\kappa)$ is greater than $n-\kappa+1$. Therefore there are $\sharp \mathcal{B}_{(n-\kappa)/2}$ such vectors of traces. Let $\mathcal{S}$ denote this set of vectors.

	Given an input $\vec{s} \in \Fq^{m(n-\kappa)+\ell_{s}-\ell_{a}}$, let $G_{\vec{s}}$ be the average number of trials in Instruction~\ref{inst:uniformT} before guessing $\vec{t} \in \Fq^{\ell_{a}}$ such that the algorithm terminates, {\it i.e.,} before guessing $\vec{t}$ such that
	$$
	(\vec{s},\vec{t}) \in \mathcal{S}\ .
	$$
	By definition, 
	$$
	1/G_{\vec{s}} = \mathbb{P}_{\vec{t}}\left( (\vec{s},\vec{t}) \in \mathcal{S} \right)
	$$
	and as $\vec{s},\vec{t}$ are uniform 
	\begin{align*}
		\mathbb{E}_{\vec{s}}\left( 1/G_{\vec{s}} \right)& = \mathbb{P}_{\vec{t},\vec{s}}\left( (\vec{s},\vec{t}) \in \mathcal{S}  \right) = \frac{\sharp \mathcal{S}}{q^{m(n-\kappa)+\ell_{s}}} \\ &=  \frac{\sharp \mathcal{B}_{(n-\kappa)/2}}{q^{m(n-\kappa)+\ell_{s}}} = \Theta\left( q^{  \frac{\left( n-\kappa\right) \cdot \left( m + n  + \kappa\right)}{4} - \ell_{s} } \right)  \ .
	\end{align*} 
	To conclude, notice that each iteration of Algorithm~\ref{algo:sgn} costs $O(m^{3}n^{3})$ which corresponds to the cost for computing~$\vec{Y}$ in Instruction~\ref{inst:Y} (it is the dominant cost).\iftoggle{llncs}{\qed}{}
\end{proof}
         
\iftoggle{llncs}{
\section{Proof of Proposition \ref{proposition:domainSampleability}}\label{app:proofPropoLHL}

\propositionLHL* 
}{}

	The proof of Proposition~\ref{proposition:domainSampleability} will rely on the leftover hash lemma \cite{BDKPPS11} (for a proof of this statement in its current form see for instance \cite[Ch.~2, \S 2.5, Lem.~2.5.1]{D23}). 
\begin{lemma}[Leftover hash lemma]\label{lemme:leftOver} Let
      $E, F$ be finite sets. Let $\mathcal{H} = (h_i)_{i \in I}$ be a
      finite family of applications from $E$ in~$F$. Let $\varepsilon$
      be the ``collision bias'' defined as:
	\begin{displaymath}
		\mathbb{P}_{h,e,e'}(h(e)=h(e')) = \frac{1}{\sharp F} (1 + \varepsilon)
      \end{displaymath}
where $h$ is uniformly drawn in $\mathcal{H}$, $e$ and $e'$ are independent random variables taking their values~$E$ and following some distribution $\mathscr{D}$. Let $u$ be a random variable uniformly distributed over $F$. We have,
	\begin{displaymath}
		\mathbb{E}_{h}\left( \Delta(h(e), u) \right) \leq \frac{1}{2} \; \sqrt{\varepsilon}.
	\end{displaymath}
\end{lemma}

The proof of Proposition \ref{proposition:domainSampleability} will be a simple combination of the above lemma with the following lemmas.

\begin{lemma}\label{lemma:probFund}
	Let $\mathcal{F}$ be a set of $\lbrack m \times n, k\rbrack_{q}$-codes 
with minimum distance $\geq d$. 
	Let $\left( \vec{B}_{1},\dots,\vec{B}_{mn-k}\right)$ be chosen uniformly at random among all dual bases of codes in $\mathcal{F}$. We have,
	\iftoggle{llncs}{
	\begin{multline*}
			\forall \vec{s} \in \mathbb{F}_{q}^{m(n-k)}, \; \forall \vec{X} \in \mathbb{F}_{q}^{m \times n}	\mbox{ such that } 0 < |\vec{X}|< d, \\ \mathbb{P}_{\left( \vec{B}_{i}\right)_{i}}\left( \tr\left( \vec{X}\vec{B}_{i}^{\top}\right)_{i} = \vec{s} \right) = \left\{ \begin{array}{cl}
			\frac{1}{q^{mn-k}-1} & \mbox{ if $\vec{s} \neq \vec{0}$} \\
			0 & \mbox{ otherwise.} 
		\end{array} \right.
	\end{multline*}
	}{
		\begin{multline*}
	\forall \vec{s} \in \mathbb{F}_{q}^{m(n-k)}, \; \forall \vec{X} \in \mathbb{F}_{q}^{m \times n}	\mbox{ such that } 0 < |\vec{X}|< d, \\ 
	\mathbb{P}_{\left( \vec{B}_{i}\right)_{i=1}^{mn-k}}\left( \left( \tr\left( \vec{X}\vec{B}_{i}^{\top}\right)\right)_{i=1}^{mn-k} = \vec{s} \right) = \left\{ \begin{array}{cl}
		\frac{1}{q^{mn-k}-1} & \mbox{ if $\vec{s} \neq \vec{0}$} \\
		0 & \mbox{ otherwise} 
	\end{array} \right.
		\end{multline*}
} 
\end{lemma}

\begin{proof}
	First, notice that $\left( \vec{B}_{i} \right)_{i=1}^{mn-k}$ distribution is invariant by any non-singular matrix $\vec{T} = (t_{i,j}) \in \mathbb{F}_{q}^{(mn-k)\times (mn-k)}$, {\it i.e., } $\left( \sum_{i} t_{i,j}\vec{B}_{i}\right)_{j}$ is another dual basis of the code defined by $\left( \vec{B}_{i} \right)_{i=1}^{mn-k}$. Therefore\iftoggle{llncs}{ (as $\vec{T}$ is non-singular)}{},  
	\begin{align}
		\mathbb{P}_{\left( \vec{B}_{i} \right)_{i=1}^{mn-k}}\left( \tr\left( \vec{X}\vec{B}_{i}^{\top}\right)_{i} = \vec{s} \right) &= \mathbb{P}_{\left( \vec{B}_{i} \right)_{i=1}^{mn-k}} \left( \tr\left( \sum_{i=1}^{mn-k} t_{i,j}\vec{X}\vec{B}_{i}^{\top} \right)_{j} = \vec{s}\vec{T} \right) \iftoggle{llncs}{\nonumber \\}{\quad \left( \mbox{as $\vec{T}$ is non-singular}\right)\nonumber \\}
&= \mathbb{P}_{\left( \vec{B}_{i} \right)_{i=1}^{mn-k}}\left( \tr\left( \vec{X}\vec{B}_{i}^{\top}\right)_{i=1}^{mn-k} = \vec{s}\vec{T} \right) \label{eq:invS} 
	\end{align}
	Given $\vec{s}_{1}, \vec{s}_{2} \in \mathbb{F}_{q}^{mn-k}$ which are both non-zero, it exists $\vec{T} \in \mathbb{F}_{q}^{(mn-k)\times (mn-k)}$ such that,
	$$
	\vec{s}_{1}\vec{T} = \vec{s}_{2}\ .
	$$
	Therefore, using Equation \eqref{eq:invS}, we obtain, 
	$$
	\mathbb{P}_{\left( \vec{B}_{i} \right)_{i=1}^{mn-k}}\left( \tr\left( \vec{X}\vec{B}_{i}^{\top}\right)_{i=1}^{mn-k} = \vec{s}_1 \right) =  	\mathbb{P}_{\left( \vec{B}_{i} \right)_{i=1}^{mn-k}}\left( \tr\left( \vec{X}\vec{B}_{i}^{\top}\right)_{i=1}^{mn-k} = \vec{s}_2 \right).
	$$
	Notice that if $\vec{s} = \vec{0}$ we have that, 
	$$
	\mathbb{P}_{\left( \vec{B}_{i} \right)_{i=1}^{mn-k}}\left( \tr\left( \vec{X}\vec{B}_{i}^{\top}\right)_{i} = \vec{0} \right) = 0
	$$
	as $0 < |\vec{X}| < d$ where $d$ is lower than the minimum distance of the code admitting as dual basis~$\left( \vec{B} \right)_{i=1}^{mn-k}$. It concludes the proof. \iftoggle{llncs}{\qed}{}
\end{proof}

	\begin{lemma}\label{lemma:probaCol}
		Let $m,n,k,q,\ell_a,\ell_s$ be integers and~$\mathcal{F}$ be a set of
		$[m \times n,k]_{q}$-codes with minimum distance $\geq d$ and let $t < d/2$. Let 
		$\mathcal{C} \Unif \mathsf{AddRemove}\left( \mathcal{F},\ell_a,\ell_s \right)$ (as per Definition \ref{def:addRemove}) and~$(\vec{B}_{i})_{i=1}^{mn - k + \ell_{s} - \ell_{a}}$ be a random basis of $\mathcal{C}^{\perp}$. We have,
\begin{multline*} 
	\mathbb{P}_{\left( \vec{B}_{i} \right)_{i=1}^{mn-k+\ell_{s}-\ell_{a}}, \vec{X}, \vec{Y}}\left( \left( \tr\left( \vec{X}\vec{B}_{i}^{\top}\right)\right)_{i=1}^{mn-k+\ell_{s}-\ell_{a}} = \left(\tr\left( \vec{Y}\vec{B}_{i}^{\top} \right)\right)_{i=1}^{mn-k+\ell_{s}-\ell_{a}} \right) \\
	=  \frac{1}{q^{mn-k+\ell_s - \ell_a}}\left( 1 +  O\left( \frac{q^{mn-k+\ell_s - \ell_a}}{\sharp \mathcal{B}_{t}^{m,n,q}} \right) \right) 
	\end{multline*} 
	where $\vec{X},
    \vec{Y}\Unif \mathcal{B}_{t}^{m,n,q}$. 
\end{lemma}

\begin{proof}
	In what follows all probabilities will be computed with the random variables $ \left( \vec{B}_{i} \right)_{i=1}^{mn-k+\ell_{s}-\ell_{a}},$ $\vec{X}$ and $\vec{Y}$. First, by the law of total probabilities,
	\begin{align}
\mathbb{P}&\left( \left( \tr\left( \vec{X}\vec{B}_{i}^{\top}\right)\right)_{i=1}^{mn-k+\ell_{s}-\ell_{a}} = \left(\tr\left( \vec{Y}\vec{B}_{i}^{\top} \right)\right)_{i=1}^{mn-k+\ell_{s}-\ell_{a}} \right) \nonumber \\
		&= \mathbb{P}\left( \left(\tr\left( (\vec{X}-\vec{Y})\vec{B}_{i}^{\top}\right)\right)_{i=1}^{mn-k+\ell_{s}-\ell_{a}} =\vec{0} \mid \vec{X} \neq \vec{Y} \right)\mathbb{P}\left( \vec{X} \neq \vec{Y} \right) \iftoggle{llncs}{\nonumber\\}{+ \mathbb{P}_{\vec{X}, \vec{Y}}\left( \vec{X} = \vec{Y} \right)\nonumber\\}
		\iftoggle{llncs}{&\qquad\qquad\qquad\qquad\qquad\qquad\qquad\qquad \qquad\qquad\qquad\qquad\qquad + \mathbb{P}_{\vec{X},\vec{Y}}\left( \vec{X} = \vec{Y} \right)\nonumber\\}{}
		&= \mathbb{P}\left( \left( \tr\left( (\vec{X}-\vec{Y})\vec{B}_{i}^{\top}\right)\right)_{i=1}^{mn-k+\ell_{s}-\ell_{a}} =\vec{0} \mid \vec{X} \neq \vec{Y} \right)\left( 1- \frac{1}{\sharp \mathcal{B}_{t}^{m,n,q}} \right) + \frac{1}{\sharp \mathcal{B}_{t}^{m,n,q}}~\cdot \label{eq:probaCol} 
\end{align}
	But by assumption, $\left( \vec{B}_{1},\dots,\vec{B}_{mn - k + \ell_{s}- \ell_{a}} \right)$ is by construction a basis of the dual of 
	$$
	\mathcal{C} = \mathcal{C}_s \oplus \mathcal{A}
	$$
where $\mathcal{C}_{s}$ is a subcode with codimension $\ell_{s}$ in a code from $\mathcal{F}$. 
Therefore, 
$$
	\mathcal{C}^{\perp} \subseteq \mathcal{C}_s^{\perp}.
	$$
Let us complete the random $\left( \vec{B}_{i} \right)_{i=1}^{mn-k+\ell_{s}-\ell_{a}}$ basis  into a basis of $\mathcal{C}_{s}^{\perp}$ with the help of $\ell_{a}$ matrices $\left( \vec{B}_{mn -k +\ell_{s} -\ell_{a} + 1},\dots,\vec{B}_{mn - k + \ell_s} \right)$. By randomizing we obtain, 
\begin{multline*} 
		\mathbb{P}\left( \left(\tr\left( (\vec{X}-\vec{Y})\vec{B}_{i}^{\top}\right)\right)_{i=1}^{mn-k+\ell_{s}-\ell_{a}} =\vec{0} \mid \vec{X}\neq \vec{Y} \right) 
		\\
		=\sum_{\substack{\vec{s}\in \mathbb{F}_{q}^{mn-k+\ell_{s}}:\\ s_1=\dots= s_{mn-k + \ell_{s} -\ell_{a} } = 0}} \mathbb{P} \left( \left( \tr\left( (\vec{X}-\vec{Y})\vec{B}_{i}^{\top} \right)\right)_{i=1}^{mn-k+\ell_{s}} = \vec{s} \mid \vec{X} \neq \vec{Y}\right)~.
	\end{multline*} 
	In the above sum $\vec{s}$ cannot be equal to $\vec{0}$. Indeed, $0< |\vec{X} - \vec{Y}|\leq 2t < d$ where $d$ is smaller than the minimum distance of $\mathcal{C}_s$. Therefore
by using Lemma \ref{lemma:probFund},
	\begin{align*} 
		\mathbb{P}
		( \tr\left( 
		(\vec{X}-\vec{Y})\vec{B}_{i}^{\top}\right)_{i} &=\vec{0} \mid \vec{X} \neq \vec{Y} ) = \frac{q^{\ell_{a}}-1}{q^{mn-k+ \ell_{s}}-1}~\cdot
	\end{align*} 
	Plugging this into Equation \eqref{eq:probaCol} leads to:
	\begin{align}
		\nonumber \mathbb{P}&_{\left(\vec{B}_{i} \right)_{i=1}^{mn-k+\ell_{s}-\ell_{a}},\vec{X},\vec{Y}}\left( \tr\left( \vec{X}\vec{B}_{i}^{\top}\right)_{i} = \tr\left( \vec{Y}\vec{B}_{i}^{\top} \right)_{i} \right)\\ 
		\iftoggle{llncs}{\nonumber &= \frac{q^{\ell_{a}}-1}{q^{mn-k + \ell_{s}}-1}\left( 1 + \frac{ \frac{q^{mn-k+\ell_{s}}-1}{q^{\ell_{a}}} -1}{\sharp \mathcal{B}_{t}^{m,n,q}} \right) \\}{} 
\nonumber  & = \frac{1}{q^{mn-k+\ell_s-\ell_{a}}} \frac{(1-\frac{1}{q^{\ell_a}})}{(1- \frac{1}{q^{mn-k+\ell_s}} )} \left( 1 + O \left( \frac{q^{mn-k +\ell_{s} - \ell_{a}}}{\sharp \mathcal{B}_{t}^{m,n,q}} \right) \right)\\                                                                                                  & = \frac{1}{q^{mn-k+\ell_s-\ell_{a}}}
                             \left( 1 + O\left(\frac{1}{q^{\ell_a}}\right) + O\left(\frac{1}{q^{mn-k+\ell_s}}\right)  + O \left( \frac{q^{mn-k +\ell_{s} - \ell_{a}}}{\sharp \mathcal{B}_{t}^{m,n,q}} \right) \right)~.\label{eq:bigohs}
    \end{align}
    Next, we have to identify which of the big-O's is the dominant
    term. Since
    $mn-k+\ell_s-\ell_a$ is the dimension of the dual code
    $\C^\perp$, this quantity is nonnegative and hence $1/q^{mn-k+\ell_s} = o(1/q^{\ell_a})
$.  Moreover, since $\Bv_1, \dots, \Bv_{mn-k}$ is
    a dual basis of a code in $\mathcal F$ \emph{i.e.} a code with minimum distance~$\geq d$,
    the assumption $t < d/2$ entails that the map below is injective
    \[
      \map{\mathcal{B}_{t}^{m,n,q}}{\Fq^{mn-k}}{\mathbf E}{{(\tr (\mathbf{E}\Bv_{i} ^\top))}_{i=1}^{mn-k}~.}
    \]
    Therefore,
    \[
      \sharp \mathcal{B}_{t}^{m,n,q} \leq q^{mn-k} \leq q^{mn-k+\ell_s}
    \]
    which shows that the dominant big-O in~(\ref{eq:bigohs}) is the rightmost one.
    Hence,
    \[
      \mathbb{P}\left( \tr\left( \vec{X}\vec{B}_{i}^{\top}\right)_{i} = \tr\left( \vec{Y}\vec{B}_{i}^{\top} \right)_{i} \right)
		= \frac{1}{q^{mn-k+\ell_s-\ell_{a}}}\left( 1 + O\left( \frac{q^{mn-k +\ell_{s} - \ell_{a}}}{\sharp \mathcal{B}_{t}^{m,n,q}} \right) \right) 
	\]
which concludes the proof.\iftoggle{llncs}{\qed}{}  
\end{proof}

We are now ready to prove Proposition \ref{proposition:domainSampleability}.

\begin{proof}[Proof of Proposition \ref{proposition:domainSampleability}]
	We just combine Lemmas \ref{lemme:leftOver} and \ref{lemma:probaCol} and the fact that the $\left( \vec{B}_{i} \right)_{i=1}^{mn - k + \ell_{s} - \ell_{a}}$ output by Algorithm \ref{algo:trap} defines a random dual basis of a random code from  $\mathsf{AddRemove}\left( \mathcal{F},\ell_a,\ell_s \right)$ with $\mathcal{F}$ being the family of Gabidulin codes with parameters $[n,k]_{q^{m}}$ (in particular these codes have minimum distance $n-k+1$).\iftoggle{llncs}{\qed}{}  
\end{proof}
         \iftoggle{llncs}{\section{Proof of Proposition~\ref{prop:prob_rank}}\label{sec:proof_prop_prob_rank}    
\propprobrank*}{}
    
The proof of the proposition above rests on the following lemma.
\begin{lemma}\label{lem:counting}
  Let $\vec{v} \in \Fq^m \setminus \{\vec{0}\}$.  Let
  $u \geq v \geq s$.  Using Notation~(\ref{eq:sigma}), the number of $u\times v$ matrices of rank $s$
  whose first column is equal to $\vec{v}$ is:
  \[
    \frac{\sigma_s^{u,v,q} - \sigma_s^{u,v-1,q}}{q^u - 1}~\cdot
  \]
\end{lemma}

\begin{proof}
  For any $\vec{v}$, denote by $\mathcal{S}_s^{u,v,q}(\vec v)$ the
  subset of $\mathcal{S}_s^{u,v,q}$ of matrices whose leftmost column
  is~$\vec v$. This yields a partition
  \[\mathcal{S}_s^{u,v,q} = \bigsqcup_{\vec v \in \Fq^u}
    \mathcal{S}_s^{u,v,q}(\vec v)~. \] Recall that $\mathbf{GL}_u(\Fq)$
  acts by multiplication on the left on $\mathcal{S}_s^{u,v,q}$ and
  acts transitively on $\Fq^u \setminus \{\vec{0}\}$. Therefore, the
  action of $\mathbf{GL}_u(\Fq)$ on $\mathcal{S}_s^{u,v,q}$ permutes
  the $\mathcal{S}_s^{u,v,q}(\vec v)$ when $\vec v$ ranges over
  $\Fq^u \setminus \{\vec{0}\}$. Thus, the $\mathcal{S}_s^{u,v,q}(\vec v)$'s
  when $\vec v$ ranges over $\Fq^u \setminus \{\vec{0}\}$ all have the same
  cardinality.

  Furthermore, $\mathcal{S}_s^{u,v,q}(\vec 0)$ is in one-to-one correspondence
  with $\mathcal{S}_s^{u,v-1,q}$.
  Therefore, for any~$\vec v \in \Fq^u \setminus \{\vec{0}\}$.
  \[
    \sharp \mathcal{S}_s^{u,v,q} = \sharp \mathcal{S}_s^{u,v,q}(\vec
    0) + (q^u-1) \sharp \mathcal{S}_s^{u,v,q} (\vec v) = \sharp \mathcal{S}_s^{u,v-1,q} + (q^u-1) \sharp \mathcal{S}_s^{u,v,q} (\vec v),
  \]
  which gives
  \[
    \sharp \mathcal{S}_s^{u,v,q} (\vec v) = \frac{\sigma_s^{u,v,q} - \sigma_s^{u,v-1,q}}{q^u - 1}~\cdot 
  \]
  \iftoggle{llncs}{\qed}{} \end{proof}

\begin{remark}
  According to Lemma~\ref{lem:counting}, the cadinality of the set of
  $u\times v$ matrices of rank $s$ whose first column is a prescribed
  nonzero vector is
  \[
    \approx q^{s(u+v-s)-m}~.
  \]
\end{remark}

\begin{proof}[Proof of Proposition~\ref{prop:prob_rank}]
Denote by $\Cv_{a+1}$ the $(a+1)$--th column of $\Cv$. The total probability formula yields:
  \begin{align*}
    \Prob(|\Cv| = s)  = 
    \Prob(|\Cv| = s &~|~ \Cv_{a+1} = \vec 0)\Prob(\Cv_{a+1} = \vec 0)\\
    &+ \Prob(|\Cv| = s ~|~ \Cv_{a+1} \neq \vec 0)\Prob(\Cv_{a+1} \neq \vec 0)~.
  \end{align*}
  Since $\Cv$ is a uniformly random matrix with $b-1$ prescribed zero entries, we easily get
  \[
    \Prob(\Cv_{a+1} = \vec 0) = q^{b-1-m} \ .
  \]
  Moreover, \( \Prob(|\Cv| = s ~|~ \Cv_{a+1} = \vec 0) \) is
  nothing but the probability that a uniformly random
  $m \times (n-a-1)$ matrix has rank $s$, which yields
  \[ \Prob(|\Cv| = s ~|~ \Cv_{a+1} = \vec 0) =
    \frac{\sigma_{s}^{m,n-a-1,q}}{q^{m (n-a-1)}}~\cdot
  \]
  Next,
  \begin{align*}
    \Prob\Big(|\Cv| = s &~|~ \Cv_{a+1} \neq \vec 0\Big) \\
     &=  \sum_{\vec{w} \in \Fq^{m-b+1}\setminus \{\vec 0\}} \Prob \left( |\Cv|=s ~\Big|~ \Cv_{a+1} =
                                                                 \begin{pmatrix}
                                                                   \vec 0 \\ \vec w
                                                                 \end{pmatrix}
                                                                            \right)
                                                                            \Prob\left(
                                                                            \Cv_{a+1} =                                                                  \begin{pmatrix}
                                                                   \vec 0 \\ \vec w
                                                                 \end{pmatrix}
    \right)
    \\
    &= \sum_{\vec w} \left( \frac{\sigma_s^{m, n-a,q} - \sigma_s^{m,n-a-1,q}}{(q^m-1)}\cdot \frac{1}{q^{m(n-a-1)}} \right)\Prob\left(
                                                                            \Cv_{a+1} =                                                                  \begin{pmatrix}
                                                                   \vec 0 \\ \vec w
                                                                 \end{pmatrix}
    \right) \\
                                                               & = \frac{\sigma_s^{m, n-a,q} - \sigma_s^{m,n-a-1,q}}{(q^m-1)}\cdot \frac{1}{q^{m(n-a-1)}}~\cdot
  \end{align*}
  The second equality is due to  Lemma~\ref{lem:counting} and the third one
  due to the uniformity of $\Cv$ which entails that
  \[
  \Prob\left(
                                                                            \Cv_{a+1} =                                                                  \begin{pmatrix}
                                                                   \vec 0 \\ \vec w
                                                                 \end{pmatrix}
                                                               \right)
                                                             \]
  does not depend on $\vec w \in \Fq^{m-b+1}$.
Putting all together yields
  \[
    \Prob(|\Cv| = s) =     \frac{\sigma_{s}^{m,n-a-1,q}}{q^{m(n-a-1)}} q^{b-1-m}  +
    \frac{\sigma_s^{m, n-a,q} - \sigma_s^{m,n-a-1,q}}{(q^m-1)q^{m(n-a-1)}} (1-q^{b-1-m}) \ .
  \]
  Asymptotically the left and right--hand summands are approximately equal to
  \[
    q^{s(m+n-a-s) + m(a-n) + b-1-s} \quad \text{and} \quad
    q^{s(m+n-a-s) + m(a-n)}~.
  \]
  Thus, the dominant term is
  \(
    \approx q^{s(m+n-a-s)  + m(a-n) + \max (0, b-1-s)}.
    \) \iftoggle{llncs}{\qed}{}
\end{proof}
         \iftoggle{llncs}{
    \section{Further details on the key recovery attack in Section~\ref{sec:secu}\label{sec:app_secu}}
  }{}

\subsection{The target rank $s$}\label{ss:target_s}
According to the previous discussion (see (\ref{eq:l_a}) and below),
the code~$\Dmat^\perp$ has dimension
\[
  k = (r-\lambda -1)m + (m-\mu) \quad \text{with}\quad 0 \leq \mu < m \ .
\]
Set
\begin{equation}\label{eq:ab}
  a \eqdef (r-\lambda -1), \quad b \eqdef m - \mu,
\end{equation}
which
gives
\[
  k = am+b,\quad \text{with} \quad 0 < b \leq m \ .
\]
The
target rank $s$ will be chosen in the range
\begin{equation}\label{eq:range_s}
  s \in [n-r+1, n-a]\ .
\end{equation}
Indeed,
the underlying Gabidulin code $\mathcal{C}^\perp$ has minimum distance
$n-r+1$ and $\mathcal{C}^\perp \subseteq \mathcal E$, codewords of
weight less than $n-r+1$ are rather unlikely in $\Dmat^\perp$. On the other
hand, the discussion in Section~\ref{ss:basic_routine} shows that, by
Gaussian elimination, computing codewords of weight $n-a$ can be done
in polynomial time: for $s = n-a$, Algorithm~\ref{algo:low_weight}
succeeds in a single loop. The trade-off we have to find comes from the following discussion:
\medskip 
\begin{itemize}\setlength{\itemsep}{5pt}
\item When $s$ is close to $n-r+1$, codewords of weight $s$ are hard to find but a pair
  of such codewords are likely to be $\Fqm$--collinear when regarded as vectors;
\item When $s$ is close to $n-a$, codewords of weight $s$ are easy to
  find but a pair of them is unlikely to be $\Fqm$--collinear.
\end{itemize}
\medskip 
We will give a complexity formula for any value of $s \in [n-r+1, n-a]$. In practice,
the best complexities turned out to occur for $s = n-a$ or $s = n-a-1$.

\subsection{Description of Step 3}\label{ss:compute_Fqm}
Suppose we are given a pair $\Cv, \Cv' \in \Dmat^\perp$ with the same row space,
we aim to decide whether they come from $\Fqm$--collinear vectors and if they do,
to deduce the left stabilizer of the hidden Gabidulin code $\mathcal{C}^\perp$:
\[
  \A \eqdef \{\Pv \in \Fq^{m\times m} ~|~ \Pv \C^\perp = \C^\perp\}~,
\]
which is a sub-algebra of $\Fq^{m\times m}$ isomorphic to $\Fqm$.

The key idea is simple: if $\Cv, \Cv'$ are $\Fqm$--collinear, then there exist
$\Pv \in \mathbf{GL}_m(\Fq)$ such that
\[
  \Cv' = \Pv \Cv.
\]
Therefore, we compute all the solutions~$\Xv$
of the affine system
\[
  \Cv' = \Xv \Cv~.
\]
Clearly $\Pv$ is a solution of this system. Unfortunately it is not
unique. Indeed, the space of solutions
is nothing but the space of matrices $\Xv$ such that
\(
  (\Xv - \Pv)\Cv = 0.
\)
Equivalently, it is the affine space of matrices of the shape:
\begin{equation}\label{eq:sol_1}
  \{\Pv + \Mv ~|~ \text{RowSpace}(\Mv) \in \ker_L(\Cv)\}~,
\end{equation}
where $\ker_L(\Cv)$ denotes the left kernel of $\Cv$.
Similarly, one can search $\Pv^{-1}$ as the solutions $\Xv$ of~\(
  \Cv = \Xv \Cv'
\)
and get the affine space
\begin{equation}\label{eq:sol_2}
  \{\Pv^{-1} + \Nv ~|~ \text{RowSpace}(\Nv) \in \ker_L(\Cv')\} \ .
\end{equation}

After possibly performing a change of basis (which will consist
in left multiplying the public code by some nonsingular matrix), one can
assume that
\medskip 
\begin{itemize}\setlength{\itemsep}{5pt}
\item $\ker_L (\Cv')$ is spanned by the $n-s$ last elements of the canonical basis.
\end{itemize}
\medskip 
Thus, let us pick nonsingular matrices $\Av, \Bv$ in the solutions spaces
(\ref{eq:sol_1}) and (\ref{eq:sol_2}) respectively. With the
aforementioned change of basis we have
\begin{align*}
  \Pv &= \Av + \Vv \quad \text{for some } \Vv \in
  \Fq^{m\times n}~;\\
  \Pv^{-1} &= \Bv + (\mathbf{0} ~|~ \Wv) \quad \text{for some } \Wv \in
  \Fq^{m\times (n-s)}~.
\end{align*}
Recall at this step that $\Av, \Bv$ have been taken in the solutions
spaces (\ref{eq:sol_1}) and (\ref{eq:sol_2}) respectively. Hence they are
known, while $\Pv, \Vv, \Wv$ are still unknown.
Now note that
\[
   \mathbf{I}_m = \Av \Bv 
  + \Vv\Bv + (\vec 0 ~|~ \Av \Wv) + \Vv (\vec 0 ~|~ \Wv)~.
\]
The sum of the last two terms, namely
$(\vec 0 ~|~ \Av \Wv) + \Vv (\vec 0 ~|~ \Wv)$ has the same row space
as $(\vec 0 ~|~ \Wv)$ and hence has its $s$ leftmost columns equal to
zero. Consequently, we know that the $s$ leftmost columns of $\Vv\Bv$
equal those of $\mathbf{I}_m - \Av \Bv$.  Regarding $\Vv$ as an
unknown, and since~$\Vv$'s row space is contained in $\ker_l \Cv$
which has dimension $n-s$, this yields a linear system with $m(n-s)$
unknowns in $\Fq$ (the entries of $\Vv$) and $ms$ equations (one per
entry in the $s$ leftmost columns). From (\ref{eq:range_s}),
$s \geq n-r+1$ and, in our instantiations, $r = m - \kappa$ is small
(see Section~\ref{sec:param}), we have $s \geq n-s$ and hence the
latter system is over-constrained and its resolution is very likely to
provide~$\Vv$, which yields $\Pv$.

Once $\Pv$ is obtained, it suffices to compute the matrix algebra
$\Fq[\Pv]$ spanned by $\Pv$. This algebra is very likely to be
$\A$. In case we only obtain a subalgebra of $\A$, which would
correspond to a subfield of $\Fqm$ we re-iterate the process and
compose the obtained algebras until we get an algebra of dimension
$m$, which will occur after a constant number of iterations.

In \emph{summary} the previous process permits to deduce the left
stabilizer $\A$ of $\C^\perp$ and hence to get the hidden $\Fqm$--linear
structure. Moreover, if $\Cv, \Cv'$ were not $\Fqm$--collinear, then the above
process will fail. Thus it can also be used to decide whether $\Cv, \Cv'$
are $\Fqm$--collinear.

\begin{lemma}\label{lem:complexity_step3}
  The previous procedure permits to decide whether a given pair
  $\Cv, \Cv'$ of rank $s$ and with the same row space come from
  $\Fqm$--collinear vectors.  If they do, the procedure 
  computes the left stabilizer algebra of the hidden Gabidulin code.
  The complexity of the process is
  \[
    O(m^{\omega +1}n^{\omega - 1}) \quad \text{operations in }\Fqm.
  \]
\end{lemma}

\begin{proof}
  Only the complexity should be checked. The bottleneck of the process
  is the resolution of the system $\Cv' = \Xv \Cv$, which has $m^2$ unknowns
  and $mn$ equations. Hence a complexity in~$O(m^2(mn)^{\omega - 1})$. \iftoggle{llncs}{\qed}{}
\end{proof}

\subsection{Step 4. Finishing the attack}\label{ss:finishing}
Once the $\Fqm$--linear structure is computed, one can complete the
key recovery in many different manners. We briefly explain one
possible approach, but claim that many other ones exist.  Note that
the cost of the routine described below remains negligible compared to
that of obtaining $\Cv, \Cv'$ (as seen in
Section~\ref{ss:compute_Fqm}).

Recall that $\Dmat^\perp = \E \oplus \Rs$, for some random $\Rs$ of
dimension $\ell_s$ and $\E$ being a codimension~$\ell_a$ sub-code of
the matrix Gabidulin code $\C^\perp$.  Suppose we also have access to
the left stabilizer algebra of $\C^\perp$.
Compute:
\[
  \mathcal{A} \Dmat^\perp = \mathcal{A} \mathcal{E} + \mathcal{A} \mathcal{R}_s.
\]
This space can be computed as the span of all the possible products of
elements of a given $\Fq$--basis of $\mathcal{A}$ by an element of a
given $\Fq$--basis of $\Dmat^\perp$. The space
$\mathcal{A} \mathcal{E}$ is the ``$\Fqm$--linear span of
$\mathcal E$'', which is very likely to be $\C^\perp$.  The other term
$\mathcal{A} \mathcal{R}_s$ is a random $\Fqm$--linear code of
dimension~$\leq m\ell_s$. Next, the dual of this code is an
$\Fqm$--linear sub-code of $\C$ of $\Fqm$--codimension at most
$\ell_s$.  Then, the corresponding Gabidulin code $\C$ can be
recovered from the sub-code using Overbeck--like techniques \cite{O05,O05a}.

\subsection{Complexity of the attack}\label{ss:attack_complexity}

\iftoggle{llncs}{
    \propositioncomplexity*}{
\begin{proposition}\label{prop:complexity_structural_attack}
  The complexity of the attack is the following.
  \begin{enumerate}
  \item\label{item:prop_1} When choosing $s = n-a$:
    \[
      \Omega \left(\frac{1}{b} (k^{\omega - 1}mn + k m^\omega +
        m^{\omega+1}n^{\omega - 1})q^{\ell_a + \ell_s - m - b}\right)
      \quad \text{operations in }\Fq \ .
    \]
    \item\label{item:prop_2} When choosing $n-r+1 \leq s<n-a$:
      \[
        \Omega \left(\left( k^{\omega - 1}m^2 +km^\omega + p\cdot m^{\omega+1}n^{\omega - 1}\right) \cdot P \right)
      \quad \text{operations in }\Fq,
    \]
    where
    \begin{align*}
  P & = q^{\ell_s+\ell_a+2-m + (m-s)(n-a-s)-\max(0, b-1-s)}; \\
  p & = q^{m(a+s-n)+b-2} \ .
\end{align*}
  \end{enumerate}
\end{proposition}
}
  \begin{proof}[Sketch of Proof of the case $s=n-a$]
    The calculation of a pair $\Cv, \Cv'$ of rank $s$ with the same
    row space can be done in a single loop of
    Algorithm~\ref{algo:low_weight}. Its cost is
    $O(k^{\omega - 1}mn + km^{\omega})$, according to
    Proposition~\ref{prop:complexity_finding_low_weight}.

    Next, the probability that $\Cv$ is in $\mathcal E$ is
    $q^{- \ell_s}$ since $\mathcal E$ has codimension $\ell_s$ in
    $\Dmat$.

    Let $W$ be the subspace of $\Dmat^\perp$ of matrices whose row
    space in contained in that of $\Cv$ has~$\Fq$--dimension
    $b$. According to Figure~\ref{fig:matrix_with_man_zeroes} the
    computation of $W$ can be regarded as the computation of a
    subspace of matrices whose $a$ leftmost columns are $0$. Therefore
    $W$ is very likely to have codimension $ma$ in $\Dmat^\perp$ and
    hence we very likely have
    \[
      \dim W = b\ .
    \]
    
    The matrix $\Cv'$ was picked uniformly at random in
    $W \setminus \langle \Cv \rangle_{\Fq}$.
    Let us estimate the
    probability that $\Cv'$ is $\Fqm$--collinear to $\Cv$ in
    $W$.
    An element $\Cv'$ which is $\Fqm$--collinear to $\Cv$ without
    being~$\Fq$--collinear to it lies in $\mathcal{C}^\perp$ and the
    probability that $\Cv'$ lies in
    $\Dmat^\perp \setminus \langle \Cv \rangle_{\Fq}$ is
    $q^{- \ell_a}$.  Consider the space $V \subseteq \Fq^{m \times n}$
    of matrices $\Fqm$--collinear to $\Cv$ and $V_0$ a complement
    subspace of $\langle \Cv \rangle_{\Fq}$ in $V$.  Then $V_0$ has $\Fq$--dimension
    $m-1$ and
    \[
      \mathbb{E} \left(\sharp  V_0 \cap \Dmat^\perp \right) = q^{m-1-\ell_a} \ . 
    \]
    From Markov inequality, for any $1 \leq \delta \leq b-1$
    \[
      \prob \left( \dim V \cap \Dmat^\perp \geq {\delta} +1 \right) =
      \prob \left( \dim V_0 \cap \Dmat^\perp \geq {\delta} \right)
      \leq q^{m-1-\ell_a-\delta}.
    \]
    Denote by $\mathscr{E}$ the event ``$\Cv'$ is $\Fqm$--collinear to $\Cv$''.
    Next, since $\Cv'$ is drawn uniformly at random in $W \setminus \langle \Cv \rangle_{\Fq}$,
    we have
    \begin{align*}
      \prob \big(\mathscr{E} \big)  
                                   & = \sum_{\delta=1}^{b-1}  \prob \big(\mathscr E  ~|~ \dim V_0 \cap \Dmat^\perp=\delta\big) \prob(\dim V_0 \cap \Dmat^\perp=\delta)\\
             & \leq \sum_{\delta = 1}^{b-1} q^{\delta+1-b} q^{m-1-\ell_a - \delta} =  (b-1) q^{m-\ell_a+b}
    \end{align*}
    In summary, the probability to find $\Cv, \Cv'$ such that $\Cv \in \mathcal \Cv^\perp$ and $\Cv'$ is $\Fqm$--collinear to $\Cv$
    is upper bounded by
    \[
      (b-1)q^{m - \ell_a + b - \ell_s} \ .
    \]
    Hence the average number of times we iterate the process is bounded from below by the inverse of the above probability.
    Moreover, one iteration consists in one loop of Algorithm~\ref{algo:low_weight} followed by the routine described in Section~\ref{ss:compute_Fqm}
    (Step 3).
    Thus, according to Proposition~\ref{prop:complexity_finding_low_weight} and Lemma~\ref{lem:complexity_step3}, the
    complexity of one iteration is $O(k^{\omega - 1}mn + k m^\omega +
        m^{\omega+1}n^{\omega - 1})$.
    \iftoggle{llncs}{\qed}{}
  \end{proof}

  \begin{proof}[Sketch of Proof of the case $s< n-a$]
    We run Algorithm~\ref{algo:low_weight} to get $\Cv$ of rank $s$. From Proposition~\ref{prop:complexity_finding_low_weight}
    the probability to get such
    a $\Cv$ is \[\approx q^{s(m+n-a-s)+ m(a-n)+\max(0, b-1-s)}~.\] Moreover, the probability that it also lies in $\C^\perp$ is $q^{-\ell_s}$.
    Thus, the probability to get a valid $\Cv$ in one loop of Algorithm~\ref{algo:low_weight} is
    \begin{equation}\label{eq:p1}
      \approx q^{s(m+n-a-s)+ m(a-n)+\max(0, b-1-s)-\ell_s}.
    \end{equation}

    Now, here the most likely situation is that, when $\Cv$ is found, the subspace $W$ of $\Dmat^\perp$ of matrices
    whose row space is contained in that of $\Cv$ will have dimension $1$ and be spanned by~$\Cv$. Indeed,~$\Cv$
    is a solution of an over-constrained system. Now we are getting into the unlikely situation where~$W$ has dimension $>1$.

    The overall space $W' \subseteq \Fq^{m\times n}$ of matrices whose
    row space is contained in that of $\Cv$ has dimension $ms$ and
    \[W = \Dmat^\perp \cap W'~.\]
    Choose $W'_0$ a complement
    subspace of $\langle \Cv \rangle_{\Fq}$ in~$W'$. 
    Regarding $\Dmat$ as a random space, the probability that an
    element of~$W'_0$ lies in $\Dmat^\perp$ is
    $q^{\dim \Dmat^\perp - mn} = q^{m(a-n)+b}$. Therefore,
    \begin{align*}
      \mathbb{E} (\sharp W'_0 \cap \Dmat^\perp) &= \sum_{\Dv \in W'_0 }\prob (\Dv \in \Dmat^\perp)
                              =  q^{ms-1}q^{m(a-n)+b} =  q^{m(a+s-n)+b-1}~.
    \end{align*}
    By Markov inequality,
    \begin{equation}\label{eq:p2}
      \prob (\dim  W'_0 \cap \Dmat^\perp \geq 1)  = \prob (\sharp  W'_0 \cap \Dmat^\perp \geq q) 
       \leq q^{m(a+s-n)+b-2}.
    \end{equation}

    This yields the probability that, when a $\Cv$ of rank $s$ is
    found in $\Dmat$, another $\Cv'$ non $\Fq$--collinear to $\Cv$ and
    with the same row space also lies in $\Dmat^\perp$. Still, such a
    $\Cv'$ might not be~$\Fqm$--collinear to $\Cv$.
    
    Actually, Using a similar reasoning as in the proof of the case
    $s = n-a$, we prove that the probability that a matrix $\Cv'$ that
    is $\Fqm$--collinear to $\Cv$ lies in $\Dmat$ is bounded from
    above by
    \begin{equation}\label{eq:p3}
      q^{m-2-\ell_a}.
    \end{equation}

    In summary, putting \eqref{eq:p1} and \eqref{eq:p3} we will be
    able to find a valid pair $\Cv, \Cv'$ in an average number of
    \begin{align*}
      \approx P & \eqdef q^{\ell_s - s(m+n-a-s)-m(a-n)-\max(0, b-1-s) - m + \ell_a+2}\\
                & = q^{\ell_s+\ell_a+2-m + (m-s)(n-a-s)-\max(0, b-1-s)} \quad \text{iterations of Algorithm~\ref{algo:low_weight}.}
   \end{align*}
   For each iteration, the most likely situation is that either there
   is no $\Cv$ or there is a $\Cv$ and no candidate for $\Cv'$. More
   precisely, when a $\Cv$ is found, the most likely situation is that the subspace of $\Dmat^\perp$ of
   matrices whose row space is contained in that of $\Cv$ has dimension 1 and is spanned by~$\Cv$ itself.
   In such a situation there is no need to run the routine described in Section~\ref{ss:compute_Fqm} (Step 3).
   Still, according to \eqref{eq:p2}, there is a proportion
   \[
     p \eqdef q^{m(a+s-1)+b-2}
   \]
   of cases where a candidate for $\Cv'$ exists without being necessarily $\Fqm$--collinear to $\Cv$. In such
   a case, we have to run the routine described in Section~\ref{ss:compute_Fqm} (Step 3.)
   This yields an additional cost of $O(m^{\omega +1}n^{\omega -1})$
   with a proportion $p$ in the complexity.
   \iftoggle{llncs}{\qed}{}
  \end{proof}

       }{}

\end{document}